\documentclass[11pt]{article}

\usepackage{amsmath}
\usepackage{amssymb}
\usepackage{amsthm}
\usepackage{mathabx}
\usepackage{graphicx} %
\usepackage{fullpage}
\usepackage{hyperref}
\usepackage{thmtools}
\usepackage[nameinlink,capitalise]{cleveref}
\usepackage{xcolor}
\usepackage{multicol}
\usepackage{url}
\usepackage[utf8]{inputenc}
\usepackage{enumitem}
\usepackage{tikz}

\newcommand{\zo}{\{0,1\}}

\newcommand{\im}{\operatorname{im}}
\newcommand{\CSP}{\operatorname{CSP}}

\newcommand{\Pol}{\operatorname{Pol}}

\newcommand{\rev}{\operatorname{rev}}

\newcommand{\cF}{\mathcal F}

\newcommand{\Q}{\mathbb{Q}}

\newcommand{\F}{\mathbb{F}}

\newcommand{\cC}{\mathcal C}
\newcommand{\cD}{\mathcal D}

\newcommand{\NRD}{\operatorname{NRD}}

\newcommand{\ncl}{\operatorname{ncl}}

\newcommand{\fG}{\mathfrak G}
\newcommand{\fM}{\mathfrak M}
\newcommand{\fN}{\mathfrak N}
\newcommand{\fU}{\mathfrak U}
\newcommand{\fV}{\mathfrak V}
\newcommand{\fF}{\mathfrak F}
\newcommand{\fR}{\mathfrak R}
\newcommand{\fL}{\mathfrak L}
\newcommand{\fQ}{\mathfrak Q}

\newcommand{\fW}{\mathfrak{W}}
\newcommand{\fe}{\mathfrak{e}}
\newcommand{\fg}{\mathfrak{g}}
\newcommand{\fw}{\mathfrak{w}}
\newcommand{\fu}{\mathfrak{u}}
\newcommand{\fm}{\mathfrak{m}}
\newcommand{\fn}{\mathfrak{n}}
\newcommand{\fv}{\mathfrak{v}}
\newcommand{\fq}{\mathfrak{q}}
\newcommand{\G}{\fG}

\newcommand{\T}{\mathbb{T}}

\newcommand{\cA}{\mathcal{A}}
\newcommand{\cB}{\mathcal{B}}
\newcommand{\cE}{\mathcal{E}}
\newcommand{\cV}{\mathcal{V}}
\newcommand{\cK}{\mathcal{K}}

\newcommand{\vol}{\operatorname{vol}}
\newcommand{\R}{\mathbb{R}}
\newcommand{\Z}{\mathbb{Z}}

\newcommand{\cT}{\mathcal{T}}

\newtheorem{theorem}{Theorem}
\numberwithin{theorem}{section}
\newtheorem{lemma}[theorem]{Lemma}

\newtheorem{proposition}[theorem]{Proposition}
\newtheorem{corollary}[theorem]{Corollary}

\newtheorem{conjecture}[theorem]{Conjecture}

\theoremstyle{definition}
\newtheorem{definition}[theorem]{Definition}
\newtheorem{remark}[theorem]{Remark}
\newtheorem{example}[theorem]{Example}
\newtheorem{fact}[theorem]{Fact}

\allowdisplaybreaks

\title{The Reach of Abelian Covers in Hypergraphs}
\author{Joshua Brakensiek\thanks{University of California, Berkeley. Supported in part by a Simons Investigator award of Venkatesan Guruswami, and NSF awards CCF-2211972 and DMS-2503280. Contact: \href{mailto:josh.brakensiek@berkeley.edu}{josh.brakensiek@berkeley.edu}} \and Venkatesan Guruswami\thanks{Simons Institute for the Theory of Computing and the University of California, Berkeley. Supported in part by a Simons Investigator award and NSF award CCF-2211972. Contact: \href{mailto:venkatg@berkeley.edu}{venkatg@berkeley.edu}} \and Aaron Putterman\thanks{Harvard University. Supported in part by a Jane Street Graduate Research Fellowship, the Simons Investigator awards of Madhu Sudan and Salil Vadhan, and AFOSR award FA9550-25-1-0112. Contact: \href{mailto:aputterman@g.harvard.edu}{aputterman@g.harvard.edu}}}
\date{}

\begin{document}

\maketitle

\begin{abstract}
In combinatorics and theoretical computer science, covers in hypergraphs are frequently studied to capture various forms of dependence between hyperedges. For example, even covers--which check if each vertex appears in an even number of hyperedges--have found much success recently in the study of locally decodable codes.
Inspired by a recently-emerging line of work on the non-redundancy of constraint satisfaction problems (CSPs), we introduce and study two novel families of covers of hypergraphs which are stricter than even covers: \emph{Abelian} covers and \emph{Catalan} covers. Abelian covers are similar to even covers, except that arithmetic is now done over the integers rather than modulo 2, allowing us to capture dependences over arbitrary Abelian groups. Catalan covers capture the behavior of non-Abelian groups by only allowing local cancellations in a sequence of hyperedges.

\smallskip We prove three main results about Abelian and Catalan covers. First, using tools from lattice theory, we show that any $r$-uniform hypergraph with $n$ vertices and $n \log(r)$ hyperedges has an Abelian cover. Second, using tools from algebraic topology, we show that in any $3$-uniform hypergraph, Abelian covers and Catalan covers are equivalent; thereby showing that Catalan covers emerge after $O(n)$ hyperedges in $3$-uniform hypergraphs. Finally, using the theory of nilpotent groups, we show that there exists a $4$-uniform hypergraph which has an Abelian cover but not a Catalan cover. Collectively, these results exactly characterize the reach that Abelian covers have in deducing dependences in hypergraphs. 
As our primary application, we show that any arity-$3$ CSP with an infinite-domain Mal'tsev extension has linear non-redundancy, answering an open question of Brakensiek et. al. (to appear, FOCS 2026). 
This implies near optimal streaming, sparsification, and kernelization algorithms for this family of CSPs. Previously, such a result was only known for the much simpler case of arity-$2$ CSPs. We see this result as further progress toward the goal of classifying exactly which CSPs have linear non-redundancy.
\end{abstract}

\pagenumbering{gobble}

\pagebreak

\tableofcontents 

\pagebreak

\pagenumbering{arabic}

\section{Introduction}

We introduce and study new classes of hypergraph cover problems, motivated by recent effort towards understanding the \emph{non-redundancy} of various constraint satisfaction problems (CSPs) \cite{chen2020BestCase, bessiere2020Chain, lagerkvist2020Sparsification, carbonnel2022Redundancy, KPS24, khanna2025efficient, brakensiek2025redundancy, brakensiek2025Richness, brakensiek2026classification, EGL26, SV26b, BGJLW26, sharma2026characterizing}.

\subsection{Covers in Hypergraphs}

Given a hypergraph $H = (V, E)$, a cover of the hypergraph captures basic ``dependence'' information among the hyperedges, and consequently has become a fundamental object of study both in combinatorics and theoretical computer science more broadly. As an example, the most basic and frequently studied cover of a hypergraph is the so-called \emph{even cover}. More formally, given an $r$-uniform hypergraph $H = (V, E)$, an even cover $C \subseteq E$ is a collection of hyperedges which touches every vertex an even number of times. A folklore result states that every hypergraph with more than $|V| = n$ hyperedges \emph{must} have an even cover. This is not hard to see either; indeed, we must only view each hyperedge $e \in E$ as its indicator vector $\mathbf{1}_e \in \F_2^n$. Given any $n+1$ such vectors, there must then be some linear dependence among them, i.e., some set of vectors such that $\sum_{e \in C} \mathbf{1}_e = 0$, which is exactly an even cover.

Finding even covers (particularly short ones) has since become an important subroutine for a variety of problems. For instance, such covers have led to recent progress on understanding the limits of \emph{locally decodable and correctable codes} \cite{alrabiah2023near, hsieh2025small, alrabiah2026near}, and are similarly important in the study of CSP refutations \cite{feige2006witnesses,guruswami2022algorithms,hsieh2023simple,hsieh2025small}. 
In this work, we introduce several new types of hypergraph covers, motivated by concrete applications to the study of non-redundancy in CSPs. The first such cover we introduce is the notion of an \emph{Abelian cover}. In this setting, we are again given a hypergraph $H = (V, E)$ and we say that a collection of hyperedges $C$ is an Abelian cover if there exists a hyperedge $e^* \in C$ along with integers $\alpha_e: e \in C - \{e^*\}$ such that
\[
\sum_{e \in C - \{e^* \}} \alpha_e \cdot \mathbf{1}_e =\mathbf{1}_{e^*},
\]
where the arithmetic is now done over \emph{integers}.
Intuitively, such a cover is a collection of hyperedges where an \emph{integer linear combination} of all but one of the hyperedges exactly yields the final hyperedge. Note that because the ``distinguished'' hyperedge $e^*$ only appears once, this cover is non-trivial for every modulus. %
In this way, Abelian covers are a much stricter 
notion than that of an even cover; simply taking the $\alpha_e$'s modulo $2$ would instead yield an even cover.
By the same reasoning, a single Abelian cover \emph{simultaneously} yields any ``$\text{mod } q$ cover'' for any choice of $q$ (and in fact, even yields covers for any Abelian group). 

\begin{figure}
    \centering
\begin{tikzpicture}[
    vertex/.style={
        circle, draw=black, thick, fill=white,
        minimum size=8mm, inner sep=0pt, font=\small
    },
    partbox/.style={
        draw=gray!45, fill=gray!8,
        rounded corners=5pt, line width=.8pt
    },
    hedge/.style={
        line width=1.6pt,
        rounded corners=5pt,
        opacity=.85
    }
]

\coordinate (v1) at (0,  1);
\coordinate (v2) at (0, -1);

\coordinate (v3) at (3,  1);
\coordinate (v4) at (3, -1);

\coordinate (v5) at (6,  1);
\coordinate (v6) at (6, -1);

\draw[partbox] (-0.6,-1.55) rectangle (0.6,1.55);
\draw[partbox] ( 2.4,-1.55) rectangle (3.6,1.55);
\draw[partbox] ( 5.4,-1.55) rectangle (6.6,1.55);

\draw[hedge, red!70!black] 
    (v1) -- (v3) -- (v6);

\draw[hedge, blue!70!black] 
    (v1) -- (v4) -- (v5);

\draw[hedge, green!50!black] 
    (v2) -- (v3) -- (v5);

\draw[hedge, orange!80!black] 
    (v2) -- (v4) -- (v6);

\node[font=\scriptsize, text=red!70!black]    at (1.2,  1.45) {$\{1,3,6\}$};
\node[font=\scriptsize, text=blue!70!black]   at (2.2,  -0.05) {$\{1,4,5\}$};
\node[font=\scriptsize, text=green!50!black]  at (4.2, 1.45) {$\{2,3,5\}$};
\node[font=\scriptsize, text=orange!80!black] at (4.2, -1.45) {$\{2,4,6\}$};

\node[vertex] at (v1) {$1$};
\node[vertex] at (v2) {$2$};

\node[vertex] at (v3) {$3$};
\node[vertex] at (v4) {$4$};

\node[vertex] at (v5) {$5$};
\node[vertex] at (v6) {$6$};

\end{tikzpicture}     \caption{An arity $3$ hypergraph which has an even cover, but no Abelian cover.}
    \label{fig:EvenButNotAbelian}
\end{figure}

This notion of an Abelian cover is \emph{strictly stronger} than that of an even cover. This is immediate by looking, for instance, at a triangle of edges in an ordinary graph: such a collection of edges will constitute an even cover, but \emph{does not} constitute a cover modulo $3$ (and thus cannot constitute an Abelian cover either). Unfortunately, as we discuss more later, the triangle is not the most compelling separation between these two types of covers, as the most interesting setting of study for analyzing when covers emerge is in $r$-partite hypergraphs (or bipartite graphs in the setting $r = 2$), where such odd cycles are precluded. Nevertheless, even in these $r$-partite hypergraphs, even and Abelian covers can be simply separated.
As an example, we can consider the hypergraph in \cref{fig:EvenButNotAbelian}. 
It is not hard to see that this collection of hyperedges constitutes an even cover, as every vertex appears in two hyperedges. However, one can check that this hypergraph \emph{does not} admit any $\text{mod }3$ cover (and thus also does not admit any Abelian cover).

As mentioned above, we do know that even covers necessarily emerge in hypergraphs whenever the number of hyperedges exceeds $n$. It is natural to wonder whether the same is true even for \emph{Abelian covers}. This leads to the first central question that we study in this work:

\begin{center}
    \emph{How many hyperedges must a hypergraph have before an Abelian cover emerges?}
\end{center}

As our first result, we show that Abelian covers \emph{do emerge} after only a linear number of hyperedges:

\begin{theorem}\label{thm:AbelianCoverIntro}
    Let $H = (V, E)$ be an $r$-uniform hypergraph for any $r$. If $|E| \geq |V|\cdot (1 + \log_2(r)/2)$, then $H$ admits an Abelian cover. 
\end{theorem}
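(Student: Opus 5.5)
The plan is a potential-function argument on the integer lattice generated by the edge indicator vectors. Order the edges arbitrarily as $e_1,\dots,e_m$ and set $L_i = \operatorname{span}_{\Z}(\mathbf 1_{e_1},\dots,\mathbf 1_{e_i}) \subseteq \Z^n$, with $L_0=\{0\}$. If $H$ has no Abelian cover, then no $\mathbf 1_{e}$ is an integer combination of the other edges. In particular $\mathbf 1_{e_{i+1}}\notin L_i$ for every $i$, so each step strictly enlarges the lattice. I will track the covolume $\Phi_i=\det(L_i)$, meaning the volume of a fundamental domain of $L_i$ inside its real span (equivalently $\sqrt{\det G}$ for the Gram matrix $G$ of any basis), with $\Phi_0=1$.

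Three facts drive the argument.
\begin{enumerate}
\item \emph{Lower bound.} Every nonzero integer lattice has $\Phi_i\ge 1$, since its Gram determinant is a positive integer.
\item \emph{Rank-increasing step.} If $\mathbf 1_{e_{i+1}}\notin\operatorname{span}_{\R}L_i$, then $\Phi_{i+1}=\Phi_i\cdot \operatorname{dist}(\mathbf 1_{e_{i+1}},\operatorname{span}_\R L_i)\le \Phi_i\|\mathbf 1_{e_{i+1}}\|=\sqrt r\,\Phi_i$.
\item \emph{Rank-preserving step.} If $\mathbf 1_{e_{i+1}}\in \operatorname{span}_\R L_i\setminus L_i$, then $L_i$ is a proper sublattice of $L_{i+1}$ of the same rank. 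Hence $[L_{i+1}:L_i]\ge 2$ and $\Phi_{i+1}=\Phi_i/[L_{i+1}:L_i]\le \Phi_i/2$.
\end{enumerate}
Since the rank can increase at most $n$ times, after all $m$ edges we get
\[
1\le \Phi_m\le r^{n/2}\,2^{-(m-n)},
\]
which gives $m\le n+\tfrac n2\log_2 r=n(1+\log_2(r)/2)$. So any hypergraph with strictly more edges has an Abelian cover.

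The remaining and most delicate point is the boundary case $|E|=n(1+\log_2(r)/2)$ allowed by the ``$\ge$'' in the statement. Equality would force every inequality above to be tight: exactly $n$ rank increases, each new vector orthogonal to the current span (so the distance equals $\sqrt r$ exactly), every same-rank step of index exactly $2$, and final covolume exactly $1$, i.e.\ $L_m=\Z^n$.

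I would first try to exclude this tight configuration for $r\ge 2$. The idea is that $n$ pairwise-orthogonal $0/1$ vectors of weight $r$ cannot occupy all $n$ coordinates while still generating a lattice whose covolume can be brought down to $1$ only by the index-$2$ steps. If that argument is awkward, I would instead check the precise conventions (distinct hyperedges, $r\ge 2$, possibly a ceiling) and, if necessary, state the result with a strict inequality. Note that for $r=1$ the $n$ distinct singletons show the equality case genuinely fails. Apart from this boundary issue, the main technical content is the two covolume-update facts, which are standard consequences of the Gram–Schmidt formula for lattice determinants.
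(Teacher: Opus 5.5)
Your argument is correct and is essentially the paper's. \cref{lemma:Z-NRD} chooses a $\mathbb{Q}$-basis $\hat H$ of edges and bounds the index of $\langle \hat H\rangle_{\mathbb{Z}}$ in $\langle H\rangle_{\mathbb{Q}}\cap\mathbb{Z}^n$ by $(\sqrt r)^{d}$ using a Hadamard-type bound. It then lets each remaining edge at least double the lattice. This is your covolume bookkeeping with the rank-increasing steps done first, and the paper's introductory sketch is literally your potential argument.

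That lemma likewise only proves $|E|\le n(1+\log_2(r)/2)$, so the paper also leaves the equality case of the stated ``$\ge$'' unaddressed, and your $r=1$ counterexample is right. For $r\ge 2$, your own tightness analysis closes the gap at once. Fewer than $n$ rank increases already gives a strict bound. Exactly $n$ tight rank increases would require $n$ pairwise orthogonal $0/1$ vectors of weight $r$, whose supports must be disjoint and so need $nr\le n$ coordinates (i.e.\ Hadamard's inequality is strict). No appeal to the index-$2$ steps is needed.
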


As we discuss further below in the introduction, this theorem already has implications in bounding the ``non-redundancy'' of CSPs arising from Abelian groups. 

From a combinatorial perspective, there is still more that one can ask for than an Abelian cover. Indeed, rather than viewing a cover as a collection of hyperedges, one can view a cover as a \emph{sequence} of hyperedges which generates some other hyperedge. Indeed, rather than measuring the contribution of each hyperedge using Abelian groups, one can instead define the aggregate of this sequence to be based on \emph{local cancellations} between hyperedges in the sequence. Can one still build covers when all the cancellations are forced to be local in the sequence of hyperedges?

While this question may seem unnatural, it turns out that this intuition is \emph{exactly} what appears in understanding the non-redundancy of certain CSPs which emerge from \emph{non-Abelian groups}---see \cite{brakensiek2025Richness} for a detailed discussion. We formalize this in a problem that we call the \emph{Catalan cover} problem. 

\begin{figure}[ht]
    \centering
    \scalebox{0.8}{%
\begin{tikzpicture}[
    vertex/.style={
        circle, draw=black, thick, fill=white,
        minimum size=8mm, inner sep=0pt, font=\small
    },
    partbox/.style={
        draw=gray!45, fill=gray!8,
        rounded corners=5pt, line width=.8pt
    },
    hedge/.style={
        line width=1.8pt,
        opacity=.85,
        rounded corners=5pt
    }
]

\coordinate (v1) at (0,  1.8);
\coordinate (v2) at (0,  0);
\coordinate (v3) at (0, -1.8);

\coordinate (v4) at (3,  1.8);
\coordinate (v5) at (3,  0);
\coordinate (v6) at (3, -1.8);

\coordinate (v7) at (6,  1.8);
\coordinate (v8) at (6,  0);
\coordinate (v9) at (6, -1.8);

\draw[partbox] (-0.6,-2.25) rectangle (0.6, 2.25);
\draw[partbox] ( 2.4,-2.25) rectangle (3.6, 2.25);
\draw[partbox] ( 5.4,-2.25) rectangle (6.6, 2.25);

\draw[hedge, red!70!black] 
    (v1) -- (v6) -- (v8);

\draw[hedge, blue!70!black] 
    (v2) -- (v6) -- (v7);

\draw[hedge, green!50!black] 
    (v3) -- (v5) -- (v7);

\draw[hedge, orange!80!black] 
    (v3) -- (v4) -- (v8);

\draw[hedge, purple!75!black] 
    (v2) -- (v4) -- (v9);

\draw[hedge, cyan!70!black] 
    (v1) -- (v5) -- (v9);

\node[font=\scriptsize, text=red!70!black]    at (0.2,1.15) {$168$};
\node[font=\scriptsize, text=blue!70!black]   at (0.2,-0.50) {$267$};
\node[font=\scriptsize, text=green!50!black]  at (0.8, -1.65) {$357$};
\node[font=\scriptsize, text=orange!80!black] at (0.2,-1.05) {$348$};
\node[font=\scriptsize, text=purple!75!black] at (0.2, 0.50) {$249$};
\node[font=\scriptsize, text=cyan!70!black]   at (0.8,1.65) {$159$};

\node[vertex] at (v1) {$1$};
\node[vertex] at (v2) {$2$};
\node[vertex] at (v3) {$3$};

\node[vertex] at (v4) {$4$};
\node[vertex] at (v5) {$5$};
\node[vertex] at (v6) {$6$};

\node[vertex] at (v7) {$7$};
\node[vertex] at (v8) {$8$};
\node[vertex] at (v9) {$9$};

\end{tikzpicture} %
}
    \caption{A hypergraph of arity $3$ on $9$ vertices which contains hyperedges $159, 168, 249, 267, 348$ and $357$.}
    \label{fig:Catalan}
\end{figure}

To properly describe this problem we first proceed by example, visiting the concrete hypergraph from \cref{fig:Catalan}. We claim that this hypergraph does contain a Catalan cover; namely the collection of hyperedges $C = \{(168), (267), (357), (348), (249) \}$, along with the ``distinguished hyperedge'' $e^* = (159)$.

\begin{figure}
    \centering
    \scalebox{0.8}{%

\begin{tikzpicture}[
    digit/.style={
        circle,
        draw=black,
        thick,
        fill=white,
        minimum size=8mm,
        inner sep=0pt,
        font=\small
    },
    missingdigit/.style={
        circle,
        draw=none,
        fill=none,
        minimum size=8mm,
        inner sep=0pt
    },
    column/.style={
        draw=gray!55,
        fill=gray!8,
        rounded corners=5pt,
        line width=.8pt
    },
    rowlabel/.style={
        font=\scriptsize
    },
    separator/.style={
        draw=gray!60,
        line width=.7pt
    },
    arrow/.style={
        ->,
        thick,
        line width=1.1pt
    }
]

\begin{scope}[shift={(0,0)}]

\draw[column] (-0.55,-0.55) rectangle (0.55,5.35);
\draw[column] ( 1.45,-0.55) rectangle (2.55,5.35);
\draw[column] ( 3.45,-0.55) rectangle (4.55,5.35);

\draw[separator] (-0.55,0.6) -- (4.55,0.6);
\draw[separator] (-0.55,1.8) -- (4.55,1.8);
\draw[separator] (-0.55,3.0) -- (4.55,3.0);
\draw[separator] (-0.55,4.2) -- (4.55,4.2);

\node[digit, text=red!70!black] at (0,0) {$1$};
\node[digit, text=red!70!black] at (2,0) {$6$};
\node[digit, text=red!70!black] at (4,0) {$8$};
\node[rowlabel, text=red!70!black] at (5.15,0) {$168$};

\node[digit, text=blue!70!black] at (0,1.2) {$2$};
\node[digit, text=blue!70!black] at (2,1.2) {$6$};
\node[digit, text=blue!70!black] at (4,1.2) {$7$};
\node[rowlabel, text=blue!70!black] at (5.15,1.2) {$267$};

\node[digit, text=green!50!black] at (0,2.4) {$3$};
\node[digit, text=green!50!black] at (2,2.4) {$5$};
\node[digit, text=green!50!black] at (4,2.4) {$7$};
\node[rowlabel, text=green!50!black] at (5.15,2.4) {$357$};

\node[digit, text=orange!80!black] at (0,3.6) {$3$};
\node[digit, text=orange!80!black] at (2,3.6) {$4$};
\node[digit, text=orange!80!black] at (4,3.6) {$8$};
\node[rowlabel, text=orange!80!black] at (5.15,3.6) {$348$};

\node[digit, text=purple!75!black] at (0,4.8) {$2$};
\node[digit, text=purple!75!black] at (2,4.8) {$4$};
\node[digit, text=purple!75!black] at (4,4.8) {$9$};
\node[rowlabel, text=purple!75!black] at (5.15,4.8) {$249$};

\end{scope}

\draw[arrow] (6.0,2.4) -- (7.2,2.4);

\begin{scope}[shift={(8.2,0)}]

\draw[column] (-0.55,-0.55) rectangle (0.55,5.35);
\draw[column] ( 1.45,-0.55) rectangle (2.55,5.35);
\draw[column] ( 3.45,-0.55) rectangle (4.55,5.35);

\draw[separator] (-0.55,0.6) -- (4.55,0.6);
\draw[separator] (-0.55,1.8) -- (4.55,1.8);
\draw[separator] (-0.55,3.0) -- (4.55,3.0);
\draw[separator] (-0.55,4.2) -- (4.55,4.2);

\node[digit, text=red!70!black] at (0,0) {$1$};
\node[digit, text=green!50!black] at (2,0) {$5$};
\node[digit, text=red!70!black] at (4,0) {$8$};
\node[digit, text=blue!70!black] at (0,1.2) {$2$};
\node[digit, text=orange!80!black] at (2,1.2) {$4$};
\node[digit, text=blue!70!black] at (4,1.2) {$7$};
\node[digit, text=green!50!black] at (0,2.4) {$3$};
\node[digit, text=purple!75!black] at (2,2.4) {$4$};
\node[digit, text=green!50!black] at (4,2.4) {$7$};
\node[digit, text=orange!80!black] at (0,3.6) {$3$};
\node[digit, text=orange!80!black] at (4,3.6) {$8$};
\node[digit, text=purple!75!black] at (0,4.8) {$2$};
\node[digit, text=purple!75!black] at (4,4.8) {$9$};

\end{scope}

\end{tikzpicture}
}
    \caption{A stack of hyperedges which creates a Catalan cover.}
    \label{fig:columns}
\end{figure}

We visualize this collection of hyperedges in \cref{fig:columns}, seen as a ``stack'' where the hyperedges (written as a list of vertex labels) are placed horizontally, one on top of the other (thereby yielding $r$ columns, each of height equivalent to the number of hyperedges in the stack). Importantly, in this model, the only cancellation we allow is that whenever two vertically adjacent nodes have the same vertex label, they yield a cancellation. For instance, in the bottom two entries of the middle column, there are two $6$'s that are vertically adjacent, and thus can be canceled out. Performing this cancellation then simplifies the stack, yielding the right-hand side of \cref{fig:columns}. 
In fact, as one can see, one can continue doing cancellations in this manner until the entire stack reduces to $3$ columns of height $1$, with remaining entries $(159)$! Indeed, one can simply use the following set of operations.
\[
\begin{bmatrix}
    2 & 4& 9 \\
    3 & 4& 8 \\
    3 & 5 & 7 \\
    2 & 6 & 7 \\
    1 & 6 & 8 
\end{bmatrix} \rightarrow 
\begin{bmatrix}
    2 & & 9 \\
    3 & & 8 \\
    3 & 4 & 7 \\
    2 & 4 & 7 \\
    1 & 5 & 8 
\end{bmatrix} \rightarrow 
\begin{bmatrix}
    2 & & 9 \\
    3 & & 8 \\
    3 &  & 7 \\
    2 &  & 7 \\
    1 & 5 & 8 
\end{bmatrix} \rightarrow
\begin{bmatrix}
     & & 9 \\
     & & 8 \\
    2 &  & 7 \\
    2 &  & 7 \\
    1 & 5 & 8 
\end{bmatrix} \rightarrow
\begin{bmatrix}
     & & 9 \\
     & & 8 \\
     &  & 7 \\
     &  & 7 \\
    1 & 5 & 8 
\end{bmatrix} \rightarrow 
\begin{bmatrix}
     & &  \\
     & &  \\
     &  & 9 \\
     &  & 8 \\
    1 & 5 & 8 
\end{bmatrix} \rightarrow
\begin{bmatrix}
     & &  \\
     & &  \\
     &  &  \\
     &  &  \\
    1 & 5 &9 
\end{bmatrix}
\]

This is exactly the sense in which the collection of hyperedges $C$ constitutes a Catalan cover with the ``distinguished hyperedge'' $e^* = (159)$, as by merely using local cancellations, one can generate the remaining hyperedge $(159)$. This same procedure is well-defined on arbitrary hypergraphs of arity $r$:\footnote{Note that we assume here that the hypergraph is $r$-partite, thereby yielding a canonical ordering of the vertices within each hyperedge.} given a collection of hyperedges $C$ (where hyperedges may appear multiple times!\footnote{For example, consider the hypergraph on six vertices with edges $\{135,136,145,235,246\}.$}) and a distinguished hyperedge $e^* \notin C$, one creates a ``stack'' with $r$ columns, and stacks the hyperedges in these columns. One then performs the allowed adjacent cancellations, with the goal of generating the remaining hyperedge $e^*$.

It turns out that these Catalan covers are provably stricter 
than Abelian covers, in the sense that any collection of hyperedges which constitutes a Catalan cover \emph{also} constitutes an Abelian cover.\footnote{This follows from assigning $+1$ weight to any odd-indexed hyperedge in the stack, and $-1$ to any even-indexed hyperedge.}
Given our newly strengthened understanding of the thresholds at which Abelian covers emerge, this motivates the second key question from our work:

\begin{center}
    \emph{How many hyperedges must a hypergraph have before a Catalan cover emerges?}
\end{center}

Although we do not answer this question for all hypergraph arities, we do completely settle the question for hypergraphs of arity $\leq 3$.

\begin{theorem}\label{thm:CatalanArity3}
    Let $H = (V, E)$ be an $r$-partite $r$-uniform hypergraph for $r \leq 3$. If $|E| \geq C \cdot |V|$ for a sufficiently large constant $C > 0$, then $H$ admits a Catalan cover. In fact, \emph{any Abelian cover} is a Catalan cover.\footnote{When we say that an Abelian cover is a Catalan cover, we mean that for any set of hyperedges $C$ along with distinguished hyperedge $e^* \in C$ that constitutes an Abelian cover, the same choice of $C$ admits a distinguished hyperedge in the Catalan sense (although the actual implementation of the ``stack'' of hyperedges which witnesses the cancellations may be different).}
\end{theorem}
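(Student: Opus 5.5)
The first claim (that $|E|\ge C|V|$ forces a Catalan cover) follows once we know that every Abelian cover is a Catalan cover. By \cref{thm:AbelianCoverIntro}, any $r$-uniform hypergraph with $|E| \geq |V|(1+\log_2(r)/2)$ already admits an Abelian cover, so a constant $C$ suffices. The substance is therefore the equivalence, and the plan is to recast both notions as statements about groups attached to a $2$-dimensional complex.

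\emph{Setup.} Let the parts be $V_1,\dots,V_r$, and let $F_i$ be the free group on $V_i$. To a hyperedge $e=(v_1,\dots,v_r)$ associate the element $g_e=(v_1,\dots,v_r)\in F_1\times\cdots\times F_r$. One checks that the stack-cancellation rule is exactly this: a sequence of hyperedges reduces to $e^*$ iff an alternating product $g_{e_1}g_{e_2}^{-1}g_{e_3}\cdots$ equals $g_{e^*}$ in $\prod_i F_i$. Here each column is reduced as a word in $F_i$, where adjacent equal letters cancel because they carry opposite signs. Allowing repeated hyperedges and general words, one might try the criterion ``$e^*$ is Catalan-generated by $C$ iff $g_{e^*}$ lies in the subgroup generated by $\{g_e g_{e'}^{-1}\}$ and its translates''. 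I would first pin down the exact normal form: Catalan cancellation generates the coset $g_{e_0}\langle g_eg_{e'}^{-1}: e,e'\in C\rangle$. Abelianizing each $F_i$ turns this into the integer condition $\sum \alpha_e \mathbf 1_e=\mathbf 1_{e^*}$ with $\sum\alpha_e=1$. Since $r$-partiteness forces $\sum_e\alpha_e=1$ anyway (count vertices in $V_1$), this is exactly an Abelian cover.

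\emph{Reduction to $r=3$ topology.} The question becomes whether the kernel of the abelianization map, restricted to the subgroup $K=\langle g_eg_{e'}^{-1}\rangle$, is trivial: equivalently, whether $K\cap \prod_i [F_i,F_i]$ is contained in the relevant coset structure. For $r\le 2$, each pair $g_eg_{e'}^{-1}$ corresponds to a path in a bipartite graph, and $K$ is essentially the fundamental group of the graph. That group is free, and its image in $F_1\times F_2$ is detected by homology. I expect a direct combinatorial argument there (paths and cycles) to suffice.

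For $r=3$, I would build a $2$-complex $X$: for instance, take the hypergraph's incidence structure as a square/triangle complex whose fundamental group maps to $F_1\times F_2\times F_3$. Alternatively, realize $\prod F_i$ as $\pi_1$ of a product of three graphs (a $3$-dimensional cube complex) and the hyperedges as vertices of a subcomplex. The key lemma would be that the map from $\pi_1$ of the relevant subcomplex into $\prod F_i$ has image whose intersection with the commutator part is controlled by $H_1$. Concretely, if a loop has trivial homology image, it should be null-homotopic in the ambient product, using that the complex spanned by the hyperedges inside a product of three trees (the universal cover) is $2$-dimensional. In a $2$-dimensional subcomplex of a CAT(0) cube complex of dimension $3$, homology and homotopy obstructions should coincide by asphericity and an Euler-characteristic/Hurewicz argument.

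\emph{Main obstacle.} The hardest step is this topological core. The goal is to show that for three factors any element of $K$ with trivial abelian image is trivial, i.e. that the relevant quotient $K\to H_1$ is injective. This must fail for $r=4$, where the nilpotent-group counterexample appears, so the proof must use dimension $3$ essentially. I would first try lifting to the universal cover $T_1\times T_2\times T_3$ and arguing that the hyperedge points span a subcomplex with $H_1=\pi_1^{ab}$ injective, because its fundamental group is free: a $2$-complex inside a contractible $3$-dimensional cube complex has free $\pi_1$ by a dimension argument. Then the integer relation lifts to a genuine group relation, which unwinds into a cancellation sequence, i.e. a Catalan stack.
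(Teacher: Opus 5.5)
Your setup is sound: encoding each column in a free group $F_i$ with alternating signs is equivalent to the paper's Catalan group, and abelianizing recovers exactly an Abelian cover (your $\sum\alpha_e=1$ observation is right). The density claim does follow from \cref{thm:AbelianCoverIntro}, and $r=2$ follows from the even-cycle argument. The ``detected by homology'' claim in your $r=2$ paragraph has the same defect as the one described next, but you do not need it there. One small slip: the set of alternating products is $\langle g_eg_{e'}^{-1}\rangle g_{e_0}=g_{e_0}\langle g_e^{-1}g_{e'}\rangle$, not $g_{e_0}\langle g_eg_{e'}^{-1}\rangle$.

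\textbf{The gap is the $r=3$ core: both your target statement and its justification are wrong.}

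\emph{The target is false.} You aim to show that every element of $K$ with trivial abelian image is trivial, i.e.\ that $K$ injects into $\Z^V$. Since $\Z^V$ is abelian, this would force $K$ to be abelian, which essentially never holds. If three edges $e,e',e''$ have distinct first coordinates $a,a',a''$, then $aa'^{-1}$ and $aa''^{-1}$ do not commute in $F_1$. So $[g_eg_{e'}^{-1},g_eg_{e''}^{-1}]$ is a nontrivial element of $K$ with zero abelian image. The theorem needs the opposite kind of statement: not that $K$ avoids the commutator part, but that the relevant commutators are already \emph{generated} by $K$.

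\emph{The justification also fails.} A $2$-dimensional subcomplex of a contractible $3$-dimensional cube complex need not have free $\pi_1$. For example, a square-tiled torus sits in the standard cubulation of $\R^3=\R\times\R\times\R$, which embeds in the product of the Cayley trees of the $F_i$. Moreover, even a free $\pi_1$ of rank at least $2$ does not inject into its abelianization, so freeness would not give the injectivity you want anyway.

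\textbf{The missing idea is the paper's commutator-generation lemma.} The paper first passes to the label group $\fL=\G(V_3)$.
\begin{itemize}
\item $\langle H\rangle_{\fL}=\langle H\rangle_{\G}\cap\fL$ is normal in $\fL$, and $(1,2)$-cycles of $H$ give elements of it.
\item A new edge $e$ with $\fg_e\notin\langle H\rangle_{\G}$ closes a $(1,2)$-cycle whose $V_3$-word lies outside $\langle H\rangle_{\fL}$.
\item It builds the bipartite multigraph $\T$ on the classes of $\sim_1,\sim_2$ with edge set $V_3$; $\T$ is unchanged when Abelian-generated edges are added.
\end{itemize}
The key step is \cref{prop:equiv-cycle-membership} and \cref{prop:cycle-commute}. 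Every equivalence-cycle word $\fg_{\cC}$ can be completed to an element of $\langle H\rangle_{\G}$ both by a word in $\G(V_1)$ alone and by a word in $\G(V_2)$ alone. Commuting the $V_1$-completion of $\cC$ with the $V_2$-completion of $\cD$ kills both outer coordinates, giving $[\fg_{\cC},\fg_{\cD}]\in\langle H\rangle_{\fL}$. This is exactly where $r=3$ is used: two elements can each be cleaned of one of the two other coordinates. With three other coordinates the trick fails, consistent with the nilpotent counterexample at $r=4$.

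The argument then closes as follows:
\begin{itemize}
\item The free group of cycle words of $\T$ becomes abelian modulo $\langle H\rangle_{\fL}$, so its image factors through $H_1(\T)$. This is the hand-proved Hurewicz step, \cref{prop:q-equiv}.
\item The integer relation, whose $V_1,V_2$ parts vanish, decomposes into a $\Z$-combination of $(1,2)$-cycles of $H$, each already in $\langle H\rangle_{\fL}$ (\cref{prop:find-quotient}). This contradicts the choice of $e$.
\end{itemize}
Nothing in your proposal plays the role of this lemma. Without it, the step ``the integer relation lifts to a genuine group relation'' is unsupported.
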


Although the $r$-partite assumption may seem significant, it is without loss of generality in downstream applications to CSPs (see \cref{subsec:CSP-NRD}). Moreover, \cref{thm:CatalanArity3} is already enough to conclude that Catalan covers emerge in \emph{arbitrary} (non-partite) arity $3$ hypergraphs after $O(n)$ hyperedges.\footnote{This follows from expressing any $3$-uniform hypergraph via its $3$-lift, and observing that any cover in this $3$-lift must also be a cover in the original hypergraph.} Note that in the setting when $r = 2$, the proof is simple: a graph has an Abelian cover if and only if it has a cycle of even length. Take all but one of the edges of the cycle and place them onto the stack in consecutive order---after the cancellations, one is left with the final edge of the cycle. %
The primary technical contribution in \cref{thm:CatalanArity3} is a resolution of the $r = 3$ case by connecting Catalan covers to deep topological phenomena emerging from hypergraph structures. As we discuss in more detail in the technical overview (\cref{subsec:topology-intro}), we show the existence of Catalan covers when $r=3$ is closely tied to the \emph{fundamental group} of a topological space derived from the hypergraph $H$. Using a theorem of Hurewicz (e.g., \cite{Hatcher02}), we relate the fundamental group of this topological space to its Abelian \emph{homology} group, from which we can conclude that Abelian and Catalan covers are equivalent.

Nevertheless, \cref{thm:CatalanArity3} is suggestive: is it possible that for higher arities too, it is still the case that Abelian covers yield Catalan covers? For this question, we provide a strong negative result. Already at arity $4$, there are exact separations between Abelian and Catalan covers:

\begin{theorem}\label{thm:4uniformIntro}
    There exists a $4$-partite $4$-uniform hypergraph $H = (V, E)$ which admits an Abelian cover \emph{but not} a Catalan cover.
\end{theorem}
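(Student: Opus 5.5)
We will work with the group-theoretic reformulation of Catalan covers. Fix a 4-partite hypergraph with parts $V_1,\dots,V_4$, and let $F_i$ be the group generated by the vertices of $V_i$, with every generator an involution: $F_i=\ast_{v\in V_i}\Z/2$. Map each hyperedge $e=(v_1,\dots,v_4)$ to the tuple $\phi(e)=(v_1,v_2,v_3,v_4)\in F_1\times\cdots\times F_4$. A stack of hyperedges reduces to $e^*$ by local cancellations exactly when the coordinatewise words reduce to $e^*$. Since the generators are involutions, free reduction in $\ast\,\Z/2$ is precisely adjacent cancellation of equal letters. Moreover, in a reduced stack the parity of the stack length is the same in every column, so an odd-length stack is automatic.

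The first step is therefore a lemma: $C\cup\{e^*\}$ is a Catalan cover iff $\phi(e^*)$ lies in the subgroup generated by $\phi(C)$ inside $\prod_i F_i$. The lemma needs one refinement, because a word in the generators may use a generator $\phi(e)^{-1}$, which is the letter-reversed tuple. Since every entry of $\phi(e)$ is an involution, $\phi(e)^{-1}=\phi(e)$, so this causes no difficulty. The Abelian analogue of the lemma is that $\mathbf 1_{e^*}$ lies in the $\Z$-span of the $\mathbf 1_e$. To make the two comparable, we would pass to the abelianization, where each $\Z/2$ factor becomes a $\Z$-count of signed occurrences.

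The second step is the construction. We want a family whose integer span contains $e^*$ while $\phi(e^*)$ lies outside the subgroup $\langle\phi(C)\rangle$. The natural obstruction is a commutator. Take a small configuration, such as two copies of the $3$-uniform examples glued along a fourth coordinate, in which the tuples generate something like a Heisenberg-type nilpotent quotient. We then look for a homomorphism $\psi:\prod F_i\to N$ into a finite $2$-group $N$ of nilpotency class $2$ (for instance the dihedral group $D_8$ or a Heisenberg group mod $2$ or mod $4$) with the following properties: each vertex is sent to an involution, $\psi(\phi(e))$ lies in a fixed subgroup $K$ for all $e\in C$, and $\psi(\phi(e^*))\notin K$. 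In other words, the $N$-valued CSP defines a coset witness. The arity-$4$ product of free groups gives enough room for nontrivial commutators across coordinates, which is exactly what failed at arity $3$, where Hurewicz forced equivalence.

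Concretely, I would first search for an assignment of involutions in $D_8$ or $(\Z/4)\rtimes\Z/2$ to the vertices, along with a $4$-ary relation $R$ that is the graph of a coset of a subgroup. The requirements are that every $e\in C$ satisfies $R$, that $e^*$ does not, and that the $\Z$-linear system $\sum\alpha_e\mathbf 1_e=\mathbf 1_{e^*}$ is solvable; the last condition can be checked directly by Smith normal form. The main obstacle is this construction step. The $\Z$-span condition forces every abelian witness to fail, so the separating invariant must genuinely use noncommutativity. I expect the right approach is to start from a product of commutators $[a,b][c,d]$ realized coordinatewise, and let the hyperedges encode the letters so that their abelian contributions cancel. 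Verifying that no subgroup element equals $\phi(e^*)$ then reduces to a finite computation in the nilpotent quotient $N$.
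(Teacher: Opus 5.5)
\textbf{There is a genuine gap.} The proposal never exhibits a hypergraph or a certificate.

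Your Step 1 is just the paper's definition of a Catalan cover ($\fg_{e^*}\in\langle H\setminus\{e^*\}\rangle_{\G}$, with $\G\cong\prod_i F_i$). Step 2 contains the entire content of the theorem, and you defer it to a search.

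The concrete hints do not supply that content. By Theorem~\ref{thm:3-Catalan} there are no $3$-uniform separating examples to glue. Also, $D_8$, the nonabelian $(\mathbb{Z}/4)\rtimes\mathbb{Z}/2$ and the mod-$2$ Heisenberg group are all the same group.

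What is actually needed is what the paper supplies:
\begin{itemize}
\item the explicit hypergraph on $V_i=\{a_i,b_i,c_i,d_i\}$ with $e_1=(a_1,a_2,a_3,a_4),\dots,e_4=(d_1,d_2,d_3,d_4)$ and $e_5=(a_1,b_2,c_3,d_4)$, $e_6=(b_1,a_2,d_3,c_4)$, $e_7=(c_1,d_2,a_3,b_4)$, $e_8=(d_1,c_2,b_3,a_4)$;
\item the identity $\mathbf 1_{e_5}+\mathbf 1_{e_6}+\mathbf 1_{e_7}+\mathbf 1_{e_8}-\mathbf 1_{e_2}-\mathbf 1_{e_3}-\mathbf 1_{e_4}=\mathbf 1_{e_1}$;
\item a proof that $e_1$ is not Catalan-generated by $e_2,\dots,e_8$, with the other edges handled by symmetry.
\end{itemize}

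\textbf{The witness class you propose cannot certify this natural example.}
\begin{itemize}
\item A homomorphism from $\prod_i F_i$ into a class-$2$ group sends vertices to involutions. Then $[x,y]^2=[x^2,y]=1$, so only mod-$2$ commutator data survives.
\item Relatedly, the abelianization of $\prod_i F_i$ is $\mathbb{F}_2^V$, not a group of signed $\mathbb{Z}$-counts.
\item Let $x_i$ be the image of $e_i$ under any involutive assignment into any class-$2$ group $N$ (coordinatewise in $N^4$, or multiplied out in $N$). Collecting each column, where every letter occurs twice, gives
\[
x_8x_7\cdots x_1=[x_2,x_4][x_2,x_5][x_2,x_7][x_3,x_5][x_3,x_8][x_4,x_5][x_4,x_6][x_6,x_8].
\]
\item Hence $x_1=x_2\cdots x_8\cdot(\text{central product above})\in\langle x_2,\dots,x_8\rangle$, so no separating $K$ exists.
\end{itemize}

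\textbf{How the paper gets around this.}
\begin{itemize}
\item It sends $a_i,b_i,c_i,d_i$ to the order-$4$ generators $\fu_1,\dots,\fu_4$ of the free class-$2$ nilpotent group, or of its finite quotient $\fN_4$. This is not a homomorphism from $\G$.
\item Instead it uses Proposition~\ref{prop:test}: Catalan reduction always cancels an odd-position letter against an even-position one, so the alternating product $\prod_i\psi(h_i)^{(-1)^{i-1}}$ is preserved.
\item Abelianizing pins the exponents to the unique integer cover. The residue must then lie in the lattice spanned by $\chi([\psi(e_i),\psi(e_j)])$.
\item The functional $\zeta$ kills all $21$ lattice generators but equals $2 \bmod 4$ on the target. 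This obstruction is invisible mod $2$.
\end{itemize}
If you insist on a genuine homomorphism from $\prod_i F_i$, you must leave class $2$, for example via $\fg_v\mapsto((\psi(v),\psi(v)^{-1}),\sigma)\in\fN\wr\mathbb{Z}/2$.
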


We note that the recent work \cite{brakensiek2025Richness} implicitly gave a $6$-uniform hypergraph with this property. We discuss this hypergraph in \cref{subsec:4-unif} of the technical overview.

\subsection{Connections to CSP Non-Redundancy}\label{subsec:CSP-NRD}

As mentioned above, the hypergraph cover problems we study are motivated by recent developments in the study of CSP \emph{non-redundancy}. In this setting, one is given an arity $r$ \emph{relation} $R \subseteq D^r$ (sometimes called a predicate), and a universe of $n$ variables $x_1, \dots x_n \in D$. The goal is to understand the maximum size of a \emph{non-redundant} CSP instance: this is a collection of sets (sometimes called \emph{constraints}) $S_1, \dots S_m \in \binom{[n]}{r}$ such that for each $i \in [m]$, there exists an assignment $x \in D^n$ such that $x|_{S_i} \notin R$, but $x|_{S_{j}} \in R$ for every $j \neq i$. Intuitively, this asks for a collection of constraints such that each constraint is \emph{uniquely unsatisfiable}; 
i.e., that there is an assignment which satisfies all other constraints $S_j: j \neq i$, but does not satisfy the constraint $S_i$. Equivalently, dropping any constraint alters (increases) the set of satisfying assignments of the instance.  We use $\NRD(R, n)$ to denote the maximum size of a non-redundant CSP instance on $n$ variables for the relation $R$.

Recent work has identified $\NRD(R, n)$ as a central parameter in the study of CSPs. For instance, the work of Brakensiek and Guruswami \cite{brakensiek2025redundancy} showed that the non-redundancy of a CSP exactly governs its \emph{sparsifiability}, thereby unifying a line of work on CSP sparsification \cite{KK15, FK17, BZ20,KPS24, khanna2025efficient}. Non-redundancy has also been identified as a critical parameter in CSP \emph{kernelization} \cite{lagerkvist2017kernelization, lagerkvist2020Sparsification} and the query complexity of learning CSPs \cite{bessiere2020Chain}. Even more recently, non-redundancy was also identified as the key parameter in the streaming complexity of deciding satisfiability of CSPs \cite{sharma2026characterizing}.

Motivated by this, a central direction which has emerged in the study of CSP non-redundancy is to understand for which relations $R \subseteq D^r$ have the smallest possible non-redundancy of $O(n)$ (see, for instance \cite{chen2020BestCase, bessiere2020Chain, lagerkvist2020Sparsification, carbonnel2022Redundancy, KPS24, khanna2025efficient, brakensiek2025redundancy, brakensiek2025Richness, brakensiek2026classification, EGL26, SV26b, BGJLW26}). Nevertheless, our understanding of which relations admit linear non-redundancy is not complete. Until now, the most general characterization relies on the notion of \emph{extensions}:\footnote{Many prior works (especially \cite{bessiere2020Chain}) use the term ``embedding'' rather than ``extension.'' We choose to use the latter term as the former can lead to confusion with the notion of ``Abelian embeddings'' which has recently come to prominence in the study of CSP approximability and additive combinatorics~\cite{BKM26,M26}. The first use of an ``extension'' of a CSP appears in \cite{lagerkvist2020Sparsification}.}

\begin{definition}
    Given two structures $(D, R)$ and $(E, S)$, we say that $S$ is an extension of $R$ if there exists an injective homomorphism $\sigma: D \rightarrow E$ from $R$ to $S$ such that $\sigma(R) = \sigma(D)^r \cap S$. 
\end{definition}

 Note that, as observed in prior works ~\cite{bessiere2020Chain,lagerkvist2020Sparsification}, if an extension exists, then it is necessarily the case that $\NRD(R, n) \leq \NRD(S, n)$.

 Beyond simply finding extensions, 
 deriving characterizations of non-redundancy relies on finding extensions that admit special \emph{group} structure. The most general such structure is the notion of a \emph{Mal'tsev polymorphism}. Here, one desires a map $\varphi: E^3 \rightarrow E$ such that for any $x, y \in E$, we have $\varphi(x, y, y) = \varphi(y, y, x) = x$, and further, for any $t_1, t_2, t_3 \in S$, we also have that $\varphi(t_1, t_2, t_3) \in S$. One way to realize such a relationship is to enforce that $(E, \cdot)$ is a group, and the set $S$ is a coset of $E$. Then the typical group operation defined as $\varphi(x, y, z) = x y^{-1} z$ stays within the coset, hence satisfying the above conditions. 

 A fundamental result in the study of CSPs shows that any relation $R \subseteq D^r$ which admits a Mal'tsev extension must inherently have bounded non-redundancy:

 \begin{fact}[Proposition 9~\cite{bessiere2020Chain}, see also \cite{BD06,lagerkvist2020Sparsification}]\label{fact:malt}
     Let $R \subseteq D^r$ be a relation which admits an extension into a structure $(E, S)$ which admits a Mal'tsev polymorphism. Then, $\NRD(R, n) \leq |E|^2 \cdot n + 1$.
 \end{fact}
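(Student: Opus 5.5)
The plan is to prove the bound by showing directly that any non-redundant instance for $S$ is ``spanned'' by a small witness set, and then transferring through the extension. Since $\sigma$ is injective and $\sigma(R) = \sigma(D)^r \cap S$, a non-redundant instance of $R$ is also one of $S$. Given $x \in D^n$ violating only $S_i$, the assignment $\sigma(x) \in \sigma(D)^n$ violates $S$ exactly on $S_i$ and satisfies it on every other $S_j$. Hence $\NRD(R,n) \le \NRD(S,n)$, and it suffices to prove $\NRD(S,n) \le |E|^2 n + 1$ for a relation $S \subseteq E^r$ with a Mal'tsev polymorphism $\varphi$.

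Next I would use the standard fact that a Mal'tsev polymorphism makes every pp-definable relation admit a \emph{compact representation}. Fix a non-redundant instance $S_1,\dots,S_m$, and let $\mathrm{Sol} \subseteq E^n$ be its solution set. This set is closed under $\varphi$, since $\varphi$ is applied coordinatewise and preserves $S$. For each $i$, let $\mathrm{Sol}_{-i}$ be the solutions of the instance with $S_i$ removed; these sets are also $\varphi$-closed. By non-redundancy, $\mathrm{Sol}_{-i} \supsetneq \mathrm{Sol}$, witnessed by a point $y^{(i)} \in \mathrm{Sol}_{-i}$ with $y^{(i)}|_{S_i} \notin S$.

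The key tool is the Bulatov--Dalmau notion of signatures. Fix the coordinate order $1,\dots,n$. The \emph{signature} of a $\varphi$-closed set $T \subseteq E^n$ is the set of triples $(k,a,b)$ for which there exist $t,t' \in T$ agreeing on coordinates $<k$ with $t_k=a$ and $t'_k=b$. There are at most $n|E|^2$ possible triples. The core lemma, proved by the usual Mal'tsev ``fork'' argument, states that if $T \subseteq T'$ are $\varphi$-closed and have the same signature and the same projection onto the first coordinate, then $T = T'$. The proof is an induction on prefixes: given $t' \in T'$, one builds an element of $T$ agreeing with $t'$ on ever-longer prefixes, using $\varphi(u, v, w)$ to fix one coordinate at a time.

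The counting step then goes as follows. Order the constraints and consider the chain $\mathrm{Sol}^{(0)} \supseteq \mathrm{Sol}^{(1)} \supseteq \cdots \supseteq \mathrm{Sol}^{(m)}$, where $\mathrm{Sol}^{(j)}$ is the solution set of the first $j$ constraints (with $\mathrm{Sol}^{(0)} = E^n$). I would aim to show that each inclusion $\mathrm{Sol}^{(j-1)} \supsetneq \mathrm{Sol}^{(j)}$ is strict and that the signature, enlarged by a one-bit ``nonempty'' flag to give the extra $+1$, strictly shrinks along the chain. Strictness comes from $y^{(j)} \in \mathrm{Sol}^{(j-1)} \setminus \mathrm{Sol}^{(j)}$. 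The lemma then forces the signatures to differ, and since signatures are monotone under inclusion the signature must strictly decrease. That gives $m \le n|E|^2 + 1$.

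I expect the main obstacle to be the core lemma together with its monotonicity use. One must check that a strict inclusion of $\varphi$-closed sets changes the signature and not merely some other invariant. Handling the empty set and the first coordinate is where the $+1$ arises. I would follow Bulatov--Dalmau closely there.
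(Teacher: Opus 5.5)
Your proposal is correct. The paper gives no proof of this Fact and only cites Bessiere et al. Your route is the standard chain-length argument behind that citation: transfer $\NRD(R,n)\le\NRD(S,n)$ through the injective extension, then bound the length of the strictly decreasing chain of $\varphi$-closed solution sets $\mathrm{Sol}^{(0)}\supsetneq\cdots\supsetneq\mathrm{Sol}^{(m)}$ using Bulatov--Dalmau signatures.

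One simplification: the ``nonempty'' flag is unnecessary. Any nonempty $T$ contains $(1,a,a)$ in its signature for some $a$, so the signature already separates $\emptyset$ and your count gives $m\le |E|^2 n$ directly. For the same reason, the first-coordinate projection hypothesis in your core lemma is already encoded in the signature.
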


 Importantly however, this bound on the non-redundancy \emph{scales} with the domain size of the resulting extension $E$. Unfortunately, this quantity may even be infinite! Because of this, \cite{brakensiek2025Richness} explicitly poses a conjecture to \emph{bound} the size of these extension when they exist:

 \begin{conjecture}[Conjecture 8.1 of \cite{brakensiek2025Richness}]\label{conj:rich}
     Any relation $R \subseteq D^r$ over a finite domain with an infinite Mal’tsev extension also has
a finite Mal’tsev extension. In particular, any relation with an infinite Mal’tsev extension has
linear non-redundancy.
 \end{conjecture}

It is important to highlight that \emph{all known relations} with proven linear non-redundancy satisfy the conditions of \cref{conj:rich}. In this sense, the setting of \cref{conj:rich} is even a candidate for being an \emph{exact} characterization of \emph{all} relations which have linear non-redundancy. Unfortunately, both directions of this characterization are unproven; i.e., that all relations satisfying the conditions of \cref{conj:rich} have linear non-redundancy, and that all relations with linear non-redundancy admit infinite Mal’tsev extensions. Our work can thus be seen as progress towards proving this first direction.

The authors of \cite{brakensiek2025Richness} prove this conjecture in the special case for which $D = \{0,1\}$. As we highlight in this work, the problem of showing that all Mal’tsev extensions have linear non-redundancy can be reduced to showing that Catalan covers of a hypergraph emerge after only $O(n)$ hyperedges. In particular, a rather self-contained combinatorial question on hypergraphs would imply \cref{conj:rich} could have a rather profound impact on the universal-algebraic structure of CSPs. %
We formalize this conjecture as follows.

\begin{conjecture}\label{conj:linear}
For all $r \in \mathbb N$, there exists a constant $C_r > 0$ such that any $r$-partite $r$-uniform hypergraph on $n$ vertices with at least $C_r n$ hyperedges has a Catalan cover.
\end{conjecture}

As a consequence of \cref{thm:CatalanArity3}, we now know that \cref{conj:linear} is true when $r \le 3$.  Thus, we also prove \cref{conj:rich} is true when $r \leq 3$. Furthermore, as an immediate corollary of \cref{thm:CatalanArity3}, we obtain the following:

\begin{corollary}\label{cor:arity3Maltsev}
    Let $R \subseteq D^3$ be a relation which admits a (potentially infinite) Mal'tsev extension. Then, $R$ has a (finite) Abelian extension and thus, $\NRD(R, n) = O(n)$, where the hidden constant does not depend on $D$.
\end{corollary}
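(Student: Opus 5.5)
The plan is to pass from the Mal'tsev extension to a statement about hypergraph covers, use \cref{thm:CatalanArity3} to upgrade Abelian obstructions to Catalan ones, and then build a finite Abelian extension by hand. This follows the framework of \cite{brakensiek2025Richness}.

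\textbf{Step 1: reduce to the coset setting.} Suppose $R\subseteq D^3$ has a Mal'tsev extension $(E,S)$, possibly infinite. By standard facts, every relation with a Mal'tsev polymorphism is rectangular, so $S$ is a ``coset-like'' object. Following \cite{brakensiek2025Richness}, we may then replace $(E,S)$ by a group-coset extension. Concretely, there are (possibly non-Abelian) groups $G_1,G_2,G_3$, maps $\sigma_i : D\to G_i$ (one per coordinate, which is why the $r$-partite setting appears), and a coset $S$ of a subgroup of $G_1\times G_2\times G_3$ with $\sigma(R)=\sigma(D)^3\cap S$.

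We then encode the structure as a $3$-partite hypergraph $H_R$:
\begin{itemize}
\item the vertex classes are three copies of $D$;
\item the hyperedges are the tuples of $D^3$;
\item the tuples in $R$ are the ``generating'' edges.
\end{itemize}
The key claim from \cite{brakensiek2025Richness} is the following. A tuple $t\notin R$ is forced into $S$ by every Mal'tsev (group) extension exactly when $t$ is obtained from $R$ by an alternating product $t_1t_2^{-1}t_3\cdots$, computed coordinatewise in free groups. That is exactly a Catalan cover with distinguished edge $t$ and stack drawn from $R$.

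\textbf{Step 2: translate the hypothesis.} Existence of a Mal'tsev extension then says: no $t\in D^3\setminus R$ is Catalan-generated by $R$. (Otherwise $t$ would lie in the coset $S$ and in $\sigma(D)^3$ but not in $\sigma(R)$, a contradiction.)

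\textbf{Step 3: apply \cref{thm:CatalanArity3}.} Any Abelian cover in a $3$-partite $3$-uniform hypergraph is a Catalan cover. Hence no $t\notin R$ is Abelian-generated by $R$: there are no integers $\alpha_e$ ($e\in R$) with $\sum_e\alpha_e\mathbf 1_e=\mathbf 1_t$ and $\sum_e \alpha_e = 1$. The affine normalization comes from the alternating signs, and it has to be checked carefully that the theorem's notion matches this one.

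\textbf{Step 4: build a finite Abelian extension.} Take $E=\Z^{3D}$ with $\sigma_i(a)=\mathbf 1_{(i,a)}$, and let $S$ be the integer affine hull of $\{\mathbf 1_e : e\in R\}$. Step 3 says exactly that $\sigma(D)^3\cap S=\sigma(R)$.

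To make this finite, pick a finite quotient $\Z^{3D}/N\Lambda$, where $\Lambda$ is the lattice of differences. For each of the finitely many $t\notin R$, the vector $\mathbf 1_t$ lies outside the affine lattice. By the Smith normal form, non-membership is witnessed modulo some integer, so a single $N$ works for all $t$ simultaneously. This gives a finite Abelian coset extension.

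\textbf{Step 5: conclude.} The finite Abelian extension is in particular Mal'tsev, so $R$ has linear non-redundancy. To get a constant independent of $D$, I would instead use \cref{thm:AbelianCoverIntro} directly on instances. A non-redundant instance yields a $3$-partite hypergraph on $O(n)$ vertices with no Abelian cover, because such a cover would make the distinguished constraint redundant for this Abelian extension. Hence the instance has $O(n)$ hyperedges.

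\textbf{Main obstacle.} Step 1 is the main obstacle: rigorously identifying ``forced by all Mal'tsev extensions'' with Catalan generation. In particular, it requires passing from an arbitrary Mal'tsev algebra to free-group cosets. I would rely on the free-group construction from \cite{brakensiek2025Richness} for this.
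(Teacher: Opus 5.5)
Your route is essentially the paper's. You take the integer affine hull of $\{\mathbf 1_e : e\in R\}$ in $(\Z^D)^3$. You then apply \cref{thm:3-Catalan} with $H=R$ and $H'=R\cup\{t\}$; this formal version fixes the distinguished edge to be $t$, which settles the concern you raised. That turns Abelian generation of a tuple $t\notin R$ into Catalan generation, which contradicts the Mal'tsev extension. Finally, the $D$-independent bound comes from \cref{thm:AbelianCoverIntro}, exactly as in \cref{thm:Abelian-NRD-UB}.

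The only real difference is which result of \cite{brakensiek2025Richness} you use to show that a Catalan-generated tuple is forced into $S$. You pass to a group-coset extension and replay the alternating-product argument from the proof of \cref{prop:test}; for that, only the easy direction of your ``exactly when'' claim is needed. The paper instead applies the Catalan polymorphisms of \cref{thm:maltsev-Catalan} directly to the Mal'tsev relation $S$ (\cref{lemma:R-NRD-Catalan-NRD}), so it never needs your Step 1, which is the step you identified as the main obstacle.

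Your affine-normalization worry in Step 3 is unfounded. Each $\mathbf 1_e$ has exactly one $1$ in each of the three blocks. Summing the first block of $\sum_e\alpha_e\mathbf 1_e=\mathbf 1_t$ therefore forces $\sum_e\alpha_e=1$, so linear and affine generation coincide here.

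One step is wrong as written: $\Z^{3D}/N\Lambda$ is not finite. The difference lattice $\Lambda$ lies in the sublattice where each of the three blocks sums to zero. Hence $\Z^{3D}/\Lambda$ surjects onto $\Z^3$, and so does $\Z^{3D}/N\Lambda$. Also, quotienting by a sublattice that is not of product form would not respect the structure $(\Z^D)^3$ that an extension $S\subseteq E^3$ requires.

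What you want is reduction modulo $N\Z^{3D}=(N\Z^D)^3$:
\begin{itemize}
\item take $E=(\Z/N\Z)^D$ and $\sigma(a)=\mathbf 1_a \bmod N$, which is injective for $N\ge 2$;
\item take $S$ to be the image of $\mathbf 1_{e_0}+\Lambda$.
\end{itemize}
Then $\sigma(t)\in S$ if and only if $\mathbf 1_t-\mathbf 1_{e_0}\in\Lambda+N\Z^{3D}$. Your Smith normal form idea then works: choose $N$ divisible by all invariant factors of $\Lambda$ and larger than the free coordinates, in an adapted basis, of the finitely many vectors $\mathbf 1_t-\mathbf 1_{e_0}$. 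Alternatively, take the lcm of per-$t$ moduli, since enlarging $N$ by multiples preserves non-membership.

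This slip affects only the ``finite'' clause. The $O(n)$ bound already follows from the infinite extension in $\Z^D$ via \cref{thm:Abelian-NRD-UB}, and the paper itself just defers finiteness to known techniques (\cref{rem:finite}).
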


The proof of this aforementioned corollary, even when given our bounds (\cref{thm:CatalanArity3}) on when Catalan covers emerge crucially uses a nontrivial fact of \cite{brakensiek2025Richness} that ``Catalan polymorphisms'' (closely related to Catalan covers) show that the non-redundancy of a Mal'tsev CSP implies the lack of a Catalan cover in the instance's underlying hypergraph.

\subsection{Technical Overview}

To illustrate the technical novelty of our main results, we now highlight the main technical insights going into \cref{thm:AbelianCoverIntro}, \cref{thm:CatalanArity3}, and \cref{thm:4uniformIntro}.

\subsubsection{Sketch of Proof of \texorpdfstring{\cref{thm:AbelianCoverIntro}}{Theorem 1.1}}

The proof of \cref{thm:AbelianCoverIntro} proceeds by recasting the existence of Abelian covers to a suitable strutural property of \emph{integer lattices}. More precisely, given vectors $v_1, \hdots, v_k \in \mathbb Z^n$, one can define a suitable lattice generated by these vectors as follows
\[
    \Lambda(v_1, \hdots, v_k) := \left\{\sum_{i=1}^k \alpha_i v_i \,\middle\vert\, \forall i \in [k], \alpha_i \in \mathbb Z \right\}
\]
Consider any hypergraph $H = (V, E)$ with $E = \{e_1, \hdots, e_m\}$. For any $i \in [m]$, define the lattice
\[
    \Lambda_i := \Lambda(\mathbf{1}_{e_1}, \mathbf{1}_{e_2}, \hdots, \mathbf{1}_{e_i}).
\]
Clearly $\Lambda_i \subseteq \Lambda_{i+1}$ for all $i \in [m-1]$. However, if it is the case that $\Lambda_i = \Lambda_{i+1}$ for some $i \in [m-1]$, then $\mathbf{1}_{e_{i+1}} \in \Lambda_i$. In particular, this implies there is some $E' \subseteq \{e_1, \hdots, e_{i+1}\}$ which forms an Abelian cover. Thus, if $H$ lacks an Abelian cover we have the following strict containment of lattices
\[
    \Lambda_1 \subsetneq \Lambda_2 \subsetneq \cdots \subsetneq \Lambda_{m}.
\]
Recall our goal is to show that $m = |E| = O(|V|)$. To do this, we closely study the \emph{quotient} of consecutive lattices in this chain. For each $i \in [m-1]$, there are two possible cases. 

The first case is the quotient $\Lambda_{i+1} / \Lambda_i$ is infinite. Since the dimension of $\mathbb Z^{|V|}$ is $|V|$, there are at most $|V|-1$ indices where this can occur.  The second case is that the quotient $\Lambda_{i+1} / \Lambda_i$ is finite, whose cardinality (index) we denote by $[\Lambda_{i+1} : \Lambda_i] \ge 2$. To handle this case, we keep track of a quantity $\vol \Lambda_i$ denoting the (Euclidean) volume of the smallest nontrivial parallelepiped using vectors of $\Lambda_i$. Crucially, when $[\Lambda_{i+1} : \Lambda_i]$ is finite, we have that
\[
    \frac{\vol \Lambda_{i+1}}{\vol \Lambda_i} = \frac{1}{[\Lambda_{i+1} : \Lambda_i]} \le \frac{1}{2}.
\]
Furthermore, since each vector edge $\mathbf{1}_{e_i}$ has $\ell_2$ distance of $\sqrt{r}$, we can show that
\[
\frac{\vol \Lambda_{i+1}}{\vol \Lambda_i} \le \sqrt{r},
\]
when $\Lambda_{i+1} / \Lambda_i$ is infinite. In combination with the fact that $\vol \Lambda_1 = \sqrt{r}$ and $\vol \Lambda_m \ge 1$, we can deduce that
\begin{align*}
    \frac{1}{\sqrt{r}} \le \frac{\vol \Lambda_m}{\vol \Lambda_1} = \prod_{i \in [m-1]} \frac{\vol \Lambda_{i+1}}{\vol \Lambda_1} &\le \prod_{\substack{i \in [m-1] \\ [\Lambda_{i+1} : \Lambda_i] < \infty}}\frac{1}{2} \cdot  \prod_{\substack{i \in [m-1] \\ [\Lambda_{i+1} : \Lambda_i] = \infty}} \sqrt{r}\\
    &\le \frac{1}{2^m} \cdot \sqrt{r}^{|V|-1},
\end{align*}

from which we can deduce that $m = O(|V|\log r)$, as desired.

\subsubsection{Sketch of Proof of \texorpdfstring{\cref{thm:CatalanArity3}}{Theorem 1.2}}\label{subsec:topology-intro}

The proof of \cref{thm:CatalanArity3} is rather technical, requiring a number of nontrivial ideas from group theory which are inspired from topology. At a high level, we adopt the following strategy. First, we model Catalan covers as a group which we call the \emph{Catalan group}. Second, we build a combinatorial \emph{topological space} $\mathbb T$ related to our tripartite hypergraph $H$. Using our Catalan group, we show that the \emph{fundamental group} of $\mathbb T$ captures the existence of Catalan covers, and the \emph{first homology group} of $\mathbb T$ captures the existence of Abelian covers. Finally, by an invocation of the Hurewicz theorem, we closely relate the fundamental group and first homology group of $\mathbb T$, showing the equivalence of Catalan and Abelian covers in tripartite $3$-uniform hypergraphs. Although these topological connections inspired our argument, the actual proof in \cref{sec:Catalan,sec:cover-3} is entirely self-contained.

As a first step, we model the notion of a Catalan cover as a group, an idea implicit in \cite{brakensiek2025Richness}. Given a vertex set $V$, we introduce a variable $\fg_v$ for each vertex $v \in V$, with the only promise being that $\fg_v^2 = 1$. A word is then simply a product of these variables; for instance if $V = \{1,2,3,4\}$, then $\fg_1 \fg_3 \fg_3 \fg_1 \fg_4 $ is a word, and it can be simplified to $\fg_4$, as 
\[
\fg_1 \fg_3 \fg_3 \fg_1 \fg_4 = \fg_1 \fg_1 \fg_4 = \fg_4.
\]
Let $\fG$ be the group generated by these words. Given a hyperedge $e$, we let $v_i(e)$ denote the $i$th vertex in $e$. Given an ordered collection of hyperedges $C$ (note that hyperedges are permitted to appear multiple times), we let $w_1(C) = \prod_{e \in C} \fg_{v_1(e)}$ denote the word which is formed as a result of taking the variables which correspond to each of the first vertices in each hyperedge $e \in C$, and similarly for $i \in [r]$, define $w_i(C) = \prod_{e \in C} \fg_{v_i(e)}$.

With this group-theoretic machinery, we can prove that an ordered collection of hyperedges $C$ along with a hyperedge $e^* \notin C$ forms a Catalan cover if for every $i \in [r]$ (where $r=3$ in \cref{thm:CatalanArity3}), 
\begin{align}
w_i(C) = \fg_{v_i(e^*)}.\label{eq:w}
\end{align}
In other words, the ``stack'' definition of Catalan covers is equivalent to the group-theoretic version.

Recall that we assume the hypergraphs under consideration are tripartite. Thus, we can view $E \subseteq V_1 \times V_2 \times V_3$ for some tripartition $V_1 \cup V_2 \cup V_3 = V$.  Given an edge $e \in E$, we can then define an edge element $\fg_e = (\fg_{v_1(e)}, \fg_{v_2(e)}, \fg_{v_3(e)}) \in \fG_1 \times \fG_2 \times \fG_3$,\footnote{In the actual proof, instead of looking at $\fG_1 \times \fG_2 \times \fG_3$ we instead look at just $\fG$ but impose ``commutativity conditions'' between vertices in different parts of the hypergraph. See \cref{sec:Catalan} for more details.} where $\fG_i$ is the subgroup of $\fG$ generated by $\{\fg_v \mid v \in V_i\}$. Given our edge set $E$, we can then let $\langle E\rangle_{\fG}$ denote the group generated by $\fg_e$ for $e \in E$. Checking if $(V, E)$ has a Catalan cover by \cref{eq:w} is then equivalent to checking if there is some $e^* \in E$ for which $\fg_{e^*} \in \langle E \setminus \{e^*\}\rangle_{\fG}$, or equivalently $\langle E \setminus \{e^*\}\rangle_{\fG} = \langle E\rangle_{\fG}$.

We also introduce the notation $\langle E\rangle_{\mathbb Z}$ for the lattice $\Lambda(e : e \in E)$ from the proof of \cref{thm:AbelianCoverIntro}. With the facts we already have laid out, proving \cref{thm:CatalanArity3} is equivalent to the following statement.

\begin{fact}\label{fact:AbelianCatalanIntro}
For all $E' \subseteq E \subseteq V^3$ if $\langle E'\rangle_{\mathbb Z} = \langle E\rangle_{\mathbb Z}$ then $\langle E'\rangle_{\fG} = \langle E\rangle_{\fG}$.
\end{fact}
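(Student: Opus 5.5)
The plan is to reduce \cref{fact:AbelianCatalanIntro} to a single missing edge and then use the Hurewicz theorem in the guise of an abelianization argument. By induction on $|E \setminus E'|$, it suffices to treat $E = E' \cup \{e^*\}$ with $\mathbf{1}_{e^*} \in \langle E'\rangle_{\mathbb Z}$, and to show $\fg_{e^*} \in \langle E'\rangle_{\fG}$. Write $K = \langle E\rangle_{\fG}$ and $K' = \langle E'\rangle_{\fG} \le K$.

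\textbf{Step 1: pass to even words.} Each $\fG_i$ is the free product $*_{v \in V_i} \mathbb Z/2$, and it has a parity map to $\mathbb Z/2$. Every generator $\fg_e$ has parity $(1,1,1)$. Fix a base edge $e_0 \in E'$ and let $K_0$ be the subgroup of $K$ generated by the even elements $\fg_{e_0}\fg_e$ for $e \in E$; define $K'_0 \le K'$ in the same way. Then $K = K_0 \sqcup \fg_{e_0}K_0$, so it suffices to show $\fg_{e_0}\fg_{e^*} \in K'_0$.

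\textbf{Step 2: realize $K_0$ as a fundamental group.} I would build a $2$-dimensional complex $\mathbb T_E$ as follows.
\begin{itemize}
\item Its vertices are the vertices of $H$, and its $1$-skeleton comes from the three parts, so that the even subgroup of each $\fG_i$ is the free fundamental group of a graph on $V_i$.
\item Each hyperedge contributes a cell that glues the three coordinates together; in effect $\mathbb T_E$ is a $2$-complex whose loops are even words in the three coordinates read synchronously along sequences of hyperedges.
\end{itemize}
The goal is to identify $K_0 \cong \pi_1(\mathbb T_E)$, with $K'_0$ the image of $\pi_1(\mathbb T_{E'})$, and to identify $H_1(\mathbb T_E)$ with the even sublattice of $\langle E\rangle_{\mathbb Z}$. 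Under these identifications the Hurewicz map $\pi_1 \to H_1$ is the map $\fG \to \mathbb Z^{V}$ that records signed vertex counts (odd positions $+1$, even positions $-1$), which is the same map that shows every Catalan cover is an Abelian cover. Hurewicz then gives $K_0^{\mathrm{ab}} \cong$ the even part of $\langle E\rangle_{\mathbb Z}$.

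\textbf{Step 3: the key structural claim, and the main obstacle.} I expect the hard part to be showing that for $r = 3$ the commutator subgroup $[K_0, K_0]$ is already contained in $K'_0$. Equivalently, attaching the single extra cell for $e^*$ creates no new nonabelian fundamental group beyond its homology class. Given this claim, the argument closes quickly:
\begin{itemize}
\item $K'_0 \supseteq [K_0, K_0]$ makes $K'_0$ normal in $K_0$ with abelian quotient, so $K_0/K'_0$ is a quotient of $K_0^{\mathrm{ab}}$.
\item The image of $\fg_{e_0}\fg_{e^*}$ in $K_0^{\mathrm{ab}}$ is $\mathbf{1}_{e_0}-\mathbf{1}_{e^*}$. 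This lies in the image of $K'_0$, because the hypothesis $\mathbf{1}_{e^*} \in \langle E'\rangle_{\mathbb Z}$ can be rewritten using only even combinations of elements of $E'$; parity forces the coefficient sum to be odd.
\item Hence $\fg_{e_0}\fg_{e^*} \in K'_0$.
\end{itemize}
To prove the commutator claim, I would first try the concrete, self-contained route. The aim is to show that every commutator of two generators $\fg_{e_0}\fg_e$ and $\fg_{e_0}\fg_f$ can be rewritten, via local cancellations in the three columns, as a product of generators not involving any prescribed edge. The intended mechanism is that with only three coordinates, a commutator reduces in each column to a conjugate of a product of involutions at two vertices. This two-dimensional phenomenon should fail at arity $4$, consistent with \cref{thm:4uniformIntro}. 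If this direct rewriting is too messy, the fallback is to choose the cell structure in Step 2 so that $\mathbb T_E$ is aspherical or simply connected up to a free factor. In that case $\pi_1$ decomposes as a free product in which the new cell only kills an abelian class.
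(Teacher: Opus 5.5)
\textbf{There is a genuine gap: the argument rests on two unproven claims, and the second is equivalent to the statement itself.}

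Your Step~1 reduction is correct. The bookkeeping at the end of Step~3 is also correct, with one small fix: you need the coefficients to sum to exactly $1$, which the $V_1$-coordinates give, not merely to an odd number.

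\textbf{First unproven claim (Step~2).} The complex $\mathbb T_E$ is never actually defined. The identification of $K_0^{\mathrm{ab}}$ with $\{\sum_e \alpha_e \mathbf 1_e : \sum_e \alpha_e = 0\}$ is asserted, not proved. This is not automatic: $K_0$ sits inside a product of three free groups, the even parts of the $\fG(V_i)$. For subgroups of such a product, the abelianization need not inject into $\mathbb Z^V$ under the signed-count map. Here injectivity says exactly that every integer relation among the $\mathbf 1_e$ is the signed letter count of some word in the $\fg_e$ that is trivial in $\fG$. That is a statement of the same depth as what you are trying to prove.

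\textbf{Second unproven claim (Step~3).} The inclusion $[K_0,K_0]\subseteq K'_0$, combined with the Step~2 identification, gives $K_0=K'_0$, and $K_0=K'_0$ trivially gives the inclusion. So calling it ``the hard part'' restates the problem rather than reducing it.
\begin{itemize}
\item Your sketched mechanism never uses the hypothesis $\mathbf 1_{e^*}\in\langle E'\rangle_{\mathbb Z}$. Without that hypothesis the claim is false. If $e_0,e,f$ lie on nine distinct vertices, then $[\fg_{e_0}\fg_e,\fg_{e_0}\fg_f]$ is nontrivial and has zero signed count. Hence it is not in the infinite cyclic group generated by $\fg_{e_0}\fg_e$.
\item The asphericity fallback is not an argument.
\item Any correct proof must use $r=3$ at an identifiable point, since \cref{thm:4uniform} shows the conclusion fails at arity $4$. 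Your proposal never does.
\end{itemize}

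\textbf{The missing idea.} The paper does not try to control all of $[K_0,K_0]$; instead it kills the first two coordinates.
\begin{itemize}
\item Set $\fL=\fG(V_3)$ and $N'=\langle E'\rangle_{\fG}\cap\fL$. One shows $N'\lhd\fL$.
\item Since $E_{1,2}(E)$ and $E_{1,2}(E')$ have the same components, a missing $\fg_{e^*}$ would be witnessed by a $(1,2)$-cycle word lying in $(\langle E\rangle_{\fG}\cap\fL)\setminus N'$.
\item Form the graph $\T$: its vertices are the $\sim_1$- and $\sim_2$-classes of $V_3$, and its edges are the elements of $V_3$. These classes are the same for $E$ and $E'$.
\item For every equivalence cycle $\cC$ there exist $\fw_1\in\fG(V_1)$ and $\fw_2\in\fG(V_2)$ with $\fw_1\fg_{\cC},\fw_2\fg_{\cC}\in\langle E'\rangle_{\fG}$.
\item Because $\fG(V_1)$, $\fG(V_2)$ and $\fL$ commute elementwise, $[\fw_1\fg_{\cC},\fw_2\fg_{\cD}]=[\fg_{\cC},\fg_{\cD}]$. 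So cycle words commute modulo $N'$. This is exactly where arity $3$ enters: the leftover junk must split into two mutually commuting factors.
\item Hurewicz is then needed only for the graph $\T$, whose fundamental group is free on the elementary cycles.
\item To finish, use $\langle E'\rangle_{\mathbb Z}=\langle E\rangle_{\mathbb Z}$ to rewrite the Abelianization of the witnessing cycle as a $V_3$-supported integer combination of edges of $E'$. Decompose that combination into $(1,2)$-cycles of $E'$; each of their words already lies in $N'$.
\end{itemize}
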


Rather surprisingly the proof of \cref{fact:AbelianCatalanIntro} is highly topological. Recall from the discussion immediately after \cref{thm:CatalanArity3} that bipartite $2$-uniform Catalan covers and Abelian covers are easy to describe: they always arise from even-length cycles. By ``dropping'' the third part $V_3$ from $E$, we get a bipartite graph $E_{1,2} := \{(v_1(e), v_2(e)) \mid e \in E\}$ whose even cycles give strong hints concerning the structure of the Abelian and Catalan covers of $E$ itself.

To formalize this intuition, we build the following bipartite \emph{auxiliary graph} which we also view as a topological space. We build equivalence relations $\sim_1$ and $\sim_2$ on the third vertex set $V_3$ as follows. For $i \in [2]$, we say that $u,v \in V_3$ satisfy $u \widehat{\sim}_i v$ if there exist edges $e, f \in E$ such that (1) $v_i(e) = v_i(f)$, (2) $v_3(e) = u$, and (3) $v_3(f) = v$. From the perspective of the Catalan group, this relation captures the fact that one can ``translate'' between tuples which have $u$ and $v$ in their third position \emph{without} altering the word in their first position.
Such a relation $\widehat{\sim}_i$ might not be transitive, so we let $\sim_i$ denote the transitive closure of this relation. Our auxiliary graph $\mathbb T$ is built by having the ``left'' vertices be the equivalence classes of $\sim_1$ and its ``right'' vertices be the equivalence classes of $\sim_2$. For each $v \in V_3$, we have an edge (with label $v$) connecting its $\sim_1$ equivalence class to its $\sim_2$ equivalence class. %
The definition of $\mathbb T$ depends greatly on the choice of $E$, but crucially, we show that if $\langle E'\rangle_{\mathbb Z} = \langle E\rangle_{\mathbb Z}$, then we get the same $\mathbb T$ for both $E'$ and $E$. Thus, we can say $\mathbb T$ is \emph{the} topological space associated with our problem. %

Using $\mathbb T$, we can define a crucial subgroup of $\fG_3$, which we call the \emph{walk subgroup} $\fW \subseteq \fG_3$, which can be considered to be a discrete analogue of the topological \emph{fundamental group} (i.e., first homotopy group) of $\mathbb T$. More precisely, pick an arbitrary equivalence class $A_0$ of $\sim_1$, which we call the \emph{base point} of $\mathbb T$, now consider all walks on the graph $\mathbb T$ which start and stop at $A_0$. Since each edge of $\mathbb T$ is labeled by a vertex $v \in V_3$, we can uniquely determine a walk by its sequence $v_1, \hdots, v_k \in V_3$ of edges used. From this, we can define a corresponding \emph{cycle word} $\fg_{v_1} \cdots \fg_{v_k} \in \fG_3$. We let $\fW \subseteq \fG_3$ be the group generated by these cycle words.

With some simple group theory, we can show that $\langle E'\rangle_{\fG} \neq \langle E\rangle_{\fG}$ if and only if there is a cycle word $\fw \in \fW$ such that $(1,1,\fw) \in \langle E\rangle_{\fG} \setminus \langle E'\rangle_{\fG}$, where $1$ is the identity element of $\fG$.  As such our goal is now to show that if $\langle E'\rangle_{\mathbb Z} = \langle E\rangle_{\mathbb Z}$ then such a cycle word $\fw$ cannot exist. We analyze this by constructing an auxiliary group which we call the \emph{label group}.
\[
\langle E\rangle_{\fL} := \{\fu \in \fG \mid (1, 1, \fu) \in \langle E\rangle_{\fG}\},
\]
Intuitively, the way to think about this is that words in $\langle E\rangle_{\fL}$ are produced by finding \emph{cycles} in the graph produced by restricting each hyperedge of $E$ only to its first two variables. When one translates these cycles \emph{back} into the hypergraph $H$, the words in the first and second position \emph{still} cancel out, leaving a tuple of the form $(1, 1, w) $.
With some basic group theory, we can show that in order to prove \cref{fact:AbelianCatalanIntro}, it suffices to prove that if $\langle E'\rangle_{\mathbb Z} = \langle E\rangle_{\mathbb Z}$ then $\langle E'\rangle_{\fL} = \langle E\rangle_{\fL}$.  As a result, the statement $(1,1,\fw) \in \langle E\rangle_{\fG} \setminus \langle E'\rangle_{\fG}$ translates to the cosets $\fw \cdot \langle E'\rangle_{\fL}$ and $\langle E'\rangle_{\fL}$ being distinct.

Crucially, under a mild assumption on the structure of $E$, we have that $\langle E'\rangle_{\fL}$ and $\langle E\rangle_{\fL}$ are \emph{normal} subgroups of $\fG_3$. That is, the cosets of $\langle E'\rangle_{\fL}$ form a quotient group $\fG_3 / \langle E'\rangle_{\fL}$. The most pivotal step in the proof is realizing that this quotient group is ``mostly'' Abelian in the sense that the cosets correspond to any two cycle words commute!

\begin{fact}[see \cref{prop:cycle-commute}]\label{fact:commute}
For any cycles words $\fw_1, \fw_2 \in \fW$, we have that $\fw_1\fw_2 \cdot \langle E'\rangle_{\fL} = \fw_2\fw_1 \cdot \langle E'\rangle_{\fL}$.
\end{fact}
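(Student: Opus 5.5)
Let $\fN := \langle E'\rangle_{\fL}$, which we take to be normal in $\fG_3$ under the standing mild assumption on $E$. The goal is to show that the commutator of any two cycle words lies in $\fN$. Since $\fW$ is generated by cycle words of closed walks at the base point $A_0$, and commutation is preserved under products and inverses, it suffices to treat two \emph{basic} cycle words. Concretely, fix a spanning tree $\T_0$ of the auxiliary graph $\T$. Each non-tree edge, labelled by some $v\in V_3$, determines a fundamental cycle, and its word $\fw$ is the product of the labels along the tree path, then $\fg_v$, then the tree path back. These generate $\fW$.

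The first step is to turn adjacency in $\T$ into elements of $\langle E'\rangle_{\fG}$. Suppose $u \widehat{\sim}_1 v$, witnessed by edges $e,f$ with $v_1(e)=v_1(f)$. Then $\fg_e\fg_f \in \langle E'\rangle_{\fG}$ has the form $(1,\fg_{v_2(e)}\fg_{v_2(f)},\fg_u\fg_v)$. Chaining such relations along a walk alternates between steps with trivial first coordinate (the $\sim_1$ moves) and steps with trivial second coordinate (the $\sim_2$ moves). Here I use that $\T$ is the same for $E'$ and $E$, so we may work with edges of $E'$. The upshot is that for a closed walk $\fw$, there are elements $x_\fw=(a,1,\ast)$ and $y_\fw=(1,b,\ast)$ whose product has third coordinate $\fw$.

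The key trick I would try is a ``Hall–Witt / bipartite commutator'' argument. Elements of the form $(a,1,p)$ and $(1,b,q)$ commute in their first two coordinates, because the parts commute. Hence their commutator has the form $(1,1,[p,q])$ and lies in $\langle E'\rangle_{\fG}$, so $[p,q]\in\fN$. I would then aim to show that every cycle word is congruent modulo $\fN$ to a product $pq$, where $p$ comes from ``$\sim_1$-segments'' and $q$ from ``$\sim_2$-segments''. Commutators of cycle words would then reduce to commutators of such $p$'s and $q$'s, which vanish in $\fG_3/\fN$. Finally, commutators among the $p$'s alone would be handled by conjugating with a $q$-type element: a loop traversed through a $\sim_2$ class can be rewritten as a $\sim_1$ loop times a $\sim_2$ loop.

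The main obstacle is this last decomposition. A cycle word is an interleaving of labels, so the $\sim_1$ and $\sim_2$ contributions must be untangled modulo $\fN$ using normality. My first attempt would be induction on walk length: peel off one $\sim_1$ step and one $\sim_2$ step at a time, and use normality of $\fN$ to move the peeled factor past the rest.
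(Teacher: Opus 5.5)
\textbf{There is a genuine gap.} Your commutator mechanism is right, and it is exactly the paper's: if $(x,1,p)$ and $(1,y,q)$ lie in $\langle E'\rangle_{\fG}$ with $x\in\fG_1$ and $y\in\fG_2$, their commutator is $(1,1,[p,q])$. The problem is what you feed into it.

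You split a cycle word as $\fw\equiv pq$, with $p$ of type $(\ast,1,p)$ and $q$ of type $(1,\ast,q)$. Then $[pq,p'q']$ also involves $[p,p']$ and $[q,q']$, and your tool says nothing about these. For instance, the commutator of $(x,1,p)$ and $(x',1,p')$ is $([x,x'],1,[p,p'])$, which is not in $\fL$. The proposed remedy, peeling off steps by induction and using normality of $\langle E'\rangle_{\fL}$, cannot fill this. Normality only lets you conjugate elements of $\langle E'\rangle_{\fL}$; it gives no commutation between elements outside it. The decomposition is also vacuous: every cycle word is already a product of same-type steps, so you may take $q=1$. So the whole difficulty sits in the part you left open.

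\textbf{The missing idea} is the paper's \cref{prop:equiv-cycle-membership}: a \emph{single} cycle word $\fw$ admits \emph{both} kinds of lift. Take a closed walk at $A_0$ with labels $v_1,\dots,v_\ell$. The pairs $(v_1,v_2),(v_3,v_4),\dots$ are $\sim_2$-steps, and $(v_2,v_3),\dots,(v_{\ell-2},v_{\ell-1}),(v_\ell,v_1)$ are $\sim_1$-steps.

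\emph{First lift.} Since $\fw=(\fg_{v_1}\fg_{v_2})(\fg_{v_3}\fg_{v_4})\cdots$, multiplying the lifts $(x_i,1,\fg_{v_i}\fg_{v_{i+1}})$ of the $\sim_2$-steps gives $(x,1,\fw)\in\langle E'\rangle_{\fG}$ outright.

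\emph{Second lift.} The $\sim_1$-steps produce a cyclic rotation of $\fw$, which is repaired by a sandwich. Let:
\begin{itemize}
\item $e=(a,b,v_1)\in E'$ be an edge at $v_1$ (it exists under the standing assumption; otherwise the walk is trivial);
\item $X$ be the product of the lifts $(1,y_i,\fg_{v_i}\fg_{v_{i+1}})$ over even $i\le\ell-2$;
\item $Y=(1,y_0,\fg_{v_1}\fg_{v_\ell})$.
\end{itemize}
Then $\fg_eX\fg_eY$ has first coordinate $\fg_a\fg_a=1$, second coordinate in $\fG_2$, and third coordinate $\fg_{v_1}(\fg_{v_2}\cdots\fg_{v_{\ell-1}})\fg_{v_1}\fg_{v_1}\fg_{v_\ell}=\fw$.

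With $(x,1,\fw_1)$ and $(1,y,\fw_2)$ in hand, your commutator trick gives $[\fw_1,\fw_2]\in\langle E'\rangle_{\fL}$ for arbitrary cycle words directly. Hence $\fw_1\fw_2\langle E'\rangle_{\fL}=\fw_2\fw_1[\fw_1,\fw_2]\langle E'\rangle_{\fL}=\fw_2\fw_1\langle E'\rangle_{\fL}$. No reduction to fundamental cycles, no induction, and no appeal to normality is needed.
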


That is, quotienting by $\langle E'\rangle_{\fL}$ effectively \emph{Abelianizes} the fundamental group of $\T$! By the Hurewicz theorem (e.g., \cite{Hatcher02,sunada2012topological}), this Abelianized fundamental group is isomorphic to the \emph{first homology group} of $\T$, essentially corresponding to $\mathbb Z$-linear combinations of cycles of $\T$, which intuitvely captures the structure of Abelian covers of our hypergraph $E$. We note that a natural arity-4 generalization of \cref{fact:commute} does not hold, highlighting why \cref{thm:CatalanArity3} cannot extend to arity-$4$ hypergraphs.

At this point, we can relatively quickly complete a proof by contradiction. Assume there is some cycle word $\fw \in \fW$ for which $\fw \in \langle E\rangle_{\fL} \setminus \langle E'\rangle_{\fL}$. Let $\fw \cdot \langle E'\rangle_{\fL}$ be the corresponding cycle word coset. Using the fact that $\langle E'\rangle_{\Z} = \langle E\rangle_{\Z}$, we can break up $\fw \cdot \langle E'\rangle_{\fL}$ into a sum of cycle words cosets $\left(\sum_{i=1}^k \alpha_i \fw_i\right) \cdot \langle E'\rangle_{\fL}$ (with $\alpha_i \in \mathbb Z$ and $\fw_i \in \fW$) such that each $\fw_i$ is actually contained in $\langle E'\rangle_{\fL}$. In that case, $\left(\sum_{i=1}^k \alpha_i \fw_i\right) \cdot \langle E'\rangle_{\fL} = \langle E'\rangle_{\fL}$, so $\fw \cdot \langle E'\rangle_{\fL} = \langle E'\rangle_{\fL},$ contradicting the fact that $\fw \in \langle E\rangle_{\fL} \setminus \langle E'\rangle_{\fL}$. Therefore, \cref{fact:AbelianCatalanIntro} and thus \cref{thm:CatalanArity3} follows. We emphasize again that the proof of \cref{thm:CatalanArity3} is entirely self-contained with all the topological machinery proved by hand for completeness.

\subsubsection{Sketch of Proof of \texorpdfstring{\cref{thm:4uniformIntro}}{Theorem 1.3}}\label{subsec:4-unif}

Before we explain the proof of \cref{thm:4uniformIntro}, we first briefly describe the arity-$6$ hypergraph implicitly\footnote{More precisely, they construct a relation $R \subseteq \{0,1,2\}^6$ which has a Mal'tsev extension but not an Abelian extension. We interpret their construction as a $6$-uniform hypergraph with an Abelian cover but not a Catalan cover.} constructed by \cite[Theorem 6.18]{brakensiek2025Richness} which has an Abelian cover but not a Catalan cover. The hypergraph is on $18$ vertices $V := \{a_1, \hdots, a_6, b_1, \hdots, b_6, c_1, \hdots, c_6\}$ and consists of $6$ hyperedges
\begin{align*}
    E := \{&e_1 := (a_1,a_2,c_3,a_4,c_5,a_6),\ 
e_2 := (a_1,a_2,a_3,b_4,a_5,b_6),\ 
e_3 := (b_1,c_2,a_3,b_4,b_5,c_6),\\
&e_4 := (b_1,b_2,b_3,a_4,b_5,a_6),\ 
e_5 := (c_1,b_2,b_3,c_4,a_5,b_6),\ 
e_6 := (c_1,c_2,c_3,c_4,c_6,c_6)\}
\end{align*}
Note that $E$ is an Abelian cover due to the relation
\[
    \mathbf{1}_{e_1} - \mathbf{1}_{e_2} + \mathbf{1}_{e_3} - \mathbf{1}_{e_4} + \mathbf{1}_{e_5} = \mathbf{1}_{e_6}
\]
However, the order the edges of $E$ are written in hints as to why $E$ is \emph{not} a Catalan cover. Look at the stack $(e_1, e_2, e_3, e_4, e_5)$, the first 5 coordinates will have sufficient cancellation to coincide with $e_6$, but the last one does not due to ``long range cancellations.'' In fact this is a necessary feature--since the third Catalan number is 5, it is impossible for any permutation of $e_1, \hdots, e_6$ to have every coordinate coorrespond to a balanced sequence of parentheses. However, edges can appear multiple times to witness a Catalan cover, so this is not an exhaustive proof. To get around this, the authors of \cite{brakensiek2025Richness} build an important \emph{group extension} of the relation. More precisely, they construct a group $G$ and an extension map $\psi : V \to G$ such that each edge $e_i$ becomes a group element
\[
    \psi(e_i) := (\psi(e_{i,1}), \hdots, \psi(e_{i,6})) \in G^6.
\]
What makes such a group extension useful is that if there does exist a stack such that $[e_{i_1}, \hdots, e_{i_k}] = e_{i_{k+1}}$, then we have the group relation
\begin{align}
    \psi(e_{i_1}) \psi(e_{i_2})^{-1} \psi(e_{i_3}) \cdots \psi(e_{i_k}) = \psi(e_{i_{k+1}}).\label{group:eq}
\end{align}
Thus, it suffices to show that \cref{group:eq} can never occur if $i_{k+1} \not\in \{i_1, \hdots, i_k\}$. To do this, it suffices\footnote{Technically, to rule out a Catalan cover, we need to show that the coset generated by $\{\psi(e_1), \hdots, \psi(e_6)\} \setminus \psi(e_i)$ fails to contain $\psi(e_i)$ for all $i \in [6]$. However, by symmetry in $E$ and the group extension, it suffices to consider the case $i = 6$.} to show that the coset of $G^6$ generated by $\psi(e_1), \hdots, \psi(e_5)$ does not contain $\psi(e_6)$. Carrying out such a proof greatly depends on the group $G$ chosen, but \cite{brakensiek2025Richness} chose $G$ to be \emph{Pauli group} of $1$-qubit unitaries from which special structural properties of the Pauli group are used to establish the non-containment, particularly that the generators of the Pauli group have a certain ``anticommutative'' property.

To prove \cref{thm:4uniformIntro}, we need a few new techniques. First, the edges $E$ chosen by \cite{brakensiek2025Richness} have the shortcoming that any ``restriction'' to $5$-uniform edges fails to separate Abelian covers from Catalan covers, so we need an entirely new construction. Using a computer search, we found the following candidate 4-uniform hypergraph $(V', E')$ with $V' = \{a_1, \hdots, a_4, b_1, \hdots, b_4, c_1, \hdots, c_4, d_1, \hdots, d_4\}$ and
\begin{align}
    E' := \{&e'_1 := (a_1,a_2,a_3,a_4),\ 
    e'_2 := (b_1, b_2, b_3, b_4),\ 
    e'_3 := (c_1, c_2, c_3, c_4),\ 
    e'_4 := (d_1, d_2, d_3, d_4),\nonumber\\
    &e'_5 := (a_1, b_2, c_3, d_4),\ 
    e'_6 := (b_1, a_2, d_3, c_4),\ 
    e'_7 := (c_1, d_2, a_3, b_4),\ 
    e'_8:= (d_1, c_2, b_3, a_4)\}.\label{eq:hypergraph-4}
\end{align}
Like $E$, we have that $E'$ is an Abelian cover because
\[
    \mathbf{1}_{e'_5} + \mathbf{1}_{e'_6} + \mathbf{1}_{e'_7} + \mathbf{1}_{e'_8} - \mathbf{1}_{e'_2} - \mathbf{1}_{e'_3} - \mathbf{1}_{e'_4} = \mathbf{1}_{e'_1}.
\]
However, we claim that $E'$ lacks a Catalan cover. Similar to \cite{brakensiek2025Richness}, we construct a suitable group $H$ and an extension map $\phi : V \to H$ such that the coset generated by $\{\phi(e'_2), \hdots, \phi(e'_8)\}$ fails to contain $\phi(e'_1)$ (with other cases following by symmetry). However, we found that the Pauli group considered in \cite{brakensiek2025Richness} does not appear to suffice for constructing such an extension. Rather, we needed to construct a considerably more sophisticated group $H$ for which such a coset non-containment occurs.

A key observation we make about the Pauli group considered by \cite{brakensiek2025Richness} is that it is (2-step) \emph{nilpotent}. That is, Given a group $G$, we can define its commutator $[G, G]$ to be the subgroup generated by $\{[g,h] := g^{-1}h^{-1}g h \mid g, h \in G\}$. Every Abelian group $G$ has the property that $[G, G]$ is the trivial group $1$, but 2-step nilpotent groups (such as the Pauli Group) have the weaker property that $[[G, G], G] = 1$. 

Implicit in the study of Abelian covers is the use of \emph{free Abelian groups}, where given a list of generators $\{g_1, \hdots, g_k\}$ we consider all group elements of the form $g_1^{\alpha_1} \cdots g_k^{\alpha_k}$ where $\alpha_1, \hdots, \alpha_k \in \mathbb Z$ and $[g_i, g_j] = 1$ for all $i, j \in [k]$. Likewise, one can construct a \emph{free 2-step nilpotent group} where all elements are of the form
\[
    \prod_{i=1}^k g_i^{\alpha_i} \prod_{1 \le i < j \le k} [g_i, g_j]^{\beta_{i,j}}.
\]
with $\alpha_1, \hdots, \alpha_k, \beta_{1,2}, \hdots, \beta_{k-1,k} \in \mathbb Z$~\cite{Tao09,CMZ17}. Our main technical contribution toward proving \cref{thm:4uniformIntro} is to show that if $H$ is the free 2-step nilpotent group with $k=4$ and we define the map $\phi$ to be 
\[
    \phi(a_i) = g_1, \phi(b_i) = g_2, \phi(c_i) = g_3, \phi(d_i) = g_4, \forall i \in [4],
\]
then the coset generated by $\{\phi(e'_2), \hdots, \phi(e'_8)\}$ fails to contain $\phi(e'_1)$. The proof proceeds by a lattice-based characterization of group elements appearing in the coset generated by $\{\phi(e'_2), \hdots, \phi(e'_8)\}$. Then, due to a certain parity condition, $\phi(e'_1)$ narrowly avoids being in this coset.

For applications to CSPs (such as CSP sparsification), it is useful to know whether the group $H$ can be made finite. We found that this is possible by adding to our free nilpotent group the relations that $g_i^4 = 1$ and $[g_i,g_j]^4=1$ for all $i,j \in [4]$. Such a group has order $4^{10}$. The analysis is nearly identical, except for keeping track of the modulo $4$ constraints.

\subsection{Outline}

In \cref{sec:abelian}, we prove \cref{thm:AbelianCoverIntro}, our main structural result on Abelian covers. In \cref{sec:Catalan}, we build up some important results on the structure of our ``Catalan stack group,'' leading to a straightforward proof of \cref{thm:CatalanArity3} in the graph setting. In \cref{sec:cover-3}, we prove \cref{thm:CatalanArity3} for 3-uniform hypergraphs, by way of various techniques in group theory and algebraic topology. In \cref{sec:limit}, we prove \cref{thm:4uniformIntro} that \cref{thm:CatalanArity3} cannot extend to arity $4$. In \cref{sec:csp}, we use the Catalan polymorphisms of \cite{brakensiek2025Richness} to show that \cref{thm:CatalanArity3} implies \cref{cor:arity3Maltsev}, our primary contribution to the theory of CSP non-redundancy.

\paragraph{AI Disclosure.} GPT 5.5 Pro was used to assist with the proof of \cref{thm:4uniformIntro}. More precisely, we had a proof that \cref{eq:hypergraph-4} has an extension into an infinite nilpotent group, which GPT helped to make finite by locating a suitable quotient. GPT was also used for help in generating some of the figures in the manuscript and finding a reference in the proof of \cref{thm:AbelianCoverIntro}. All text in the manuscript was written by the authors who take full responsibility for the content of the work.

We also used the computational tools Sagemath~\cite{sagemath,passagemath} and Z3~\cite{Z3} when finding the proof for \cref{thm:4uniformIntro}. However the proof of \cref{thm:4uniformIntro} presented here is self-contained.

\paragraph{Acknowledgments.} We thank anonymous SODA reviewers for many helpful comments which improved the presentation of this manuscript.

\section{On Abelian Covers}\label{sec:abelian}

We now formally define the notion of an Abelian cover. Our notation deviates a bit from the introduction. Let $V$ be a set of vertices and let $E \subseteq \binom{V}{r}$ be a set of $r$-uniform hyperedges. The hypergraph $H = (V, E)$ has a corresponding lattice $\langle H\rangle_{\Z} \le \Z^V$ defined as follows.
\[
\langle H\rangle_{\Z} := \left\{\sum_{h \in E} a_h\fe_h : a_h \in \mathbb Z\right\},
\]
where for $h = \{v_1, \hdots, v_r\} \in E$, we define $\fe_h = \fe_{v_1} + \cdots + \fe_{v_r} \in \mathbb Z^V$, where $\fe_{v}$ is the indicator vector of the $v$th coordinate. Likewise, for any commutative ring $R$ with a unit, we may define
\[
\langle H\rangle_{R} := \left\{\sum_{h \in E} a_h\fe_h : a_h \in R\right\},
\]
We say that $H$ has an \emph{Abelian cover} if there exists $h \in E$ such that\footnote{Here, $H \setminus \{h\}$ is shorthand for the hypergraph $(V, E \setminus \{h\})$. We also conflate a hypergraph which its edge set, so $h \in H$ is equivalent to $h \in E$.}
\[
\langle H\setminus \{h\}\rangle_{\Z} = \langle H\rangle_{\Z}.
\]
Otherwise, if no such $h$ exists, we say that $H$ is \emph{Abelian non-redundant}. For intuition, one can consider defining an analogous object $\langle H \rangle_{\Z_2}$, which is the span of the indicator vectors of hyperedges over $\Z_2$. In such an object the equivalent notion of being non-redundant would imply that no hyperedge in $H$ can be generated via sum of other hyperedges over $\Z_2$; equivalently, that there is no even cover of $H$. 

Our goal is to now prove \cref{thm:AbelianCoverIntro}. To do this, it suffices understand what is the maximal size of an Abelian non-redundant set.

\begin{lemma}\label{lemma:Z-NRD}
Let $H = (V,E)$ be a hypergraph on $n$ vertices which is Abelian non-redundant. Then, $|E| \le n\cdot (1 + \log_2(r)/2)$.
\end{lemma}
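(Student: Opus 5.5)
We follow the lattice-chain strategy sketched in the technical overview, but track the volume carefully so as to get the exact constant $1+\log_2(r)/2$. Enumerate $E=\{h_1,\dots,h_m\}$ and let $\Lambda_i := \langle \{h_1,\dots,h_i\}\rangle_{\Z} \le \Z^V$, with $\Lambda_0=\{0\}$. Abelian non-redundancy gives, for every $i$, that $\fe_{h_i}\notin \langle H\setminus\{h_i\}\rangle_{\Z}\supseteq \Lambda_{i-1}$. Hence the chain
\[
\Lambda_0\subsetneq\Lambda_1\subsetneq\cdots\subsetneq\Lambda_m
\]
is strict at every step. For a lattice $\Lambda$ of rank $d$, let $\vol\Lambda$ be the $d$-dimensional volume of a fundamental parallelepiped, i.e.\ $\sqrt{\det(B^{\top}B)}$ for any basis matrix $B$, and set $\vol\Lambda_0=1$.

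We split the steps into two types.

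\textbf{Rank-increasing steps.} Suppose $\operatorname{rank}\Lambda_i=\operatorname{rank}\Lambda_{i-1}+1$. Let $\bar\Lambda_{i-1}=\mathrm{span}_{\R}(\Lambda_{i-1})\cap\Lambda_i$ be the saturation of $\Lambda_{i-1}$ in $\Lambda_i$. A basis of $\Lambda_i$ can be chosen as a basis of $\bar\Lambda_{i-1}$ together with one extra vector $w$. Then
\[
\vol\Lambda_i=\vol\bar\Lambda_{i-1}\cdot \mathrm{dist}(w,\mathrm{span}\,\Lambda_{i-1}).
\]
Moreover $\mathrm{dist}(\fe_{h_i},\mathrm{span}\,\Lambda_{i-1})$ is a positive integer multiple of $\mathrm{dist}(w,\mathrm{span}\,\Lambda_{i-1})$, since $\fe_{h_i}\equiv kw$ modulo $\bar\Lambda_{i-1}$ for some nonzero integer $k$. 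This gives
\[
\vol\Lambda_i\le \vol\Lambda_{i-1}\cdot\|\fe_{h_i}\|_2=\sqrt r\,\vol\Lambda_{i-1},
\]
using $\vol\bar\Lambda_{i-1}\le\vol\Lambda_{i-1}$. There are exactly $d:=\operatorname{rank}\Lambda_m\le n$ such steps.

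\textbf{Rank-preserving steps.} If the rank does not increase, then $\Lambda_{i-1}$ has finite index $[\Lambda_i:\Lambda_{i-1}]\ge 2$ in $\Lambda_i$ by strictness. Hence
\[
\vol\Lambda_i=\vol\Lambda_{i-1}/[\Lambda_i:\Lambda_{i-1}]\le\tfrac12\vol\Lambda_{i-1}.
\]

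\textbf{Lower bound on the final volume.} We claim $\vol\Lambda_m\ge1$, since $\Lambda_m\le\Z^V$ is an integer lattice. Indeed, $\det(B^{\top}B)$ is a positive integer for an integer basis $B$.

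\textbf{Combining.} Multiplying the step bounds gives
\[
1\le\vol\Lambda_m\le (\sqrt r)^{d}\,2^{-(m-d)}.
\]
Therefore $m\le d(1+\tfrac12\log_2 r)\le n(1+\log_2(r)/2)$, which is exactly the claimed bound. Starting from $\Lambda_0$ with volume $1$ rather than from $\Lambda_1$ is what avoids the off-by-one losses of the overview sketch.

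\textbf{Main obstacle.} The delicate step is the rank-increasing bound, where the new lattice $\Lambda_i$ may not contain $\Lambda_{i-1}$ as a saturated sublattice. We handle it through the saturation argument above: the volume factors as (volume of the saturated part) $\times$ (height of the new generator), and neither factor exceeds its naive counterpart. Everything else is standard index/volume bookkeeping; the integrality of the Gram determinant gives the lower bound on $\vol\Lambda_m$.
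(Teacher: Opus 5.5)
Your proof is correct and is essentially the paper's lattice-chain argument: the paper orders the edges so that a $\Q$-basis $\hat H$ comes first, bounds $[\langle H\rangle_{\Q}\cap\Z^{n}:\langle \hat H\rangle_{\Z}]\le(\sqrt r)^{d}$ by a cited Hadamard-type index lemma, and then counts the index doublings from the remaining edges. Your volume bookkeeping works in an arbitrary order and replaces that citation with a self-contained Gram--Schmidt plus integral-$\det(B^{\top}B)$ argument. The saturation step you flag as the main obstacle is unnecessary: when $\fe_{h_i}\notin\mathrm{span}\,\Lambda_{i-1}$, a basis of $\Lambda_{i-1}$ together with $\fe_{h_i}$ is already a basis of $\Lambda_i$, so $\vol\Lambda_i=\vol\Lambda_{i-1}\cdot\mathrm{dist}(\fe_{h_i},\mathrm{span}\,\Lambda_{i-1})$ holds exactly.
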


\begin{proof}
Let $n  := |V|$, and let us consider an $r$-partite hypergraph $H \subseteq \binom{V}{r}$ which is Abelian non-redundant of maximum size. 
In particular, these means that for every $h \in H$ that $h \notin \langle H\setminus \{h\}\rangle_{\Z}$ (i.e., that $h$ is not generated by the other hyperedges of $H$). 

First, we prove the theorem in the case when $\langle H\rangle_{\Q} \cap \Z^{n} = \Z^{n}$. 
In this case, we let $\hat{H} \subseteq H$ denote a minimal set of hyperedges such that $\langle H\rangle_{\Q}\cap \Z^{n} = \langle \hat{H}\rangle_{\Q}\cap \Z^{n} = \Z^{n}$ (i.e., $\hat{H}$ is a minimal generating set over $\Q$). Because $\Q$ is a field, the maximum size of such a generating set is bounded by $n$ (and in this case, in fact equal to $n$, as we generate $\Z^{n}$, which is a space of dimension $n$).

Next, we let $B_{\hat{H}} \in \zo^{n \times n}$ denote the matrix which is formed by those vectors $\fe_h: h \in \hat{H}$. Then, we know that (see, for instance, \cite{pinner2022lattices}), as a subgroup, the index of $\langle \hat{H}\rangle_{\Z}$ in $\Z^{n}$ is bounded as:
\begin{align}\label{eq:indexDet}
[\Z^{n}: \langle \hat{H}\rangle_{\Z}] = |\det(B_{\hat{H}})|,
\end{align}
and because each column of $\det(B_{\hat{H}})$ has at most $r$ $1$'s, we see that 
\[
|\det(B_{\hat{H}})| \leq \prod_{h \in \hat{H}} \Vert \fe_{\hat{h}}\Vert_2 \leq (\sqrt{r})^{n}.
\]

Now, let us fix an ordering of the remaining hyperedges in $H - \hat{H} = \{h_1, h_2, h_3, \dots h_{\ell}\}$, and let $\hat{H}^{(i)} = \hat{H} \cup \{h_1, \dots h_i\}$. By our assumption on $H$ being Abelian-NRD, it must be the case that for every $i \in [\ell]$ that 
\[
[\langle \hat{H}^{(i)}\rangle_{\Z}: \langle \hat{H}^{(i-1)}\rangle_{\Z}] \geq 2,
\]
as $h_i \notin \langle \hat{H}^{(i-1)}\rangle_{\Z}$, and thus $\hat{H}^{(i)}$ generates a strictly larger subgroup of $\Z^{n}$ than $\hat{H}^{(i-1)}$. In particular, this then means that 
\begin{align}\label{eq:totalDetDecrease}
[\langle H \rangle_{\Z}: \langle \hat{H}\rangle_{\Z}] = [\langle \hat{H}^{(\ell)}\rangle_{\Z}: \langle \hat{H}\rangle_{\Z}] \geq 2^{\ell}.
\end{align}

Now, using \cref{eq:indexDet} and \cref{eq:totalDetDecrease}, we see that 
\[
(\sqrt{r})^{n} \geq |\det(B_{\hat{H}})| = [\Z^{n}: \langle \hat{H} \rangle_{\Z}] = [\Z^{n}: \langle H \rangle_{\Z}] \cdot [\langle H \rangle_{\Z}: \langle \hat{H} \rangle_{\Z}] \geq 2^{\ell}.
\]
Thus, $\ell \leq \log(\sqrt{r}^{n}) = \frac{n \log(r)}{2}$, and thus the number of hyperedges in $H$ is at most $\ell + n \leq n \cdot (1 + \log(r)/2)$.

Now, we can generalize the above argument to the case when $\langle H\rangle_{\Q} \cap \Z^{n} \subsetneq \Z^{n}$. Again, we let $\hat{H} \subseteq H$ denote a minimal set of hyperedges such that $\langle H\rangle_{\Q}\cap \Z^{n} = \langle \hat{H}\rangle_{\Q}\cap \Z^{n}$. We set $d = \dim(\langle H\rangle_{\Q})$. 

Importantly, we now have the following claim:

\begin{fact}\label{clm:boundIndex}[See, for instance, Lemma 3.3 in \cite{goldberger2022practical}]
Let $a_1, \dots a_d \in \Z^{n}$ be linearly independent over $\Q$. Then, 
\[
[\langle (a_1, \dots a_d)\rangle_{\Q} \cap  \Z^{n}: \langle (a_1, \dots a_d)\rangle_{\Z}] \leq \prod_{j = 1}^d \Vert a_j \Vert_2.
\]
\end{fact}

\begin{remark}
    Note that \cite{goldberger2022practical} shows that $[\langle (a_1, \dots a_d)\rangle_{\Q} \cap  \Z^{n}: \langle (a_1, \dots a_d)\rangle_{\Z}]$ is bounded by the GCD of all $d \times d$ sub-determinants of the matrix $A \in \Z^{n \times d}$. In particular, this bounds the index by the determinant of a single such $d \times d$ submatrix, from which standard determinant bounds imply the stated fact.
\end{remark}

Now, we can apply \cref{clm:boundIndex} to our sub-hypergraph $\hat{H}$ to see that 
\[
[\langle H\rangle_{\Q} \cap \Z^{n}: \langle \hat{H}\rangle_{\Z} ] =  [\langle \hat{H}\rangle_{\Q} \cap \Z^{n}: \langle \hat{H}\rangle_{\Z} ]\leq (\sqrt{r})^d \leq (\sqrt{r})^{n}.
\]
From here, we can now repeat the same argument as above to deduce that there can be at most $\frac{n \log(r)}{2}$ additional hyperedges in the hypergraph before the hypergraph is Abelian-redundant. Thus, the number of hyperedges in an Abelian-NRD hypergraph is again bounded by $n + n \log(r) / 2$, yielding the desired lemma.
\end{proof}

\section{A Group-theoretic Perspective on Catalan Covers}\label{sec:Catalan}

Adapting the intuition from the introduction, we give now give a rigorous group-theoretic definition of Catalan covers, inspired by the observations of \cite{brakensiek2025Richness}. Recall from the introduction that for an $r$-uniform hypergraph, we maintain $r$ ``stacks'' of vertices to calculate Catalan covers. To keep track of which vertices are allowed in each stack, we assume that our $r$-uniform hypergraph is also $r$-partite. That is, we assume our variable set $V$ has a partition into $r$ sets $V_1, \hdots, V_r$. Given this vertex partition, we define the following finitely presented group, which we call an $r$-partite \emph{Catalan group}.\footnote{\cite{brakensiek2025Richness} considered a similar group (without the commutativity conditions) to prove that any CSP with a Mal'tsev extension has an infinite group extension. We discuss this and other CSP-specific details in \cref{sec:csp}.}
\[
    \G(V_1, \hdots, V_r) := \langle \{\fg_v : v \in V\} \mid \{\fg_v^2 = 1 : v \in V\} \cup \{\fg_v \fg_w = \fg_w \fg_v : v \in V_i, w \in V_j, i \neq j\}\rangle.
\]
For succinctness, we typically refer to the group as $\G$.\footnote{We typically use the \textsf{mathfrak} font (e.g., $\fg, \fw$) to denote group elements.}

Given $E \subseteq V_1 \times \cdots \times V_r$ and $r$-partite, $r$-uniform hypergraph $H \subseteq (V, E)$, we let $\langle H\rangle_{\G}$ be the subgroup generated by the words $\{\fg_h : h \in E\}$, where $\fg_h := \prod_{i=1}^r \fg_{v_i}$ if $h = (v_1, \hdots, v_r)$. We say that $H$ has a \emph{Catalan cover} if $\langle H \setminus \{h\}\rangle_{\G} = \langle H\rangle_{\G}$ for some $h \in E$. Otherwise, we say that $H$ is \emph{Catalan non-redundant}.

Before we can study the existence of Catalan covers, we first establish some basic structural results on the Catalan group $\G$. It is clear that any element of $\G$ can be written as \emph{words} of the form $\fg_{v_1} \cdots \fg_{v_k}$ where $v_1, \hdots, v_k \in V$ are vertices (possibly with repetition), but how can we easily tell if two words are equal too each other. In particular, could $\G$ be ``degenerate'' in the sense that $\G$ has only a single element? In this section, we build up a theory of $\G$ which is crucial not only for showing that $\G$ is well-defined, but also for eventually proving \cref{thm:CatalanArity3}.

\subsection{Reduced Words}\label{subsec:reduced-words}

The goal of this section is to define a notion of a \emph{reduced word} of $\G$ which lets tell when two words of $\G$ are equal or not. Unpacking the definition of $\G$, let
\[
\fF(V) := \langle \{\fg_v : v \in V\}\rangle
\]
be the free group with $V$ generators. Define
\[
  \fR(V_1, \hdots, V_r) := \langle \{\fg_v^2 : v \in V\} \cup \{[\fg_v, \fg_w] : v \in V_i, w \in V_j, i \neq j\} \rangle,
\]
where $[\fg_v, \fg_w] := \fg_v^{-1}\fg_w^{-1}\fg_v\fg_w$ is the standard group commutator. Further define $\ncl_{\fF}(\fR)$ be the smallest normal subgroup of $\fF$ containing $\fR$. The Catalan group $\G$ is then  precisely the quotient group $\G = \fF / \ncl_{\fF}(\fR)$~\cite{LyndonS01,HoltEO05}. In particular, if we seek to construct a homomorphism $\psi : \G \to G$, where $G$ is an arbitrary group, by the first isomorphism theorem, it suffices to construct a map $\widetilde{\psi} : \fF \to G$ such that $\fR \in \ker(\widetilde{\psi})$.

Despite $\G$ formally being a quotient group, we can reason about $\G$ much more concretely by looking at $\G$'s \emph{reduced words}. First, recall from the theory of free groups, a word $\fw \in \cF(V)$ is a sequence of symbols $\fw_1, \hdots, \fw_\ell \in \{\fg_v : v \in V\}  \cup \{\fg_v^{-1} : v \in V\}$. Furthermore, this word is \emph{reduced} if no two adjacent symbols of the word are inverses of each other. Each element $\fw \in \cF(V)$ is equal to a unique reduced word~\cite[Proposition 2.50]{HoltEO05}. We prove an analogous result for our Catalan group $\G$.

\begin{definition}\label{def:reduced-word-G}
A \emph{word} $\fw$ of $\G(V_1, \hdots, V_r)$ is a sequence $(\fw_1, \hdots, \fw_\ell) \in \{\fg_v : v \in V\}^{\ell}$ for some $\ell \ge 0$. We identify each word $\fw$ with its \emph{evaluation} $\fw_1\fw_2 \cdots \fw_\ell \in \G$. We let $1 = ()$ denote the trivial word corresponding to the group identity.

We say that $\fw$ is \emph{reduced} (with respect to $\G$) if (1) no two adjacent symbols are equal to each other and (2) for any $1 \le i < j \le \ell$ if $\fw_i \in V_a$ and $\fw_j \in V_b$, then $a \le b$. We say that $\fw'$ is a \emph{reduced form} of $\fw$ if $\fw' = \fw$ (in $\G$) and $\fw'$ is a reduced word.

Given a reduced word $\fw = \fw_1\cdots \fw_\ell$, its \emph{$r$-partite representation} is a partition $\fw = \fw^{(1)} \cdots \fw^{(r)}$ into subwords, where for each $i \in [r]$, $\fw^{(i)}$ has the symbols of $\fw$ inside $\{\fg_v : v \in V_i\}$. 
\end{definition}

\begin{remark}
By the relations defining $\G$, observe that each of $\fw^{(1)}, \hdots, \fw^{(r)}$ is itself reduced and that their evaluations commute with each other.
\end{remark}

\begin{example}
Let $V_1 = \{1,2\}$, $V_2 = \{3,4\}$, and $V_3 = \{5,6\}$. Then $1$, $\fg_1\fg_2\fg_3\fg_4\fg_5\fg_6$, and $\fg_1\fg_2\fg_1\fg_2\fg_4\fg_5\fg_6\fg_5$ are reduced words of $\G$. Their $r$-partite representations are equal to $(1, 1, 1)$, $(\fg_1\fg_2, \fg_3\fg_4, \fg_5\fg_6)$, and $(\fg_1\fg_2\fg_1\fg_2, \fg_4, \fg_5\fg_6\fg_5)$, respectively. However, $\fg_1^2\fg_2\fg_3$ and $\fg_3\fg_2\fg_1\fg_4$ are not reduced words of $\G$.
\end{example}

We first show that every $g \in \G$ equals at least one reduced word.

\begin{proposition}\label{prop:G-at-least-one-word}
For every $g \in \G$, there is at least one reduced word $w$ such that $g = w$.
\end{proposition}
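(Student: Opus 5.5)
Since $\G$ is a quotient of the free group $\fF(V)$, every $g \in \G$ is the image of some element of $\fF(V)$, hence of some finite sequence of symbols from $\{\fg_v^{\pm 1} : v \in V\}$. The relation $\fg_v^2 = 1$ gives $\fg_v^{-1} = \fg_v$ in $\G$, so we may replace every inverse symbol by the corresponding positive symbol. Thus $g$ equals some word $\fw = (\fw_1, \hdots, \fw_\ell) \in \{\fg_v : v \in V\}^\ell$ in the sense of \cref{def:reduced-word-G}. It remains to transform $\fw$ into a reduced word without changing its evaluation.

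We do this with two rewriting moves, each of which preserves the evaluation in $\G$.
\begin{enumerate}[label=(\roman*)]
\item \emph{Cancellation.} If $\fw_i = \fw_{i+1}$, delete both symbols; this is valid because $\fg_v^2 = 1$.
\item \emph{Sorting swap.} If $\fw_i \in V_a$ and $\fw_{i+1} \in V_b$ with $a > b$, swap the two symbols; this is valid by the commutation relations, since $a \neq b$.
\end{enumerate}

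To show the process terminates, use the potential $\Phi(\fw) = (\ell, \mathrm{inv}(\fw))$ ordered lexicographically. Here $\mathrm{inv}(\fw)$ counts pairs $i<j$ with $\fw_i \in V_a$, $\fw_j \in V_b$ and $a > b$. Move (i) strictly decreases $\ell$. Move (ii) leaves $\ell$ unchanged and decreases $\mathrm{inv}$ by exactly one, since swapping an adjacent out-of-order pair alters only that pair's relative order. The lexicographic order on $\mathbb{Z}_{\ge 0}^2$ is well-founded, so after finitely many moves we reach a word $\fw'$ to which neither move applies.

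Such a terminal $\fw'$ is reduced. No adjacent pair is out of order, so the part indices of consecutive symbols are nondecreasing; by transitivity condition (2) holds for all $i<j$. No two adjacent symbols are equal, which is condition (1). Since each move preserves the evaluation, $\fw' = \fw = g$ in $\G$.

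The only step needing care is the termination argument: swaps never increase length, and deletions may change $\mathrm{inv}$ arbitrarily but do decrease length. The lexicographic potential handles both cases. Alternatively, one can first sort $\fw$ into its $r$ blocks by part, then freely reduce each block using $\fg_v^2 = 1$. Sorting is legitimate because symbols from different parts commute, and reducing within a block does not disturb the block structure.
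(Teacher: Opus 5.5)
Your proof is correct and follows essentially the same route as the paper: replace inverse symbols using $\fg_v^2 = 1$, sort the symbols by part via adjacent swaps justified by the cross-part commutation relations, and cancel adjacent equal pairs. The paper fixes the order of these moves (a stable sort first, then cancellations, which cannot destroy sortedness), which is exactly the alternative you sketch at the end; your lexicographic potential $(\ell, \mathrm{inv})$ merely allows the two moves to be interleaved arbitrarily.
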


\begin{proof}
Any $\fg \in \G$ is a coset of words $\fw \in \cF(V)$. Pick one such $\fw = \fw_1 \cdots \fw_\ell$ arbitrarily (i.e., $\fg = \fw + \ncl_{\fF}(\fR)$). Let $\hat{\fw} = \hat{\fw}_1 \cdots \hat{\fw}_\ell$ be the corresponding word:
\[
  \hat{\fw}_i = \begin{cases}
    \fg_v & \fw_i = \fg_v\\
    \fg_v & \fw_i = \fg^{-1}_v.
  \end{cases}
\]
Since $\G$ satisfies the relations $\fg_v^2 = 1$ for all $v \in V$, we have that $\hat{\fw} + \ncl_{\fF}(\fR) = w + \ncl_{\fF}(\fR) = g$. Note that $\hat{\fw}$ (but not necessarily $\fw$) is a word equal to $g$ in the sense of \cref{def:reduced-word-G}.

Now, for every $i \in [\ell]$, assume that $\hat{\fw}_i = \fg_{v_i}$ and that $v_i \in V_{a_i}$. Consider the sequence of pairs $((a_1, \hat{\fw}_1), \hdots, (a_\ell, \hat{\fw}_\ell))$ and let $\sigma \in S_\ell$ correspond to the unique stable sort $((a_{\sigma(1)}, \hat{\fw}_{\sigma(1)}), \hdots,$ $(a_{\sigma(\ell)}, \hat{\fw}_{\sigma(\ell)}))$. That is, for any $1 \le i < j \le n$ we have that $a_{\sigma(i)} \le a_{\sigma(j)}$ and furthermore if $a_{\sigma(i)} = a_{\sigma(j)}$ then $\sigma(i) < \sigma(j)$. It is well-known (e.g., \cite[Exercise 2-2]{CLRS09}) that a stable sort can be done using only adjacent swaps. By stability, each swap is only going to replace $(\hat{\fw}_i, \hat{\fw}_j)$ with $(\hat{\fw}_j, \hat{\fw}_i)$ if $a_i > a_j$. Since $a_i \neq a_j$, we have that $[\fg_{v_i}, \fg_{v_j}] = 1$ is a relation of $\G$, so this swap preserves the value of the word in $\G$. Call the stably sorted word $\tilde{\fw}$ where $\tilde{\fw}_i = \hat{\fw}_{\sigma(i)}$ for all $i \in [\ell]$.

This word $\tilde{\fw}$ satisfies property (2) of being reduced, but it may not satisfy property (1). To finish, if two adjacent symbols of $\tilde{\fw}$ are equal, we may delete both without changing the value the word as $\fg_v^2 = 1$ for all $v \in V$. Since $\tilde{\fw}$ has finite length, this operation is only done finitely many times. When no more cancellations are possible, our word satisfies both (1) and (2), where (2) holds because deleting symbols cannot remove the sorted order.
\end{proof}

\begin{remark}\label{rem:naive-reduction}
Given a word $\fw$ satisfying (2) but not necessarily (1) of \cref{def:reduced-word-G}, we can apply the procedure discussed in the final paragraph of the proof of \cref{prop:G-at-least-one-word} to get a reduced word $\fw'$. We call this $\fw'$ the \emph{naive} reduction of $\fw'$. As we soon prove in \cref{thm:G-reduced-not-equal}, this is the \emph{only} reduced word which evaluates to the same group element was $\fw$.
\end{remark}

We now show this reduced representation is unique. We begin by showing that non-empty distinct words cannot equal the identity.

\begin{proposition}\label{prop:G-reduced-not-1}
Every non-empty reduced word of $\G$ does not evaluate to the identity.
\end{proposition}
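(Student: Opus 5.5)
We will prove this by exhibiting a homomorphism from $\G$ to a concrete group in which every non-empty reduced word has non-identity image. The natural target is the direct product $\prod_{i=1}^r \Z_2^{*V_i}$, where $\Z_2^{*V_i}$ denotes the free product of copies of $\Z_2 = \langle \fg_v \mid \fg_v^2\rangle$ for $v \in V_i$. This is a right-angled Coxeter group in which generators from different parts commute and generators within a part satisfy no relations beyond being involutions.

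First we define $\widetilde{\psi} : \fF(V) \to \prod_i \Z_2^{*V_i}$ by sending $\fg_v$, for $v \in V_i$, to the element whose $i$th coordinate is the generator $v$ of $\Z_2^{*V_i}$ and whose other coordinates are the identity. Each relator $\fg_v^2$ maps to the identity. Each commutator $[\fg_v,\fg_w]$ with $v \in V_i$, $w \in V_j$, $i\neq j$ also maps to the identity, because the two images are supported in different coordinates of the direct product. So $\fR \subseteq \ker\widetilde{\psi}$, hence $\ncl_{\fF}(\fR)\subseteq\ker\widetilde\psi$, and by the remark after the definition of $\G$ this induces a homomorphism $\psi:\G\to\prod_i\Z_2^{*V_i}$.

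Next, let $\fw$ be a non-empty reduced word with $r$-partite representation $\fw^{(1)}\cdots\fw^{(r)}$. Its image under $\psi$ is the tuple $(\fw^{(1)},\dots,\fw^{(r)})$, with each $\fw^{(i)}$ read as a word in $\Z_2^{*V_i}$. Since $\fw$ is non-empty, some $\fw^{(i)}$ is non-empty. By condition (1) of \cref{def:reduced-word-G}, no two adjacent symbols of $\fw$ are equal. By condition (2), the symbols of $\fw^{(i)}$ form a contiguous block of $\fw$, so adjacent symbols within $\fw^{(i)}$ are adjacent in $\fw$ and hence also distinct. Therefore $\fw^{(i)}$ is an alternating word $v_1v_2\cdots v_k$ with $k\ge1$ and $v_j\neq v_{j+1}$.

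The main remaining step is the normal form theorem for free products: a non-empty word whose consecutive letters come from different free factors, with each letter non-trivial in its factor, is non-trivial in the free product. We would cite this standard result (e.g.\ \cite{LyndonS01}). For self-containment we would instead verify it via the usual van der Waerden action of $\Z_2^{*V_i}$ on the set $S$ of alternating words with distinct adjacent letters. Each generator $v$ acts on $S$ by prepending $v$ to a word, or deleting the first letter if that letter is already $v$. This action is an involution, so it defines a homomorphism from $\Z_2^{*V_i}$ to $\mathrm{Sym}(S)$. Applying $\fw^{(i)}$ to the empty word yields $\fw^{(i)}$ itself, which is not empty. Hence $\fw^{(i)}\neq1$, so $\psi(\fw)\ne 1$, and therefore $\fw\neq1$ in $\G$. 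The only potential subtlety is checking that the action is well defined and that evaluating the letters from right to left reproduces the word exactly, both of which are routine.
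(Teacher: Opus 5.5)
Your proof is correct, but it goes by a different route from the paper. The paper does not use free products. For the given reduced word $\fw$ it builds a \emph{finite} permutation representation of $\G$ on $L=\bigcup_i \{i\}\times[\ell_i+1]$. There $\fg_v$ (for $v\in V_i$) swaps $(i,a)$ and $(i,a+1)$ for each position $a$ at which $v$ occurs in $\fw^{(i)}$. Reducedness makes these disjoint transpositions, distinct parts have disjoint supports, and $\fw$ sends $(i,\ell_i+1)$ to $(i,1)$.

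This is essentially your van der Waerden action restricted to the $\ell_i+1$ suffixes of $\fw^{(i)}$. State $(i,a)$ plays the role of the suffix starting at position $a$, and any move that would leave this set of suffixes is replaced by a fixed point. That truncation is what makes the target finite.

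What the paper's version buys:
\begin{itemize}
\item Residual finiteness of $\G$ (\cref{prop:G-profinite}) follows immediately.
\item No normal form theorem needs to be cited.
\end{itemize}

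What yours buys:
\begin{itemize}
\item A single homomorphism $\G\to\prod_i\Z_2^{*V_i}$ that does not depend on the word.
\item This map is clearly surjective, and with \cref{prop:G-at-least-one-word} your argument shows its kernel is trivial. So it identifies $\G$ with that direct product.
\item \cref{thm:G-reduced-not-equal} then follows at once from uniqueness of normal forms in each free factor. Two reduced words with the same image have the same blocks $\fw^{(i)}$, so by condition (2) they coincide. This bypasses the paper's minimal-counterexample argument.
\end{itemize}
The cost is that residual finiteness needs an extra ingredient, for example that free products of finite groups are residually finite.
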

\begin{proof}

Let $\fw$ be a non-empty reduced word. To prove that $\fw$ cannot evaluate to the identity, we construct a (finite) group $G$ and a homomorphism $\psi : \fF(V) \to G$ such that $\fR(V_1, \hdots, V_r) \subseteq \ker(\psi)$ but $\psi(\fw) \neq 1$ (where $\fw$ is interpreted as a word of $\fF$).

Our construction is inspired by \cite[Exercise 2.7]{CSC23}. Let $\fw$ be a non-empty reduced word and let its $r$-partite representation be
\[
\fw = \fw^{(1)} \fw^{(2)} \cdots \fw^{(r)} = \fw_1^{(1)} \cdots \fw_{\ell_1}^{(1)} \fw_1^{(2)} \cdots \fw_{\ell_2}^{(2)} \cdots \fw_1^{(r)} \cdots \fw_{\ell_r}^{(r)}.
\]
For each $i \in [r]$ and $j \in [\ell_i]$, let $v^{(i)}_j \in V_i$ be such that $\fw_j^{(i)} = \fg_{v^{(i)}_j}$. Given $v \in V_i$, let $I_v$ be the set of indices $j \in [\ell_i]$  such that $v = v_j^{(i)}$.

We can now construct our group $G$ and the homomorphism $\psi : \fF(V) \to G$. Let $L$ be the following label set
\[
  L = \bigcup_{i=1}^r \{i\} \times [\ell_i + 1].
\]
We then let $G = S_L$, the symmetric group on the set $\ell$. For every $i \in [r]$ and $v \in V_i$, let $\sigma_v \in S_L$ be the following permutation
\[
  \sigma_v(j, a) = \begin{cases}
    (j, a+1) & j = i\text{ and }a \in I_v\\
    (j, a-1) & j = i\text{ and }a-1 \in I_v\\
    (j, a) & \text{otherwise}.
    \end{cases}
\] permutation which swaps $V_i$. Observe that since $w$ is reduced, no two adjacent symbols of $w$ are equal. Thus, the difference between consecutive elements of $I_v$ is at least $2$. Thus, $\sigma_v$ defined a product of transpositions, so $\sigma_v^2 = 1$. Furthermore, if $v \in V_i$ and $v' \in V_j$ with $i \neq j$, then the elements of $L$ permuted by $\sigma_v$ and $\sigma_{v'}$ are disjoint. Thus, $[\sigma_v, \sigma_{v'}] = 1$. In other words, we have that the map $\psi : \fF(V) \to S_L$ for which $\psi(\fg_v) = \sigma_v$ is a homomorphism for which $\fR \in \ker(\psi)$. Therefore, $\psi$ is a well-defined homomorphism from $\G$ to $S_L$.

To finish, it remains to show that $\psi(\fw) \neq 1$. Since $w \neq 1$, there is $i \in [r]$ for which $\fw^{(i)} \neq 1$. Now observe that
\begin{align*}
  \psi(\fw)(i, \ell_i+1) &= \psi(\fw^{(i)})(i, \ell_i+1)\\
                       &= (\sigma_{v_1^{(i)}} \circ \cdots \circ \sigma_{v_{\ell_i}^{(i)}})(i, \ell_I+1)\\
                       &=(\sigma_{v_1^{(i)}} \circ \cdots \circ \sigma_{v_{\ell_i-1}^{(i)}})(i, \ell_I)\\
                       &\cdots = \sigma_{v_1^{(i)}}(i, 2) = (i, 1).
\end{align*}
Therefore, $\psi(\fw) \neq 1$, as desired.
\end{proof}

Before proving our main result of this subsection that no two reduced words are equal (\cref{thm:G-reduced-not-equal}), we use \cref{prop:G-reduced-not-1} to show that $\G$ is \emph{residually finite} (see \cite[Definition 2.1.1]{CSC23}). That is, for every $\fg \in \G \setminus \{1\}$, there exists a finite group $G$ and a homomorphism $\phi : \G \to G$ such that $\phi(\fg) \neq 1$.

\begin{corollary}\label{prop:G-profinite}
$\G$ is residually finite.
\end{corollary}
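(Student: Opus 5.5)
The plan is to read residual finiteness directly off the proof of \cref{prop:G-reduced-not-1}, which already produced a finite witness group. Let $\fg \in \G \setminus \{1\}$. By \cref{prop:G-at-least-one-word}, there is a reduced word $\fw$ with $\fw = \fg$ in $\G$. Since $\fg \neq 1$, $\fw$ is non-empty, because the empty word evaluates to the identity.

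We then rerun the construction from the proof of \cref{prop:G-reduced-not-1} on this $\fw$. It gives the finite label set $L = \bigcup_{i=1}^r \{i\} \times [\ell_i+1]$ and the permutations $\sigma_v \in S_L$. Each $\sigma_v$ is a product of disjoint transpositions, so $\sigma_v^2 = 1$. Permutations $\sigma_v, \sigma_{v'}$ for $v,v'$ in different parts move disjoint sets of points, so they commute. Hence the map $\fg_v \mapsto \sigma_v$ on $\fF(V)$ kills $\fR(V_1,\hdots,V_r)$, and therefore kills its normal closure $\ncl_{\fF}(\fR)$. It thus descends to a homomorphism $\phi : \G \to S_L$, and $S_L$ is finite with $|S_L| = |L|!$.

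The same computation as in \cref{prop:G-reduced-not-1} then finishes the argument. Choose $i$ with $\fw^{(i)} \neq 1$; then $\phi(\fw)$ sends $(i,\ell_i+1)$ to $(i,1)$, so $\phi(\fg) = \phi(\fw) \neq 1$.

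The only point needing care is that $\phi$ depends on $\fg$, through its reduced word. This is exactly what the definition of residual finiteness allows, since it only asks for one finite quotient per nontrivial element. There is no real obstacle here; the corollary is essentially a restatement of what the proof of \cref{prop:G-reduced-not-1} already established, now phrased with the finiteness of $S_L$ made explicit.
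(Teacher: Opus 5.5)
Your proposal is correct and matches the paper's proof. Like the paper, you take a non-empty reduced word for $\fg$ via \cref{prop:G-at-least-one-word}, then reuse the finite permutation representation $\G \to S_L$ built in the proof of \cref{prop:G-reduced-not-1}; your re-verification of the relations and of $\phi(\fw)(i,\ell_i+1) = (i,1)$ only spells out details the paper leaves implicit.
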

\begin{proof}
For every $\fg \in \G \setminus \{1\}$, by \cref{prop:G-at-least-one-word} there exists a reduced word $\fw$ equal to $\fg$. Since $\fw \neq 1$, by the proof of \cref{prop:G-reduced-not-1}, we have there is a homomorphism $\phi : \G \to G$ (where $G$ is finite) such that $\phi(\fw) = \phi(\fg) \neq 1$.
\end{proof}

\begin{theorem}\label{thm:G-reduced-not-equal}
No two distinct reduced words of $\G$ evaluate to equal elements $\G$.
\end{theorem}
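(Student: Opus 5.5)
Suppose $\fw$ and $\fw'$ are distinct reduced words that evaluate to the same element of $\G$. The plan is to build from them a single non-empty reduced word that evaluates to the identity, which contradicts \cref{prop:G-reduced-not-1}. Take the $r$-partite representations $\fw = \fw^{(1)}\cdots\fw^{(r)}$ and $\fw' = \fw'^{(1)}\cdots\fw'^{(r)}$. Since the words are distinct, there is some $i \in [r]$ with $\fw^{(i)} \neq \fw'^{(i)}$ as sequences of symbols.

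First I reduce to a single part by projecting. For each $i$, define $\pi_i : \fF(V) \to \G$ by $\pi_i(\fg_v) = \fg_v$ if $v \in V_i$ and $\pi_i(\fg_v) = 1$ otherwise. Every generator of $\fR$ lies in $\ker \pi_i$:
\begin{itemize}
\item squares go to $\fg_v^2 = 1$ or to $1$;
\item a commutator $[\fg_v,\fg_w]$ with $v,w$ in different parts has at least one of $v,w$ outside $V_i$, so it maps to a commutator with a trivial entry, which is $1$.
\end{itemize}
So $\pi_i$ descends to an endomorphism of $\G$. Applying it to $\fw = \fw'$ gives $\fw^{(i)} = \fw'^{(i)}$ in $\G$ for every $i$. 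We may therefore fix an $i$ where the two sequences differ and assume both words use only symbols from $V_i$. Condition (2) of \cref{def:reduced-word-G} is then vacuous, and reducedness just means no two adjacent symbols are equal.

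Now write $\fu = \fw^{(i)}$ and $\fu' = \fw'^{(i)}$. Since $\fg_v^2 = 1$, the inverse of $\fu'$ is its reversal, so $\fu\,\rev(\fu')$ evaluates to $1$.
\begin{itemize}
\item Strip the longest common prefix of $\fu$ and $\fu'$. Because the cancelled symbols are exactly this common prefix, $\fu\,\rev(\fu')$ equals $\fu_{\mathrm{rest}}\,\rev(\fu'_{\mathrm{rest}})$ in $\G$, so the latter also evaluates to $1$.
\item Since $\fu \neq \fu'$, the remainders are not both empty.
\item If exactly one remainder is empty, the other is a non-empty reduced word equal to $1$, which contradicts \cref{prop:G-reduced-not-1}.
\item If both are non-empty, their first symbols differ by maximality of the prefix. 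Hence the concatenation $\fu_{\mathrm{rest}}\,\rev(\fu'_{\mathrm{rest}})$ has no equal adjacent symbols at the junction, and none elsewhere, since each piece is reduced. It is therefore a non-empty reduced word evaluating to $1$, again contradicting \cref{prop:G-reduced-not-1}.
\end{itemize}

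The main point needing care is the well-definedness of the projection $\pi_i$ on $\G$. Without it, the comparison cannot be localized to a single part, and the concatenation $\fw\,\rev(\fw')$ would violate the sorting condition (2). Checking that $\pi_i$ kills $\fR$ is routine, so beyond that the argument is just the free-group-style cancellation above.
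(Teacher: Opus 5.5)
Your strategy is sound, but the last step fails as written because the concatenation is in the wrong order for prefix-stripping. In $\fu_{\mathrm{rest}}\,\rev(\fu'_{\mathrm{rest}})$ the junction is between the \emph{last} symbol of $\fu_{\mathrm{rest}}$ and the \emph{last} symbol of $\fu'_{\mathrm{rest}}$, since the reversal begins with the last symbol. Maximality of the common prefix only tells you that the \emph{first} symbols differ. So ``first symbols differ, hence no equal adjacent symbols at the junction'' does not follow. For distinct $x,y,z\in V_i$, the words $\fu=\fg_x\fg_z$ and $\fu'=\fg_y\fg_z$ have empty common prefix, and your construction yields $\fg_x\fg_z\fg_z\fg_y$, which is not reduced. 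The same slip appears one line earlier. Writing $p$ for the common prefix, $\fu\,\rev(\fu')=p\,\fu_{\mathrm{rest}}\,\rev(\fu'_{\mathrm{rest}})\,\rev(p)$, so the prefix does not cancel but conjugates the remainder. Your conclusion that the remainder evaluates to $1$ survives only because a conjugate of $1$ is $1$.

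The repair is a one-line change. Use $\rev(\fu')\,\fu$, which evaluates to $(\fu')^{-1}\fu=1$ and equals $\rev(\fu'_{\mathrm{rest}})\,\rev(p)\,p\,\fu_{\mathrm{rest}}=\rev(\fu'_{\mathrm{rest}})\,\fu_{\mathrm{rest}}$ in $\G$. Now the junction is between the first symbols of $\fu'_{\mathrm{rest}}$ and $\fu_{\mathrm{rest}}$, which do differ, and your case analysis goes through verbatim. Alternatively, keep $\fu\,\rev(\fu')$ and strip the longest common \emph{suffix}. The corrected order is the one the paper uses, $\rev(\fw'^{(i)})\fw^{(i)}$.

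With that fix your proof is correct, and it takes a different route from the paper's. The paper does not project. It forms $\rev(\fw'^{(1)})\fw^{(1)}\cdots\rev(\fw'^{(r)})\fw^{(r)}$ over all parts at once; this word is automatically sorted by part and equals $(\fw')^{-1}\fw=1$ by the commutation relations. It then handles a collision at a junction by deleting the common first symbol of part $i$ from both words, contradicting minimality of $\ell+\ell'$. Your projection $\G\to\G$ onto part $i$ is well defined, since every cross-part commutator has an entry outside $V_i$. It reduces everything to a single part, where the whole common prefix is stripped at once and no minimal-counterexample induction is needed. The paper's version avoids the extra homomorphism at the cost of that induction. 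Both ultimately rest on \cref{prop:G-reduced-not-1}.
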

\begin{proof}
We give a proof by construction. Let $\fw = \fw_1\cdots \fw_\ell$ and $\fw' = \fw'_1 \cdots \fw'_{\ell'}$ be distinct reduced words of $\G$ such that their evaluations are equal. Further assume that among counterexamples that $\ell+\ell' \ge 1$ is minimal. Our goal is to show there is a non-empty reduced word equal to $1$ in $\G$, which contradicts \cref{prop:G-reduced-not-1}.

Following \cref{def:reduced-word-G}, since $\fw$ and $\fw'$ are reduced, write $\fw = \fw^{(1)} \cdots \fw^{(r)}$ and $\fw' = \fw'^{(1)} \cdots \fw'^{(r)}$ for their $r$-partite representations. Note that $\fw^{(1)}, \hdots, \fw^{(r)}$ are reduced words as they are substrings of a reduced word.

As one piece of additional notation, for all $i \in [r]$, we let $\rev(\fw'^{(i)})$ denote $\fw'^{(i)}$ written in reverse. Since $\fw'^{(i)}$ is reduced and all symbols of $\fw'^{(i)}$ lie in $\{\fg_{v} : v \in V_i\}$, we have that $\rev(\fw'^{(i)})$ is also a reduced word. Also observe that the evaluation of $\rev(\fw'^{(i)})$ is precisely the inverse of $\fw'^{(i)}$. Now consider the word
\[
  \hat{\fw} = \rev(\fw'^{(1)})\fw^{(1)}\rev(\fw'^{(2)})\fw^{(2)}\cdots \rev(\fw'^{(r)})\fw^{(r)}.
\]
By the commutativity rules of $\G$, we have that $\hat{\fw}$ equals $(\fw')^{-1}\fw = 1$ in $\G$. However, we claim that $\hat{\fw}$ is reduced. Since condition (2) of \cref{def:reduced-word-G} is enforced by $\G$, it suffices to check that condition (1) holds. Since each subword (i.e., $\fw^{(i)}$ or $\fw'^{(i)}$) is reduced, the only potential violation of (1) is between subwords. More precisely, the first symbol of $\fw^{(i)}$ and $\fw'^{(i)}$ equal to some $\fg_{v}$ (with $v \in V_i$) for some $i \in [r]$. Let $\tilde{\fw}$ and $\tilde{\fw}'$ be the words $\fw$ and $\fw'$, respectively, except this first occurrence of $\fg_v$ is deleted from both. By the commutativity rules of $\G$, we can see that in $\G$ we have that
\[
  \tilde{\fw} = \fg_v\fw = \fg_v\fw' = \tilde{\fw}'.
\]
Furthermore, since $\fw$ and $\fw'$ are distinct reduced words, we have that $\tilde{\fw}$ and $\tilde{\fw}'$ are also distinct reduced words. This violates that the pair $(\fw, \fw')$ is a minimal counterexample. In summary, we have found a non-empty reduced word $\hat{\fw}$ which is equal to $1$ in $\G$, this contradicts \cref{prop:G-reduced-not-1}.\end{proof}

\subsection{Parity Homomorphism}\label{subsec:parity}

Recall from the introduction that a ``stack'' of hyperedges $h_1, \hdots, h_k \in E$ can be modeled as $\fg_{h_1} \cdots \fg_{h_k}$. In general, such products cannot produce every element of $\G$. To keep track of which elements can be generated, we look at a family of homomorphisms related to $\G$ which we call \emph{parity homomorphisms}. For any $i \in [r]$, consider the unique homomorphism $\pi_i : \cF(V) \to \Z_2$ such that
\[
  \pi_i(\fg_v) := \begin{cases}
    1 & v \in V_i\\
    0 & \text{otherwise}.
  \end{cases}
\]
By inspection, $\fR(V_1, \hdots, V_r) \subseteq \ker(\pi_i)$ (e.g., $\pi_i(\fg_v^2) = 2\pi_i(\fg_v) = 0$). Therefore, with a slight abuse of notation, $\pi_i : \G \to \mathbb Z_2$ is a homomorphism. Given a subset $S \subseteq [r]$, we define $\pi_S : \G \to \Z_2^S$ via $\pi_S(x) = (\pi_i(x) : i \in S)$. We show that $\pi_{[r]}$ is an epimorphism (i.e., a surjective homomorphism).

\begin{proposition}\label{prop:pi-surjective}
  Assuming $V_1, \hdots, V_r$ are nonempty, the map $\pi_{[r]} : \G \to \Z_2^r$ is surjective.
\end{proposition}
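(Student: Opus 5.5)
Since $\pi_{[r]}$ is already known to be a homomorphism $\G \to \Z_2^r$ (its kernel on $\fF(V)$ contains $\fR(V_1,\hdots,V_r)$, as noted just above), its image is a subgroup of $\Z_2^r$. It therefore suffices to show that this image contains a generating set of $\Z_2^r$. We will show that it contains each standard basis vector $e_1, \hdots, e_r$, where $e_i$ has a $1$ in coordinate $i$ and $0$ elsewhere.

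Fix $i \in [r]$. Because $V_i$ is nonempty by hypothesis, we may pick some vertex $v \in V_i$. By the definition of the parity maps, $\pi_i(\fg_v) = 1$, since $v \in V_i$. For every $j \neq i$ we have $\pi_j(\fg_v) = 0$, because the parts are disjoint and so $v \notin V_j$. Hence $\pi_{[r]}(\fg_v) = e_i$.

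Now take an arbitrary target $x = (x_1, \hdots, x_r) \in \Z_2^r$. For each $i$, let $v_i \in V_i$ be the vertex chosen above, and set $\fg := \prod_{i : x_i = 1} \fg_{v_i}$. Since $\pi_{[r]}$ is a homomorphism into an abelian group, $\pi_{[r]}(\fg) = \sum_{i : x_i = 1} e_i = x$. This proves surjectivity.

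No step here is a real obstacle. The only thing to check is that $\pi_{[r]}$ is well defined on $\G$ rather than merely on $\fF(V)$, and this follows from the kernel containment already stated.
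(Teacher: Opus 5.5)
Your proof is correct and is the same as the paper's: the paper also picks some $v_i \in V_i$ for each $i$ and checks that $\prod_{i=1}^r \fg_{v_i}^{b_i}$ maps to $b$ under $\pi_{[r]}$. Your opening remark that the image is a subgroup is harmless but not needed, since you construct an explicit preimage for every target anyway.
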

\begin{proof}
Pick $v_1 \in V_1, \hdots, v_r \in V_r$ arbitrarily. For any $b \in \Z_2^r$, consider the word $\fw_b := \prod_{i=1}^{r} \fg_{v_i}^{b_i}$. It is straightforward to verify that $\pi_{[r]}(\fw_b) = b$. Thus, the coset $\fw_b + \ncl_{\fF}(\fR) \in \G$ also maps to $b$, so $\pi_{[r]}$ is surjective.
\end{proof}

As an application of this proposition, let $H = (V,E)$ with $E = V_1 \times \cdots \times V_r$ be the complete $r$-partite hypergraph. Consider the group
\[
  \fV := \langle H\rangle_{\G}.
\]
By definition $\fV$ is a subgroup of $\G$. We now show that $\fV$ is a normal subgroup of finite index.

\begin{proposition}
We have that
\begin{align}
  \fV = \{g \in \G : (\pi_1+\pi_2)(g) = 0, (\pi_1+\pi_3)(g) = 0, \hdots, (\pi_1+\pi_r)(g) = 0\},\label{eq:V-eq}
\end{align}
is a normal subgroup of $\G$. Furthermore, $\G / \fV \cong \Z_2^{r-1}$.
\end{proposition}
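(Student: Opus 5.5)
Let $K$ denote the right-hand side of \cref{eq:V-eq}, i.e., the kernel of the homomorphism $\rho := (\pi_1+\pi_2, \hdots, \pi_1+\pi_r) : \G \to \Z_2^{r-1}$. Since $K$ is a kernel, it is automatically normal. So the plan is to show $\fV = K$ by proving both inclusions, and then to show that $\rho$ is surjective. The first isomorphism theorem then gives $\G/\fV \cong \Z_2^{r-1}$.

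For $\fV \subseteq K$, it suffices to check the generators, since $K$ is a subgroup. For $h = (v_1,\hdots,v_r) \in V_1\times\cdots\times V_r$ we have $\pi_i(\fg_h) = 1$ for every $i$, so $(\pi_1+\pi_j)(\fg_h) = 0$ for every $j$. Surjectivity of $\rho$ follows from \cref{prop:pi-surjective}. Indeed, $\pi_{[r]}$ is onto $\Z_2^r$, and the linear map $b \mapsto (b_1+b_2,\hdots,b_1+b_r)$ from $\Z_2^r$ to $\Z_2^{r-1}$ is onto (take $b_1=0$).

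The main step is $K \subseteq \fV$. By \cref{prop:G-at-least-one-word}, each $g \in K$ is a word $\fg_{u_1}\cdots\fg_{u_\ell}$. Fix base vertices $x_i \in V_i$. I would use two identities in $\fV$, both checked by the commutation relations of $\G$.
\begin{itemize}
\item For $v \in V_i$, the element $\fg_{h}\fg_{h'}$, where $h$ is the edge $(x_1,\hdots,x_r)$ with its $i$th coordinate replaced by $v$ and $h'=(x_1,\hdots,x_r)$, equals $\fg_v\fg_{x_i} \in \fV$. All the other factors commute and cancel since $\fg_{x_j}^2=1$.
\item Conjugates $\fg_{x}\fV\fg_{x}$ should be handled by first showing that $\fV$ is normal. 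Since $\fV$ contains all the $\fg_v\fg_{x_i}$, the element $\fg_v = (\fg_v\fg_{x_i})\fg_{x_i}$ lies in the coset $\fV\fg_{x_i}$.
\end{itemize}
Hence every word is congruent modulo $\fV$ (say, as right cosets) to a word in the letters $\fg_{x_1},\hdots,\fg_{x_r}$ only. Because these letters commute pairwise, that word is $\fg_{x_1}^{b_1}\cdots\fg_{x_r}^{b_r}$ with $b=\pi_{[r]}(g)$. If $g\in K$, then all the $b_i$ are equal. If they are all $0$, the word is $1$. If they are all $1$, the word is $\fg_{(x_1,\hdots,x_r)} \in \fV$.

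The obstacle is the coset bookkeeping: replacing letters one at a time inside a product requires knowing that $\fV$ is normal, or proceeding carefully. I would avoid needing normality in advance. Write $g = \prod_k \fg_{u_k}$ and replace each $\fg_{u_k}$ with $u_k\in V_{i_k}$ by $(\fg_{u_k}\fg_{x_{i_k}})\fg_{x_{i_k}}$. Then move the accumulated $\fg_{x}$ letters rightward by conjugating the $\fV$-elements they pass. It therefore suffices to show that $\fV$ is closed under conjugation by each $\fg_{x_j}$. For $\fg_h$ with $h=(v_1,\hdots,v_r)$, the conjugate $\fg_{x_j}\fg_h\fg_{x_j}$ equals $\fg_h$ with its $j$th factor replaced by $\fg_{x_j}\fg_{v_j}\fg_{x_j}$. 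This element can be written as $\fg_{h_1}\fg_{h_2}\fg_{h_3}$, where $h_1,h_3$ agree with $h$ off coordinate $j$ and have $x_j$ in coordinate $j$, and $h_2=h$. The extra letters in the other coordinates appear three times and collapse to one by the commutation relations. So $\fV$ is normal, and the reduction above goes through.
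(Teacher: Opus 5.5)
Your proof is correct. It agrees with the paper on the easy parts: checking generators for $\fV \subseteq K$, surjectivity via \cref{prop:pi-surjective}, and the first isomorphism theorem.

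For the main inclusion $K \subseteq \fV$ you take a genuinely different route. The paper argues by a minimal-length counterexample on reduced words. Given a reduced $\fw \in K \setminus \fV$, it multiplies by one edge word $\fg_h$ (when every part $\fw^{(i)}$ is nonempty) or by two edge words $\fu^{(2)}\fu^{(1)}$ (when some part is empty, so all parities vanish). This strips the leading letters of each part and yields a shorter counterexample, so it relies on the normal-form machinery of \cref{subsec:reduced-words}.

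You instead argue purely with cosets. The identity $\fg_h\fg_{h'} = \fg_v\fg_{x_i}$ places each generator $\fg_v$, $v \in V_i$, in $\fV\fg_{x_i}$. The three-edge identity shows $\fV$ is stable under conjugation by the base letters. Together these put every $g$ in $\fV\,\fg_{x_1}^{b_1}\cdots\fg_{x_r}^{b_r}$ with $b = \pi_{[r]}(g)$. For $g \in K$ all $b_i$ agree, so the representative is $1$ or $\fg_{(x_1,\hdots,x_r)} \in \fV$.

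What your approach buys:
\begin{itemize}
\item It is shorter.
\item It needs nothing about reduced words or their uniqueness. Your appeal to \cref{prop:G-at-least-one-word} is unnecessary, since every element is trivially a product of the involutive generators.
\item It shows directly that $\fV$ is normal, rather than only as a consequence of $\fV = K$.
\item It makes transparent why the quotient is $\Z_2^{r-1}$: it is generated by $r$ commuting involutions whose product is trivial.
\end{itemize}

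One small presentational point: you verify closure only under conjugation by the $\fg_{x_j}$. That is indeed all your rightward-shifting bookkeeping uses, but you then assert that $\fV$ is normal. This follows either by running the same identity $\fg_u\fg_h\fg_u = \fg_{h_1}\fg_h\fg_{h_1}$ with any $u \in V_j$ in place of $x_j$, or from $\fg_u \in \fV\fg_{x_j}$. It is worth saying explicitly. Your draft also wavers between ``first show normality'' and ``avoid normality in advance'', which could be streamlined.
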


\begin{proof}
Consider each generator $\fg_h$ of $\fV$, where $h = (v_1, \hdots, v_r) \in V_1 \times \cdots \times V_r$. Since $\fg_h = \fg_{v_1}\fg_{v_2} \cdots \fg_{v_r}$, we have that $\pi_{[r]}(\fg_h) = (1, 1, \hdots, 1) \in \Z_2^r$. Thus, $(\pi_1+\pi_2)(\fg_h) = 0, (\pi_1+\pi_3)(\fg_h) = 0, \hdots, (\pi_1+\pi_r)(\fg_h) = 0$. Thus, the LHS of \cref{eq:V-eq} is contained in the RHS.

To show the converse, we use the framework of reduced words in $\G$ established in \cref{subsec:reduced-words}. We give a proof by contradiction consider an arbitrary reduced word $\fw \in \G$ with $r$-partite representation $\fw = \fw^{(1)} \cdots \fw^{(r)}$ such that $(\pi_1+\pi_2)(\fw) = 0, (\pi_1+\pi_3)(\fw) = 0, \hdots, (\pi_1+\pi_r)(\fw) = 0$ but $\fw \not\in \fV$. We assume that $\fw$ has minimal length among counterexamples.

We break up our analysis into cases. If $\fw^{(i)} \neq 1$ for all $i \in [r]$, then consider $h \in V_1 \times \cdots \times V_r$, where $\fw^{(i)}_1 = \fg_{h_i}$ for all $i \in [r]$. Then, since $\fw \not\in \fV$, we have that $\fg_h\fw \not\in \fV$. However, the unique form of $\fg_h\fw$ has $r$ fewer symbols, a contradiction.

Otherwise, assume that $\fw^{(i)} = 1$ for some $i$. In particular, $\pi_i(\fw) = 0$. Thus, since  $(\pi_1+\pi_2)(\fw) = 0, (\pi_1+\pi_3)(\fw) = 0, \hdots, (\pi_1+\pi_r)(\fw) = 0$, we may deduce that $\pi_i(\fw) = 0$ for all $i \in [r]$. The case $w=1$ is an obvious contradiction, so assume that $\fw^{(i)} \neq 1$ for some $i \in [r]$. Since $\pi_i(\fw) = 0$, having $\fw^{(i)} \neq 1$ implies $\fw^{(i)}$ has at least two symbols in its reduced form.

Pick $v_1 \in V_1, \hdots, v_r \in V_r$ arbitrarily. Define words $\fu^{(1)} = (\fu^{(1)}_1, \hdots, \fu^{(1)}_r), \fu^{(2)} = (\fu^{(2)}_1, \hdots, \fu^{(2)}_r) \in \fV$ as follows:
\[
\fu^{(a)}_i = \begin{cases}
  \fw^{(i)}_a & \fw^{(i)} \neq 1\\
  \fg_{v_i} & \text{otherwise}.
  \end{cases}
\]
Now $\fw'$ be the unique reduced word which evaluates to $\fu^{(2)}\fu^{(1)}\fw \not\in \fV$. Let $\fw'^{(1)} \cdots \fw'^{(r)}$ be the $r$-partite representation of $\fw'$. If $\fw^{(i)} = 1$, it is straightforward to check that $\fw'^{(i)} = 1$ as well. Otherwise, if $\fw^{(i)} \neq 1$, then $\fw'^{(i)}$ has two fewer symbols that $\fw^{(i)}$. Since $\fw \neq 1$, this implies that $\fw'$ has fewer symbols than $\fw$, a contradiction. Thus, \cref{eq:V-eq} holds.

So far, we have shown that $\fV$ is precisely the kernel of $\G$ with respect to the map $(\pi_1+\pi_2, \hdots, \pi_1+\pi_r)$. Since $(\pi_1, \hdots, \pi_r)$ is a surjection of $\G$ by \cref{prop:pi-surjective}, we can see that $(\pi_1+\pi_2, \hdots, \pi_1+\pi_r)$ is a surjection of $\G$. Thus, by the first isomorphism theorem, we have that $\G /\fV \cong \im(\pi_1+\pi_2, \hdots, \pi_1+\pi_r) = \Z_2^{r-1}$, as desired.
\end{proof}

\subsection{Proof of \texorpdfstring{\cref{thm:CatalanArity3}}{Theorem 1.2} for \texorpdfstring{$r=2$}{r=2}}

To get help the reader better understand the Catalan group, we show how that $r=2$ case of \cref{thm:CatalanArity3} quickly follows. It is straightforward to show the following lemma suffices.

\begin{lemma}\label{lem:cycle-redundant}
Let $E \subseteq V_1 \times V_2$ be a bipartite cycle. Then, $H = (V, E)$ has a Catalan cover.
\end{lemma}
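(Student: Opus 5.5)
We write the bipartite cycle explicitly and show that the product of all but one edge element, taken in cyclic order with alternating signs, equals the remaining edge element in $\G$. Since $E$ is a simple cycle in a bipartite graph, it has even length $2k$ with $k \ge 2$. Label its vertices cyclically as $a_1, b_1, a_2, b_2, \hdots, a_k, b_k$ with $a_j \in V_1$ and $b_j \in V_2$. The edges are then
\[
h_{2j-1} = (a_j, b_j), \qquad h_{2j} = (a_{j+1}, b_j), \qquad j \in [k],
\]
with indices of the $a$'s taken mod $k$, so $h_{2k} = (a_1, b_k)$. We aim to show that $\fg_{h_{2k}} \in \langle H \setminus \{h_{2k}\}\rangle_{\G}$. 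This gives $\langle H\setminus\{h_{2k}\}\rangle_{\G} = \langle H\rangle_{\G}$, which is exactly a Catalan cover.

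The key computation is the stack product $\fw := \fg_{h_1}\fg_{h_2}\cdots\fg_{h_{2k-1}}$. Because $\fg_u$ and $\fg_v$ commute whenever $u \in V_1$ and $v \in V_2$, this product splits into its $V_1$-part and its $V_2$-part:
\[
\fw = \bigl(\fg_{a_1}\fg_{a_2}\fg_{a_2}\fg_{a_3}\fg_{a_3}\cdots\fg_{a_k}\fg_{a_k}\bigr)\cdot\bigl(\fg_{b_1}\fg_{b_1}\fg_{b_2}\fg_{b_2}\cdots\fg_{b_{k-1}}\fg_{b_{k-1}}\fg_{b_k}\bigr).
\]
Each $V_1$-part symbol comes from one edge in order: $h_1$ contributes $a_1$, $h_2$ and $h_3$ contribute $a_2, a_2$, and so on up to $h_{2k-2}, h_{2k-1}$, which contribute $a_k, a_k$. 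The $V_2$-part pairs up as $b_1b_1, \hdots, b_{k-1}b_{k-1}$ and ends with a single $b_k$ from $h_{2k-1}$.

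Applying $\fg_v^2 = 1$ to the adjacent equal pairs, the $V_1$-part collapses to $\fg_{a_1}$ and the $V_2$-part collapses to $\fg_{b_k}$. Hence $\fw = \fg_{a_1}\fg_{b_k} = \fg_{h_{2k}}$. This is precisely the local cancellation picture from the introduction. Each generator $\fg_{h_i}$ is an involution in $\G$, since its two commuting factors each square to $1$, so no inverses are needed. The product $\fw$ therefore lies in the subgroup generated by $\{\fg_{h_i} : i < 2k\}$.

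The only point needing care is the bookkeeping of the index pattern, namely checking that the shared vertices of consecutive edges alternate between $V_1$ and $V_2$, so that every cancellation is between adjacent symbols. No step is a genuine obstacle. The lemma does not even need the uniqueness of reduced words (\cref{thm:G-reduced-not-equal}), because we only exhibit an equality in $\G$.
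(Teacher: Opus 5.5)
Your proof is correct and takes the same approach as the paper. Both put all but one edge of the cycle on the stack in cyclic order, $(a_1,b_1),(a_2,b_1),(a_2,b_2),\dots,(a_k,b_k)$, and use cross-part commutativity together with $\fg_v^2=1$ to collapse the product to $\fg_{a_1}\fg_{b_k}$, which is the omitted edge. Your split of the product into a $V_1$-part and a $V_2$-part is a slightly cleaner write-up of the paper's index bookkeeping.
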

\begin{proof}
Denote the vertices traversed by the cycle $E$ as $\ell_1, r_1, \ell_2, r_2, \ell_3, \dots r_{|H|/2}$. We claim that $\langle H \setminus (\ell_1, r_{|H|/2}) \rangle_{\G} = \langle H \rangle_{\G}$. To see why, we consider the sequence of edges 
\[ (\ell_1, r_1), (\ell_2, r_1), (\ell_2, r_2), \dots (\ell_{|H|/2},  r_{|H|/2}) \ . \] 
The word generated by these edges is exactly 
\[
\fg_{\ell_1}\fg_{r_1} \cdot \prod_{i = 2}^{|H|/2} \fg_{\ell_i}\fg_{r_{i-1}} \fg_{\ell_i}\fg_{r_{i}} = \fg_{\ell_1}\fg_{r_1} \cdot \prod_{i = 2}^{|H|/2} \fg_{r_{i-1}}\fg_{r_{i}} = \fg_{\ell_1} \cdot \left ( \prod_{i = 1}^{|H|/2 -1 } \fg_{r_{i-1}}^2  \right ) \fg_{r_{|H|/2}} = \fg_{\ell_1}\fg_{r_{|H|/2}},
\]
and thus $(\ell_1, r_{|H|/2}) \in \langle H \setminus (\ell_1, r_{|H|/2}) \rangle_{\G}$. Therefore,  $\langle H \setminus (\ell_1, r_{|H|/2}) \rangle_{\G} = \langle H \rangle_{\G}$, proving that $H$ has a Catalan cover.
\end{proof}

\subsection{Ruling out Catalan Covers using Group Extensions}

\cref{lem:cycle-redundant} is a great example of proving that a hypergraph has a Catalan cover, you build a suitable sequence of hyperedges and do a (usually simple) computation in the Catalan group $\G$. However, how do we prove that a hypergraph \emph{lacks} a Catalan cover; that is, it is Catalan non-redundant? In short, we can do this ruling out an extension into a simpler group. We formalize this concept as follows, which we make use of in \cref{thm:4uniformIntro}.

Let $H = (V, E)$ be an $r$-partite $r$-uniform hypergraph. Let $\fN$ be an arbitrary group. Given an arbitrary map $\psi : V \to \fN$, we define for each $h = (v_1, \hdots, v_r) \in E$ a word $\psi(h) \in \fN^r$ as follows
\[
  \psi(h) = (\psi(v_1), \hdots, \psi(v_r)).
\]
We let $\langle H\rangle_{\psi, \fN}$ denote the subgroup of $\fN^r$ generated by $\{\psi(h) \mid h \in E\}$.

\begin{proposition}\label{prop:test}
If there exists $\fN$ and $\psi : V \to \fN$ such that $\langle H \setminus \{h\} \rangle_{\psi, \fN} \subsetneq \langle H\rangle_{\psi, \fN}$, then $\langle H \setminus \{h\} \rangle_{\G} \subsetneq \langle H\rangle_{\G}$.
\end{proposition}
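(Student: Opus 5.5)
The plan is to turn $\psi$ into an honest group homomorphism out of $\G$. The obstacle is that $\psi(v)$ need not be an involution, while $\fg_v^2=1$ in $\G$. To get around this, I would use the wreath product $W := \fN \wr \Z_2 = (\fN\times\fN)\rtimes \Z_2$, where the generator $s$ of $\Z_2$ swaps the two coordinates. For $v\in V_i$, define $\Psi(\fg_v)\in W^r$ to be the element that is $\bigl((\psi(v),\psi(v)^{-1}),s\bigr)$ in the $i$th factor and the identity in every other factor.

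First, I check that the defining relations of $\G$ lie in the kernel, so that $\Psi$ descends from $\fF(V)$ to a homomorphism $\G\to W^r$ by the first isomorphism theorem. Squaring gives
\[
\bigl((a,a^{-1}),s\bigr)^2=(a,a^{-1})\cdot(a^{-1},a)=(1,1),
\]
so $\Psi(\fg_v)^2=1$. The commutation relations between different parts hold automatically, because the corresponding images live in different direct factors.

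Next I compute the image of a hyperedge. For $h=(v_1,\dots,v_r)$, the element $\Psi(\fg_h)$ is, in each factor $i$, equal to $\bigl((\psi(v_i),\psi(v_i)^{-1}),s\bigr)$. So every generator $\fg_h$ carries the swap $s$ simultaneously in all $r$ factors. Let $K:=\langle H\setminus\{h\}\rangle_{\psi,\fN}\le \fN^r$. Define $M\subseteq W^r$ to be the set of elements whose $\Z_2$-components are equal across all $r$ factors, and whose tuple of first $\fN$-coordinates lies in $K$, as does the tuple of second $\fN$-coordinates.

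The key step is that $M$ is a subgroup. Multiplying two such elements either keeps the coordinate pairs in place or swaps first and second coordinates uniformly in every factor at once. Hence the resulting first and second coordinate tuples are products in $K$ of tuples from $K$. Closure under inverses is similar. Moreover $\Psi(\fg_{h'})\in M$ for every $h'\ne h$, since $\psi(h')\in K$ and $\psi(h')^{-1}\in K$. Therefore $\Psi(\langle H\setminus\{h\}\rangle_{\G})\subseteq M$.

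To conclude, suppose for contradiction that $\langle H\setminus\{h\}\rangle_{\G}=\langle H\rangle_{\G}$. Then $\Psi(\fg_h)\in M$. The tuple of first coordinates of $\Psi(\fg_h)$ is exactly $\psi(h)$, so $\psi(h)\in K$, which gives $\langle H\rangle_{\psi,\fN}=K$. This contradicts the hypothesis. Since $\langle H\setminus\{h\}\rangle_\G\subseteq\langle H\rangle_\G$ always holds, the inclusion must be strict.

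The main obstacle is the involution issue, which the wreath-product trick resolves. What remains to verify carefully is that the swap happens uniformly across all factors, which is where it matters that every generator $\fg_h$ uses exactly one vertex from each part.
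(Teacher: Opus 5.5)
Your proof is correct, and it takes a genuinely different route from the paper's.

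The paper argues at the level of words. From $\fg_h\in\langle H\setminus\{h\}\rangle_{\G}$ it writes $\fg_h=\fg_{h_1}\cdots\fg_{h_k}$ with $k$ odd, using the parity homomorphism. It then invokes the normal-form result, Theorem~\ref{thm:G-reduced-not-equal}, to conclude that each coordinate word $\prod_i \fg_{v_{i,j}}$ collapses to $\fg_{v_j}$ by repeatedly deleting adjacent equal symbols. Each deleted pair occupies one odd and one even position, so the same cancellations can be replayed in $\fN$ with alternating exponents. This gives $\prod_{i}\psi(h_i)^{(-1)^{i-1}}=\psi(h)$.

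Your map $\fg_v\mapsto((\psi(v),\psi(v)^{-1}),s)$ into $((\fN\times\fN)\rtimes\Z_2)^r$ packages exactly this alternating product as an honest homomorphism. The first $\fN$-coordinate of a product of such elements is $\psi(\cdot)\,\psi(\cdot)^{-1}\,\psi(\cdot)\cdots$, and the $\Z_2$-coordinate records the parity.

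What your approach buys is that well-definedness follows just from checking the two families of defining relations. You need neither the uniqueness of reduced words nor the separate oddness argument. Your invariant subgroup $M$ replaces the combinatorial simulation of cancellations. Its closure checks are right: in the swapped case the product tuples are $xy'$ and $yx'$, and the inverse of $((x,y),1)$ is $((y^{-1},x^{-1}),1)$. This uniformity is exactly where the $r$-partite structure of the hyperedges is used.

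The paper's route, in exchange, stays closer to the stack-and-cancellation picture that motivates Catalan covers. It also reuses normal-form machinery the paper needs elsewhere anyway, for instance in Lemma~\ref{lemma:R-NRD-Catalan-NRD}.
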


\begin{proof}
Assume for sake of contradiction that $\langle H \setminus \{h\} \rangle_{\G} = \langle H\rangle_{\G}$. By \cref{prop:pi-surjective}, this implies there exists an odd integer $k \ge 1$ and a sequence $h_1, \hdots, h_k \in E$ of hyperedges (possibly with repetition) such that
\[
  \fg_{h_1} \cdots \fg_{h_k} = \fg_h.
\]
Assume that $h = (v_1, \hdots, v_r)$ and $h_i = (v_{i,1}, \hdots, v_{i,r})$ for all $i \in [k]$. Then by \cref{thm:G-reduced-not-equal}, we have that for all $j \in r$, the word
\[
 \prod_{i=1}^k \fg_{v_{i,j}}
\]
reduces to $\fg_{v_j}$. Combinatorially, this reduction procedure always cancels out equal symbols, one of which has even index and the other has odd index. Thus, we can apply the exact same reduction procedure to the sequence $(\psi(v_{1,j}), \psi(v_{2,j}), \hdots, \psi(v_{k,j})) = \psi(v_j)$. As a result, we can deduce that
\[
  \prod_{i=1}^k \psi(v_{i,j})^{(-1)^{i-1}} = \psi(v_j),
\]
where the reduction in the Catalan group is simulated in $\fN$ via the identity $\fn \cdot \fn^{-1} = \fn^{-1} \cdot \fn = 1$ for all $\fn \in \fN$. Therefore, we can more globally deduce that
\[
\prod_{i=1}^k \psi(h_i)^{(-1)^{i-1}} = \psi(h),
\]
so $\langle H \setminus \{h\} \rangle_{\psi, \fN} = \langle H\rangle_{\psi, \fN}$, a contradiction.
\end{proof}

\begin{remark}
Letting $\fN = \G$ and $\psi$ the map $\psi(v) = \fg_v$, we can use the theory in \cref{subsec:reduced-words} and \cref{subsec:parity} to prove that $\langle H \setminus \{h\}\rangle_{\G} = \langle H\rangle_{\G}$ if and only if $\langle H \setminus \{h\}\rangle_{\psi, \fN} = \langle H\rangle_{\psi, \fN}$. The proof follows form observing that $\langle H\rangle_{\psi,\fN} \cong \langle H\rangle_{\G}$ via the map $(\psi(e_1), \hdots, \psi(e_k)) = (\fg_{e_1}, \cdots, \fg_{e_k}) \mapsto \fg_{e_1} \cdots \fg_{e_k}$ for all $e \in E$. We point out this observation to emphasize that \cref{prop:test} can always be used to prove  a hypergraph $H$ lacks a Catalan cover.
\end{remark}

\section{Characterization of Catalan Covers When \texorpdfstring{$r=3$}{r=3}}\label{sec:cover-3}

With the $r=2$ case of \cref{thm:CatalanArity3} taken care of, we now turn toward resolving the $r=3$ case. Similar to previous sections we assume our tripartite vertex set is $V = V_1 \cup V_2 \cup V_3$. Consider $\mathfrak G := \mathfrak G(V_1, V_2, V_3)$. We seek to prove the following. Note that in this section we conflate the notation for a hypergraph with its hyperedge set. 

\begin{theorem}\label{thm:3-Catalan}
Consider $H \subseteq H' \subseteq V_1 \times V_2 \times V_3$ such that $\langle H\rangle_{\mathbb Z} = \langle H'\rangle_{\mathbb Z}$, then $\langle H\rangle_{\fG} = \langle H'\rangle_{\fG}$.
\end{theorem}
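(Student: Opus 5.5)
We follow the strategy sketched in \cref{subsec:topology-intro}. It suffices to prove that each $h' \in H' \setminus H$ satisfies $\fg_{h'} \in \langle H\rangle_{\fG}$. The first reduction removes the first two coordinates. Fix $h' = (a,b,c)$. Since $\fe_{h'} \in \langle H\rangle_{\Z}$, the graph $H_{1,2} := \{(v_1(e),v_2(e)) : e \in H\}$ must reach $a$ and $b$. More precisely, the projection of the $\Z$-relation onto $V_1 \cup V_2$ shows that $a$ and $b$ lie in the same connected component of $H_{1,2}$. Moreover, taking that component bipartite, there is a path of odd length in $H_{1,2}$ from $a$ to $b$.

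Following this path with the corresponding stack of hyperedges of $H$, exactly as in \cref{lem:cycle-redundant}, produces an element $\fg_a\fg_b\fu \in \langle H\rangle_{\fG}$ with $\fu \in \fG_3$. Hence $\fg_{h'}\in\langle H\rangle_{\fG}$ if and only if the third-coordinate discrepancy $\fu\fg_c$ lies in the label group
\[
\langle H\rangle_{\fL} := \{\fu\in\fG_3 : \fu\in\langle H\rangle_{\fG}\}.
\]
Here $\fG_3$ is the subgroup generated by $\fg_v$ for $v\in V_3$, which is well defined by \cref{thm:G-reduced-not-equal}. The same computation holds for $H'$. Thus it suffices to show two things: $\langle H\rangle_{\fL} = \langle H'\rangle_{\fL}$, and the offending words $\fu\fg_c$ are captured by this equality.

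Next we describe these label groups via the auxiliary graph $\T$. Its left vertices are the $\sim_1$-classes of $V_3$, its right vertices are the $\sim_2$-classes, and each $v\in V_3$ gives an edge labelled $v$. The first step is to show that $\T$ is the same for $H$ and $H'$. For this, a $\Z$-relation expressing $\fe_{h'}$ through $H$ forces $c$ to be $\widehat\sim_i$-linked, via chains, to third coordinates of edges of $H$ sharing the $i$-th coordinate path. We prove this by tracking the relation restricted to $V_i\cup V_3$ as an even-cycle-type argument in the bipartite graph $H_{i,3}$.

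Then we show that $\langle H\rangle_{\fL}$ is contained in the walk subgroup $\fW$ (the words of closed walks at a base class), up to a fixed conjugation. We also show that $\langle H\rangle_{\fL}$ is normalized by $\fW$; this should follow by conjugating generators along $\widehat\sim_i$ moves. The key step is \cref{fact:commute}: for cycle words $\fw_1,\fw_2$, the commutator $[\fw_1,\fw_2]$ lies in $\langle H\rangle_{\fL}$. I would prove this first for elementary cycles. A $\widehat\sim_1$ move $u\to v$ is realized by two edges $e,f$ with the same first vertex, which gives $(1,\fg_{v_2(e)}\fg_{v_2(f)},\fg_u\fg_v)\in\langle H\rangle_{\fG}$. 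A product of such elements around a cycle of $\T$ kills the second coordinate only after reordering, and the reordering error is exactly a commutator in the third coordinate, because the first two coordinates commute with the third. Here arity $3$ is essential: only one coordinate is left carrying noncommutative information. I expect this step, bookkeeping the commutator produced when two cycle words are interleaved, to be the main obstacle. My first attempt would be to use induction on walk length and the explicit relations coming from the $\widehat\sim_1$ and $\widehat\sim_2$ moves.

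Finally, given commutativity, $\fW/(\fW\cap\langle H\rangle_{\fL})$ is abelian, so it is a quotient of $H_1(\T;\Z)$. For this I use a hand-proved Hurewicz step for graphs: the abelianization of $\pi_1$ of a graph is its cycle space, which is free on the fundamental cycles of a spanning tree. The image of $\langle H'\rangle_{\fL}$ in this cycle space is determined by the integer combinations of hyperedges, and hence by $\langle H'\rangle_{\Z}=\langle H\rangle_{\Z}$. Concretely, we write a word $\fw\in\langle H'\rangle_{\fL}$ as an integer combination $\sum\alpha_i\fw_i$ of fundamental cycle words $\fw_i$, and use the $\Z$-relation to show that each such $\fw_i$ is already in $\langle H\rangle_{\fL}$. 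Commutativity then lets us multiply these in any order, so $\fw\in\langle H\rangle_{\fL}$. This gives $\langle H\rangle_{\fL}=\langle H'\rangle_{\fL}$ and hence $\fg_{h'}\in\langle H\rangle_{\fG}$.
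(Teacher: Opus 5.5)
Your reduction to the label group via an odd path from $a$ to $b$ in $E_{1,2}(H)$ matches the paper, as does the invariance of $\sim_1,\sim_2$ (hence of $\T$). The gap is the step you flag as the main obstacle, $[\fg_{\cC},\fg_{\cD}]\in\langle H\rangle_{\fL}$ for equivalence cycles $\cC,\cD$: it is left unproved, and the mechanism you sketch does not supply it. Multiplying $\widehat\sim_1$-moves around a cycle leaves a junk word in $\fG(V_2)$, a free product of copies of $\Z_2$. No reordering cancels that word, so there is no ``reordering error'' to identify with a commutator. The missing idea (\cref{prop:equiv-cycle-membership}) is to attach to each equivalence cycle $\cC=(v_1,\dots,v_\ell)$ two witnesses whose junk lives in different parts. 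Multiplying the elements of \cref{lem:equiv-word} for the $\sim_1$-pairs $(v_1,v_2),(v_3,v_4),\dots$ gives $\fw_2\fg_{\cC}\in\langle H\rangle_{\fG}$ with $\fw_2\in\fG(V_2)$. Multiplying those for the $\sim_2$-pairs $(v_2,v_3),\dots,(v_\ell,v_1)$ gives a cyclic rotation of $\fg_{\cC}$ with junk in $\fG(V_1)$. Conjugating by $\fg_e$ for an edge $e=(a,b,v_1)\in H$ undoes the rotation while $\fg_b\fg_b$ cancels, giving $\fw_1\fg_{\cC}\in\langle H\rangle_{\fG}$ with $\fw_1\in\fG(V_1)$. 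Since $\fG(V_1)$, $\fG(V_2)$ and $\fL$ pairwise commute, $[\fw_1\fg_{\cC},\fw_2\fg_{\cD}]=[\fg_{\cC},\fg_{\cD}]$. That is the commutator fact in one line (\cref{prop:cycle-commute}); no induction on walk length is needed.

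The closing step also needs repair, because the $\Z$-relation does not put individual fundamental cycle words of $\T$ into $\langle H\rangle_{\fL}$. In general they are not there. For $H=\{(a,b_1,c),(a,b_2,c'),(a_1,b,c),(a_2,b,c')\}$, $\T$ is a pair of parallel edges $c,c'$, yet $\fg_c\fg_{c'}\notin\langle H\rangle_{\fL}$. By the alternating-sign map it would force $\fe_c-\fe_{c'}\in\langle H\rangle_{\Z}$, which is impossible since $a_1,a_2,b_1,b_2$ each lie in a single edge.

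What the $\Z$-relation actually gives is the following. Let $\cD$ be the offending equivalence cycle, whose word is $\fu\fg_c$. Its Abelianization $\fe_{\cD}$ equals $\sum_{h\in H}\alpha_h\fe_h$ with vanishing $V_1$ and $V_2$ parts, i.e.\ a weighting of the multigraph $E_{1,2}(H)$ with zero sum at every vertex. Peeling off $(1,2)$-cycles rewrites it as $\sum_i\beta_i\fe_{\cC_i}$, with each $\cC_i$ a $(1,2)$-cycle of $H$ whose word lies in $\langle H\rangle_{\fL}$ by \cref{prop:12-cycle}. You also need that an equivalence-cycle word is determined modulo $\langle H\rangle_{\fL}$ by its Abelianization: expand in elementary cycle words and commute (\cref{prop:q-equiv}). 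Together these give $\fg_{\cD}\equiv\prod_i\fg_{\cC_i}^{\beta_i}\equiv 1$ modulo $\langle H\rangle_{\fL}$ (\cref{prop:find-quotient}).

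Two smaller points. First, moving base points needs $\langle H\rangle_{\fL}\lhd\fL$, obtained by conjugating with $\fg_e$ for an edge $e\ni v$ (\cref{prop:normal-subgroup}); normalization by $\fW$ alone is not enough. Second, your claim that $\langle H\rangle_{\fL}$ lies in a conjugate of $\fW$ is false in general, since by normality it contains non-walk words such as $\fg_v\fg_{\cC}\fg_v$. It is also unnecessary: only the single walk word $\fu\fg_c$ attached to each $h'\in H'\setminus H$ has to be placed in $\langle H\rangle_{\fL}$.
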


\subsection{The Label Group}

 We begin by defining the \emph{label group} which is motivated in \cref{subsec:topology-intro}. We let $\mathfrak L := \mathfrak G(V_3) \subseteq \mathfrak G$.  Define $\langle H\rangle_{\mathfrak L} := \langle H\rangle_{\mathfrak G} \cap \mathfrak L$. A crucial observation is that $\langle H\rangle_{\fL}$ is a normal subgroup of $\fL$.

\begin{proposition}\label{prop:normal-subgroup}
Assume that $H \subseteq V_1 \times V_2 \times V_3$ has at least one edge incident with each $v \in V_3$. Then $\langle H\rangle_{\mathfrak L} \lhd \mathfrak L$.
\end{proposition}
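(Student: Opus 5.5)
Since $\fL = \G(V_3)$ is generated by the involutions $\fg_v$ for $v \in V_3$, it suffices to show that $\langle H\rangle_{\fL}$ is closed under conjugation by each generator $\fg_v$, $v \in V_3$. Conjugation by a product of generators is an iterated conjugation by single generators, so this gives conjugation-invariance under all of $\fL$. Also $\fg_v^{-1} = \fg_v$, so we only need $\fg_v \fu \fg_v \in \langle H\rangle_{\fL}$ for every $\fu \in \langle H\rangle_{\fL}$.

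Fix $v \in V_3$ and $\fu \in \langle H\rangle_{\fL}$. By hypothesis there is an edge $h = (a, b, v) \in H$ incident with $v$. Then $\fg_h = \fg_a\fg_b\fg_v \in \langle H\rangle_{\G}$, and so is its inverse $\fg_h^{-1} = \fg_v\fg_b\fg_a$ (using $\fg_x^2=1$). Since $\langle H\rangle_{\G}$ is a subgroup containing both $\fu$ and $\fg_h$, the conjugate $\fg_h \fu \fg_h^{-1}$ lies in $\langle H\rangle_{\G}$. We evaluate it using the commutation relations of $\G$. Because $\fu \in \fL$ is a word in symbols from $V_3$, and $a \in V_1$, $b \in V_2$, both $\fg_a$ and $\fg_b$ commute with $\fu$ and with $\fg_v$. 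Hence
\[
\fg_h \fu \fg_h^{-1} = \fg_a\fg_b\,(\fg_v \fu \fg_v)\,\fg_b\fg_a = (\fg_v \fu \fg_v)\,\fg_a\fg_b\fg_b\fg_a = \fg_v \fu \fg_v .
\]
So $\fg_v \fu \fg_v \in \langle H\rangle_{\G}$. It is also a word in $V_3$-symbols, so it lies in $\fL$. Therefore $\fg_v\fu\fg_v \in \langle H\rangle_{\G} \cap \fL = \langle H\rangle_{\fL}$.

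Finally, $\langle H\rangle_{\fL}$ is a subgroup of $\fL$, being an intersection of two subgroups of $\G$. With conjugation-invariance under the generators of $\fL$, this gives $\langle H\rangle_{\fL} \lhd \fL$. There is no real obstacle. The only point needing care is that the first two coordinates of the conjugating edge cancel, which uses exactly the cross-part commutativity relations built into $\G$. The hypothesis that every $v \in V_3$ lies on some edge is what supplies a conjugating element $\fg_h \in \langle H\rangle_{\G}$ whose $V_3$-part is $\fg_v$.
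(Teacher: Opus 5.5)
Your proposal is correct and is essentially the paper's proof: conjugate by $\fg_h$ for an edge $h=(a,b,v)\in H$ and use the cross-part commutativity to cancel $\fg_a,\fg_b$. The only cosmetic difference is that you conjugate by $\fg_h^{-1}=\fg_v\fg_b\fg_a$, whereas the paper uses $\fg_h$ on both sides; these coincide, since $\fg_h^2=\fg_a^2\fg_b^2\fg_v^2=1$ in $\G$.
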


\begin{proof}
It suffices to prove for each $v \in V_3$ and $\fw \in \langle H\rangle_{\mathfrak L}$ we have that $\fg_v \fw \fg_v \in \langle H\rangle_{\mathfrak L}$. By our assumption on $H$, there exists an edge $(t, u, v) \in H$. Furthermore, since $\fw \in \fL$, we have that $\fw$ commutes with $\fg_t$ and $\fg_u$. Since $\fw \in \langle H\rangle_{\mathfrak G}$, we have that
\[
    \fg_t\fg_u\fg_v \fw \fg_t\fg_u\fg_v \in \langle H\rangle_{\mathcal G}.
\]
Furthermore, by commutativity, we have that
\[
    \fg_t\fg_u\fg_v \fw \fg_t\fg_u\fg_v = \fg_t^2 \fg_u^2 \fg_v \fw \fg_v = \fg_v \fw \fg_v \in \fL.
\]
Thus,  $\fg_v \fw \fg_v \in \langle H\rangle_{\mathfrak L}$, as desired.
\end{proof}

In particular, \cref{prop:normal-subgroup} implies that we can study the quotient $\fL / \langle H\rangle_{\fL}$. This shall be quite useful after we build a bit more machinery.

We now identify an important group of elements of $\langle H\rangle_{\fL}$ corresponding to \emph{$(1,2)$-cycles}. Given $H \subseteq V_1 \times V_2 \times V_3$, we let $E_{1,2}(H)$ be the \emph{multiset} $\{(v_1, v_2) \mid (v_1, v_2, v_3) \in H\}$. We call a cycle in $E_{1,2}(H)$ a \emph{$(1,2)$-cycle} of $H$. With a slight abuse of notation, we say that for an even integer $\ell$ we say that a sequence of edges $C := \{(a_1, b_1, c_1), (a_2, b_2, c_2), \hdots, (a_{\ell}, b_{\ell}, c_{\ell})\} \subseteq H$ is a $(1,2)$-cycle if $a_i = a_{i+1}$ for all odd $i \in [\ell]$ and $b_{i} = b_{(i+1)\mod \ell}$ for all even $i \in [\ell]$. We observe the following
\begin{proposition}\label{prop:12-cycle}
Let $C := \{(a_1, b_1, c_1), (a_2, b_2, c_2), \hdots, (a_{\ell}, b_{\ell}, c_{\ell})\} \subseteq H$ be a $(1,2)$-cycle. Then, $\fg_{\mathcal C} := \fg_{c_1} \fg_{c_2} \cdots \fg_{c_\ell} \in \langle H\rangle_{\fL}$.
\end{proposition}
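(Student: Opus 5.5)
The plan is a direct computation in $\fG$: multiply the generators $\fg_h$ for the edges of $C$ in the cycle order and check that the $V_1$ and $V_2$ parts cancel. Concretely, I will show that
\[
\fg_{(a_1,b_1,c_1)}\fg_{(a_2,b_2,c_2)}\cdots\fg_{(a_\ell,b_\ell,c_\ell)} = \fg_{c_1}\fg_{c_2}\cdots\fg_{c_\ell}.
\]
The left-hand side is a product of generators of $\langle H\rangle_{\fG}$, so it lies in $\langle H\rangle_{\fG}$. The right-hand side is a word in $\{\fg_v : v\in V_3\}$, so it lies in $\fL$. Hence it lies in $\langle H\rangle_{\fG}\cap\fL=\langle H\rangle_{\fL}$.

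The first step uses the commutativity relations of $\fG$ between different parts. Each $\fg_{(a_i,b_i,c_i)}=\fg_{a_i}\fg_{b_i}\fg_{c_i}$, and symbols from different $V_j$ commute, so the product splits as
\[
\Bigl(\prod_{i=1}^{\ell}\fg_{a_i}\Bigr)\Bigl(\prod_{i=1}^{\ell}\fg_{b_i}\Bigr)\Bigl(\prod_{i=1}^{\ell}\fg_{c_i}\Bigr).
\]
This is the same stable-sorting argument used in the proof of \cref{prop:G-at-least-one-word}.

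The second step shows the first two factors are trivial, using $\fg_v^2=1$.
\begin{itemize}
\item Since $a_i=a_{i+1}$ for every odd $i$, the $V_1$ factor is $(\fg_{a_1}\fg_{a_1})(\fg_{a_3}\fg_{a_3})\cdots=1$. Here $\ell$ is even, so the pairs $(1,2),(3,4),\dots$ exhaust all indices.
\item Since $b_i=b_{i+1}$ for every even $i<\ell$ and $b_\ell=b_1$, the $V_2$ factor is $\fg_{b_1}(\fg_{b_2}\fg_{b_3})\cdots(\fg_{b_{\ell-2}}\fg_{b_{\ell-1}})\fg_{b_\ell}=\fg_{b_1}\fg_{b_\ell}=\fg_{b_1}^2=1$.
\end{itemize}

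The only delicate point is the indexing convention in the definition of a $(1,2)$-cycle. The wrap-around condition $b_\ell=b_{(\ell+1)\bmod \ell}=b_1$ must be read correctly, and the cancellation in the $V_2$ coordinate must be done with the outer pair $\fg_{b_1},\fg_{b_\ell}$ meeting after the inner pairs collapse. This is the $r=2$ cycle computation of \cref{lem:cycle-redundant} applied coordinatewise.

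With both factors trivial, the product equals $\fg_{\mathcal C}$, which lies in $\langle H\rangle_{\fL}$. No reduced-word uniqueness theory is needed, since we only exhibit an equality in $\fG$.
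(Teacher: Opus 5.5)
Your proposal is correct and follows the paper's proof: you multiply the edge generators $\fg_{a_i}\fg_{b_i}\fg_{c_i}$ in cycle order, use cross-part commutativity to split the product into its $V_1$, $V_2$, $V_3$ factors, and kill the first two factors using $\fg_v^2=1$ and the cycle's adjacency conditions. You also handle the wrap-around $b_\ell=b_1$ in the $V_2$ factor correctly, and you rightly conclude membership in $\langle H\rangle_{\fG}\cap\fL=\langle H\rangle_{\fL}$.
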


\begin{proof}
Clearly $\fg_{\mathcal C} \in \fL$. Thus, it suffices to prove that $\fg_{\mathcal C} \in \langle H\rangle_{\fG}$. Let 
\[
  \fw := \prod_{i=1}^{\ell}\fg_{a_i}\fg_{b_i}\fg_{c_i} \in \langle H\rangle_{\fG}.
\]
We claim that $\fw = \fg_{\mathcal C}$. In particular, using commutativity of the generators of $\fG$ and the fact that $\fg_v^2 = 1$ for all $v \in V$, we have that
\begin{align*}
\fw &= (\fg_{a_1}\fg_{a_2}\fg_{a_3} \cdots \fg_{a_\ell})(\fg_{b_1}\fg_{b_2}\fg_{b_3} \cdots \fg_{b_\ell})\fg_{\mathcal C}\\
&= (\fg_{a_1}^2 \fg_{a_3}^2 \cdots \fg_{a_{\ell-1}}^2)(\fg_{b_1} \fg_{b_2}^2 \cdots \fg_{b_{\ell-1}}^2 \fg_{b_{\ell}})\fg_{\mathcal C}\\
&= \fg_{b_1} \fg_{b_\ell} \fg_{\mathcal C}\\
&= \fg_{\mathcal C}.\qedhere
\end{align*}
\end{proof}

We can use \cref{prop:12-cycle} to answer an important question. Assume that $H \subsetneq H' \subseteq V_1 \times V_2 \times V_3$ and $\langle H\rangle_{\fG} \subsetneq \langle H'\rangle_{\fG}$.

\begin{proposition}\label{prop:G-to-L}
Consider $H \subsetneq H' \subseteq V_1 \times V_2 \times V_3$ such that $E_{1,2}(H)$ and $E_{1,2}(H')$ have the same connected components. Then, if $\langle H\rangle_{\fG} \subsetneq \langle H'\rangle_{\fG}$, we have that $\langle H\rangle_{\fL} \subsetneq \langle H'\rangle_{\fL}$.
\end{proposition}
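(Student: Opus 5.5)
We argue by contraposition. Assume $\langle H\rangle_{\fL} = \langle H'\rangle_{\fL}$; we show that every generator $\fg_{h'}$ with $h' \in H' \setminus H$ already lies in $\langle H\rangle_{\fG}$. This gives $\langle H\rangle_{\fG} = \langle H'\rangle_{\fG}$, the desired contradiction. The idea is to use the $(1,2)$-projection to transport $h'$ back into $H$, leaving only a pure $V_3$-word as error, and then to absorb that error using the label group.

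Fix $h' = (a,b,c) \in H' \setminus H$. Since $(a,b) \in E_{1,2}(H')$ and $E_{1,2}(H)$ has the same connected components, there is a path in $E_{1,2}(H)$ from $a$ to $b$. Because the graph is bipartite, this path has odd length:
\[
a = a_1, b_1, a_2, b_2, \hdots, a_k, b_k = b,
\]
where each step is realized by some edge of $H$. Concretely, choose edges $(a_1,b_1,c_1)$, $(a_2,b_1,c_2')$, $(a_2,b_2,c_2), \hdots$ in $H$, giving a sequence $h_1, \hdots, h_{2k-1}$ of hyperedges of $H$ alternating in the way of the proof of \cref{lem:cycle-redundant}. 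As in that computation, using the commutation relations and $\fg_v^2=1$, we get
\[
\fu := \fg_{h_1}\fg_{h_2}\cdots \fg_{h_{2k-1}} = \fg_a \fg_b \fw,
\]
where $\fw \in \fL$ is the product of the third coordinates. Then
\[
\fg_{h'}\fu^{-1} = \fg_a\fg_b\fg_c\,\fw^{-1}\fg_b\fg_a = \fg_c\fw^{-1} \in \fL,
\]
again by commutativity. Since $\fu \in \langle H\rangle_{\fG} \subseteq \langle H'\rangle_{\fG}$ and $\fg_{h'} \in \langle H'\rangle_{\fG}$, we get $\fg_c\fw^{-1} \in \langle H'\rangle_{\fG} \cap \fL = \langle H'\rangle_{\fL}$. (One can also see this directly: the path closed up by $h'$ is a $(1,2)$-cycle of $H'$, so \cref{prop:12-cycle} applies.)

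By assumption, $\langle H'\rangle_{\fL} = \langle H\rangle_{\fL} \subseteq \langle H\rangle_{\fG}$. Hence
\[
\fg_{h'} = (\fg_c\fw^{-1})\,\fu \in \langle H\rangle_{\fG},
\]
and this holds for every $h' \in H' \setminus H$, as required.

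The main care point is the bookkeeping in the second step: we must check that the alternating product really collapses to $\fg_a\fg_b$ times a $V_3$-word. This needs the path to go from the $V_1$ side to the $V_2$ side, which is why its length is odd, and needs the edges arranged exactly as in \cref{lem:cycle-redundant}. One degenerate case remains: $(a,b)$ may already lie in $E_{1,2}(H)$ through some edge $(a,b,c'')\in H$. Then we simply take the path of length one, and the error term is $\fg_c\fg_{c''}$. Everything else is a routine use of the relations of $\fG$.
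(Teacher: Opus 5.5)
Your proof is correct and is essentially the paper's argument in contrapositive form. The paper closes a path of $E_{1,2}(H)$ with one chosen edge $e \in H' \setminus H$ satisfying $\fg_e \notin \langle H\rangle_{\fG}$, which yields a $(1,2)$-cycle whose word is a pure $V_3$-element of $\langle H'\rangle_{\fL}$; it then shows that if this word lay in $\langle H\rangle_{\fL}$, multiplying by the path product would give $\fg_e \in \langle H\rangle_{\fG}$, whereas you run the same computation for every $h' \in H' \setminus H$ at once.
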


\begin{proof}
Inclusion is trivial, so it suffices to identify some $w \in \langle H'\rangle_{\fL} \setminus \langle H\rangle_{\fL}$. Since $\langle H\rangle_{\fG} \subsetneq \langle H'\rangle_{\fG}$, there exists some $e \in H' \setminus H$ for which $\fg_e \not\in \langle H\rangle_{\fG}$.  Since $E_{1,2}(H)$ and $E_{1,2}(H')$ have the same connected components there exists a cycle in $E_{1,2}(H \cup \{e\})$ that uses $e$ exactly once. In other words, there is an even integer $\ell$ and edges
\[
(a_1, b_1, c_1), (a_2, b_2, c_2), \hdots, (a_{\ell-1}, b_{\ell-1}, c_{\ell-1}) \in H,\text{ and } (a_\ell, b_\ell, c_\ell) = e
\]
such that $a_i = a_{i+1}$ for all odd $i \in [\ell]$ and $b_{i} = b_{i+1 \mod \ell}$ for all even $i \in [\ell]$. Now consider the word $w \in \langle H'\rangle_{\fG}$ defined by
\[
    w := \prod_{i=1}^{\ell}\fg_{a_i}\fg_{b_i}\fg_{c_i}.
\]
By \cref{prop:12-cycle}, we have that
\[
    w = \fg_{c_1}\fg_{c_2} \cdots \fg_{c_\ell} \in \fL.
\]
Thus, $w \in \langle H'\rangle_{\fL}$. We claim that $w \not\in \langle H\rangle_{\fL}$. Assume for sake of contradiction that $w \in \langle H\rangle_{\fL}$. Then, observe that
\begin{align*}
    \langle H\rangle_{\fG} &\ni \left[\prod_{i=\ell-1}^{1}\fg_{a_i}\fg_{b_i}\fg_{c_i}\right] \cdot \prod_{i=1}^{\ell} \fg_{c_i} \\
    &= \fg_{a_{\ell-1}} \fg_{b_1} \fg_{c_\ell}\\
    &= \fg_{a_\ell} \fg_{b_\ell} \fg_{c_\ell}\\
    &= \fg_e,
\end{align*}
but we already assumed that $\fg_e \not\in \langle H\rangle_{\fG}$, a contradiction. Therefore, $\langle H\rangle_{\fL} \subsetneq \langle H'\rangle_{\fL}$.
\end{proof}

We note that the hypothesis of \cref{prop:G-to-L} is satisfied when $\langle H\rangle_{\mathbb Z} = \langle H'\rangle_{\mathbb Z}$.

\begin{proposition}\label{prop:same}
Assume that $H \subseteq H' \subseteq V_1 \times V_2 \times V_3$ and $\langle H\rangle_{\mathbb Z} = \langle H'\rangle_{\mathbb Z}$, then $E_{1,2}(H)$ and $E_{1,2}(H')$ have the same connected components.
\end{proposition}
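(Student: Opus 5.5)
We argue via a linear functional that detects connectivity in $E_{1,2}$. Since $H \subseteq H'$, every edge of $E_{1,2}(H)$ is an edge of $E_{1,2}(H')$, so each connected component of $E_{1,2}(H)$ lies inside a component of $E_{1,2}(H')$. It therefore suffices to show the following: for each $e = (a,b,c) \in H'$, the vertices $a$ and $b$ lie in the same connected component of $E_{1,2}(H)$. Indeed, every edge of $E_{1,2}(H')$ then joins vertices already connected in $E_{1,2}(H)$, so the two component partitions coincide. Here we view the components as partitions of $V_1 \cup V_2$, with isolated vertices forming singleton components in both graphs.

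Suppose for contradiction that $a$ and $b$ lie in different components of $E_{1,2}(H)$, and let $K$ be the component containing $a$. Define the integer vector $\lambda \in \Z^V$ by
\[
\lambda_v := \begin{cases} 1 & v \in K \cap V_1,\\ -1 & v \in K \cap V_2,\\ 0 & \text{otherwise}. \end{cases}
\]
Consider any $h = (v_1, v_2, v_3) \in H$. Its projection $(v_1, v_2)$ is an edge of $E_{1,2}(H)$, so $v_1 \in K$ if and only if $v_2 \in K$. In either case $\langle \lambda, \fe_h \rangle = \lambda_{v_1} + \lambda_{v_2} = 0$. By linearity, $\lambda$ therefore vanishes on all of $\langle H\rangle_{\Z}$.

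For the edge $e$, however, $a \in K$ and $b \notin K$, so $\langle \lambda, \fe_e \rangle = 1 \neq 0$. Hence $\fe_e \notin \langle H\rangle_{\Z}$. But $\fe_e \in \langle H'\rangle_{\Z} = \langle H\rangle_{\Z}$, which is a contradiction. This proves the claim and hence the statement.

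There is no real obstacle. The only points needing care are choosing the signs of $\lambda$ so that it cancels on edges with both endpoints in $K$, and treating $E_{1,2}$ as a multiset, which does not affect connectivity.
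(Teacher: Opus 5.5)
Your proof is correct and is essentially the paper's argument. The paper writes $\fe_e = \sum_i \alpha_i \fe_{h_i}$, projects onto the $V_1$ and $V_2$ coordinates, and tests the result against $f(v_1) - f(v_2)$ with $f$ the component index; that is the same signed linear functional as your $\lambda$, where you simply take $f$ to be the indicator of the component $K$ containing $a$ (your handling of isolated vertices and the reduction to single edges is a small extra bit of care).
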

\begin{proof}
To see why, assume that $e = (v_1, v_2, v_3)$ and that $(v_1, v_2)$ bridges two components of $E_{1,2}(H)$. Further, let $(a_1, b_1, c_1), \hdots, (a_N, b_N, c_N) \in H$ and weights $\alpha_1, \hdots, \alpha_N \in \mathbb Z$ such that
\[
 \sum_{i=1}^N \alpha_i (\fe_{a_i} + \fe_{b_i} + \fe_{c_i}) = \fe_{v_1} + \fe_{v_2} + \fe_{v_3}.
\]
Since the above identity is a linear combination of basis vectors, we can in fact deduce that
\begin{align*}
 \sum_{i=1}^N \alpha_i \fe_{a_i} &= \fe_{v_1},\text{ and}\\
 \sum_{i=1}^N \alpha_i \fe_{b_i} &= \fe_{v_2}
\end{align*}
In particular, this means for any function $f : V_1 \cup V_2 \to \R$, we have that
\begin{align}
  \sum_{i=1}^N \alpha_i(f(a_i) - f(b_i)) = f(v_1) - f(v_2).\label{eq:f-sum}
\end{align}
Arbitrarily enumerate the connected components of $E_{1,2}(H)$ and consider $f : V_1 \cup V_2 \to \mathbb N$ that maps each vertex to the index of its connected component. Assuming that $(v_1, v_2)$ bridges two components of $E_{1,2}(H)$, we have that the RHS of \cref{eq:f-sum} is nonzero, but every term in the LHS of \cref{eq:f-sum} equals zero, a contradiction.\end{proof}

\subsection{Equivalence Relations and Commutativity of Cycles}

Let $H \subseteq V_1 \times V_2 \times V_3$ be a hypergraph. Although $(1,2)$-cycles are quite useful for understanding the structure of $\langle H\rangle_{\fL}$, we actually find that it is useful to study a broader family of words on $\fL$. More precisely, we define equivalence relations $\sim_1$ and $\sim_2$ on $V_3$ as follows.

\begin{definition}
Let $H \subseteq V_1 \times V_2 \times V_3$ be a hypergraph. Given $u, v \in V_3$, we say that $u \widehat{\sim}_1 v$ if there exists $a \in V_1$ and $b, c \in V_2$ such that $(a, b, u), (a, c, v) \in H$. We let $\sim_1$ be the equivalence relation on $V_3$ induced by taking the transitive closer of $\widehat{\sim}_1$. That is, $u \sim_1 v$ if and only if there exists a sequence $a_1, \hdots, a_\ell \in V_3$ with $a_1 = u$, $a_\ell = v$ and $a_i \widehat{\sim}_1 a_{i+1}$ for all $i \in [\ell-1]$. We define $\widehat{\sim}_2$ and $\sim_2$ in an analogous manner, where $u \widehat{\sim}_2 v$ if there exists $a, b \in V_1$ and $c \in V_2$ such that $(a, c, u), (b, c, v) \in V_3$.
\end{definition}

When multiple hypergraphs are considered simultaenously, we let $\sim_{H,1}$ and $\sim_{H,2}$ to denote the equivalence relations induced by $H$. However, as we now prove, these equivalence relations depend only on the Abelian closure of $H$.

\begin{proposition}\label{prop:Abelian-equiv-relation}
Assume that $H \subseteq H' \subseteq V_1 \times V_2 \times V_3$ and $\langle H\rangle_{\mathbb Z} = \langle H'\rangle_{\mathbb Z}$. Then, for $i \in \{1,2\}$, $\sim_{H,i}$ is the same equivalence relation on $V_3$ as $\sim_{H',i}$.
\end{proposition}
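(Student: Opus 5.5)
Since $H \subseteq H'$, every witness pair for $\widehat{\sim}_{H,i}$ is also a witness pair for $\widehat{\sim}_{H',i}$, so $\sim_{H,i}$ refines $\sim_{H',i}$. For the converse it suffices, by transitivity, to show that each generating relation $u \widehat{\sim}_{H',1} v$ already gives $u \sim_{H,1} v$; the case $i=2$ is symmetric with $V_1$ and $V_2$ interchanged. Moreover, any witness pair $(a,b,u),(a,c,v)$ with both edges in $H$ is already a relation of $\widehat{\sim}_{H,1}$. So it is enough to show: for every $e=(a,b,u)\in H'$, the vertex $u$ is $\sim_{H,1}$-equivalent to $w$ for every edge $(a,b',w)\in H$ with the same first coordinate. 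Such an edge exists, because $\fe_a$ must appear in a $\Z$-combination of edges of $H$.

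The main tool is a test-function argument modelled on \cref{prop:same}. Fix $e=(v_1,v_2,v_3)\in H'$ and write
\[
\sum_{j} \alpha_j (\fe_{a_j}+\fe_{b_j}+\fe_{c_j}) = \fe_{v_1}+\fe_{v_2}+\fe_{v_3}
\]
with $(a_j,b_j,c_j)\in H$. Because the parts are disjoint, this separates into $\sum_j \alpha_j \fe_{a_j}=\fe_{v_1}$ and $\sum_j \alpha_j \fe_{c_j}=\fe_{v_3}$. Now let $g:V_3\to\R$ be constant on $\sim_{H,1}$-classes. By definition of $\widehat{\sim}_{H,1}$, all edges of $H$ with first coordinate $a$ have third coordinates in a single $\sim_{H,1}$-class. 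Hence there is a well-defined function $F(a)$, for each $a\in V_1$ incident to $H$, with $g(c_j)=F(a_j)$ for every $j$.

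Pairing the third-coordinate identity with $g$ and the first-coordinate identity with $F$ gives
\[
g(v_3)=\sum_j \alpha_j g(c_j)=\sum_j \alpha_j F(a_j)=F(v_1).
\]
This needs $F(v_1)$ to be defined, which holds because the coefficient of $\fe_{v_1}$ must come from some $a_j=v_1$. Choosing $g$ to be the indicator of a single $\sim_{H,1}$-class, we conclude that $v_3$ lies in the same class as the third coordinates of the $H$-edges through $v_1$. That is exactly the claim reduced to above.

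Finally, if $(a,b,u),(a,c,v)\in H'$, then both $u$ and $v$ are $\sim_{H,1}$-equivalent to the common class $F$-associated with $a$, so $u\sim_{H,1}v$. Therefore $\sim_{H',1}$ refines $\sim_{H,1}$, and the two relations coincide. The only delicate point is the well-definedness of $F$ on all vertices appearing among the $a_j$. This follows because each such $a_j$ comes from an edge of $H$, and the edges of $H$ sharing a first coordinate pairwise witness $\widehat{\sim}_{H,1}$.
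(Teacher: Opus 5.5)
Your proof is correct and is essentially the paper's argument. The paper identifies $\sim_{H,1}$ with the connected components of the bipartite graph $E_{1,3}(H)$ and reruns the test-function argument of \cref{prop:same}; your pair $(F,g)$ is exactly a function on $V_1\cup V_3$ that is constant on those components, so you have simply written out the ``straightforward adaptation'' that the paper leaves implicit.
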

 \begin{proof}
We prove that $\sim_{H,1}$ and $\sim_{H',1}$ are the same. Proving that $\sim_{H,2}$ and $\sim_{H',2}$ are the same can be done by an analogous argument. Since $H'$ has every edge of $H$, $\sim_{H',1}$ is a coarser equivalence relation than $\sim_{H,1}$--that is, $u \sim_{H,1} v$ implies $u \sim_{H',1} v$.

Define $E_{1,3}(H) = \{(v_1, v_3) : (v_1, v_2, v_3) \in H\}$, and define $E_{1,3}(H')$ analogously. Observe that $u \widehat{\sim}_{H,1} v$ iff $u, v \in V_3$ are connected by a path of length $2$ in $E_{1,3}(H)$. Since $\sim_{H,1}$ is the transitive closure of $\widehat{\sim}_{H,1}$, we have that $u \sim_{H,1} v$ iff $u,v \in V_3$ are in the same connected component of $E_{1,3}(H)$. Likewise, $u \sim_{H',1} v$ iff $u, v \in V_3$ are in the same connected component of $E_{1,3}(H')$.

Now, by a straightforward adaptation of the proof of \cref{prop:same}, we have that $E_{1,3}(H)$ and $E_{1,3}(H')$ have the same connected components. Thus, $\sim_{H,1}$ and $\sim_{H',1}$ are identical.
\end{proof}

Moving forward, we only compare hypergraphs for which $\langle H\rangle_{\mathbb Z} = \langle H'\rangle_{\mathbb Z}$. Thus, by \cref{prop:Abelian-equiv-relation}, we can use common equivalence relations $\sim_1$ and $\sim_2$ to study both.

\begin{lemma}\label{lem:equiv-word}
Let $H \subseteq V_1 \times V_2 \times V_3$ be a hypergraph. For $v_1, v_2 \in V_3$, if $v_1 \sim_1 v_2$, then there exists a word $\fw \in \fG(V_2)$ such that $\fw \fg_{v_1} \fg_{v_2} \in \langle H\rangle_{\fG}.$
Likewise, if $v_1 \sim_2 v_2$, then there exists a word $\fw \in \fG(V_1)$ such that $\fw \fg_{v_1} \fg_{v_2} \in \langle H\rangle_{\fG}$.
\end{lemma}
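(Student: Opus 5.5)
We argue by induction along the chain witnessing $v_1 \sim_1 v_2$. Start with the base case $v_1 \widehat{\sim}_1 v_2$. Here there are $a \in V_1$ and $b, c \in V_2$ with $(a,b,v_1), (a,c,v_2) \in H$. Multiply the two generators:
\[
\fg_a\fg_b\fg_{v_1}\cdot \fg_a\fg_c\fg_{v_2} = \fg_a^2\,\fg_b\fg_c\,\fg_{v_1}\fg_{v_2} = \fg_b\fg_c\fg_{v_1}\fg_{v_2},
\]
which lies in $\langle H\rangle_{\fG}$. The equality uses the commutativity relations between different parts of $\G$ and $\fg_a^2 = 1$. So $\fw := \fg_b\fg_c \in \fG(V_2)$ works. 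The trivial case $v_1 = v_2$ is handled by the empty word $\fw = 1$.

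For the inductive step, take a chain $v_1 = u_1 \widehat{\sim}_1 u_2 \widehat{\sim}_1 \cdots \widehat{\sim}_1 u_\ell = v_2$. By induction we have $\fw' \in \fG(V_2)$ with $\fw'\fg_{u_1}\fg_{u_{\ell-1}} \in \langle H\rangle_{\fG}$. The base case gives $\fw'' \in \fG(V_2)$ with $\fw''\fg_{u_{\ell-1}}\fg_{u_\ell} \in \langle H\rangle_{\fG}$. Multiply these two elements. Since $\fw''$ lies in $\fG(V_2)$, it commutes with $\fg_{u_1}$ and $\fg_{u_{\ell-1}}$, which lie in $V_3$. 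Therefore
\[
\fw'\fg_{u_1}\fg_{u_{\ell-1}}\cdot \fw''\fg_{u_{\ell-1}}\fg_{u_\ell} = \fw'\fw''\,\fg_{u_1}\fg_{u_{\ell-1}}^2\fg_{u_\ell} = (\fw'\fw'')\,\fg_{u_1}\fg_{u_\ell},
\]
and $\fw'\fw'' \in \fG(V_2)$ is the required word. The only point needing care is that the $V_2$-part and the $V_3$-part commute, and this is exactly what lets the middle $\fg_{u_{\ell-1}}$ factors meet and cancel. No step is a real obstacle.

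The statement for $\sim_2$ follows by the symmetric argument with the roles of $V_1$ and $V_2$ swapped. In the base case we have $(a,c,v_1), (b,c,v_2) \in H$. Multiplying the two generators, the shared $\fg_c$ factors cancel and we are left with $\fg_a\fg_b\fg_{v_1}\fg_{v_2}$, so we take $\fw = \fg_a\fg_b \in \fG(V_1)$. The inductive step is identical, now using that $\fG(V_1)$ commutes with $\fG(V_3)$.
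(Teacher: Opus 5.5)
Your proof is correct and follows the paper's argument. The paper also combines, for each step of the chain, the two hyperedges sharing a $V_1$-vertex to get $\fg_{c_i}\fg_{d_i}\fg_{a_i}\fg_{a_{i+1}} \in \langle H\rangle_{\fG}$, then multiplies these over the whole chain and telescopes the $V_3$ factors using commutativity between parts — exactly what your induction does one step at a time.
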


\begin{proof}
Assume $v_1 \sim_1 v_2$. If $v_1 = v_2$, then $\fg_{v_1}\fg_{v_2} = 1 \in \langle H\rangle_{\fG}$, so we can take $\fw = 1$. Otherwise, there exists a sequence $a_1, \hdots, a_\ell \in V_3$ with $a_1 = v_1$, $a_\ell = v_2$ and $a_i \widehat{\sim}_1 a_{i+1}$ for all $i \in [\ell-1]$. In particular, for each $i \in [\ell-1]$, there exists $b_i \in V_1$ and $c_i, d_i \in V_2$ such that $(b_i, c_i, a_i), (b_i, d_i, a_{i+1}) \in H$. In particular, this means for all $i \in [\ell]$, we have that
\[
    \fg_{b_i} \fg_{c_i} \fg_{a_i} \fg_{b_i} \fg_{d_i} \fg_{a_{i+1}} = \fg_{c_i} \fg_{d_i} \fg_{a_i} \fg_{a_{i+1}}\in \langle H\rangle_{\fG}.
\]
Hence,
\begin{align*}
    \langle H\rangle_{\fG} &\ni \prod_{i=1}^{\ell-1} \fg_{c_i} \fg_{d_i} \fg_{a_i} \fg_{a_{i+1}}\\
    &= \left(\prod_{i=1}^{\ell-1} \fg_{c_i} \fg_{d_i}\right)\left(\prod_{i=1}^{\ell-1} \fg_{a_i} \fg_{a_{i+1}}\right)\\
    &= \left(\prod_{i=1}^{\ell-1} \fg_{c_i} \fg_{d_i}\right) \fg_{a_1} \fg_{a_\ell}\\
    &= \left(\prod_{i=1}^{\ell-1} \fg_{c_i} \fg_{d_i}\right) \fg_{v_1} \fg_{v_2}.
\end{align*}
To finish, we observe that $\prod_{i=1}^{\ell-1} \fg_{c_i} \fg_{d_i} \in \fG(V_2)$. The case in which $v_1 \sim_2 v_2$ follows by an analogous argument.
\end{proof}

Fix $H \subseteq V_1 \times V_2 \times V_3$ and corresponding equivalence relations $\sim_1$ and $\sim_2$ on $V_3$. Given an even integer $\ell \in \mathbb N$, we say that $\mathcal C = (v_1, \hdots, v_{\ell})$ with each $v_i \in V_3$ is an \emph{equivalence cycle} if $v_i \sim_1 v_{i+1}$ for all odd $i \in [\ell]$ and $v_i \sim_2 v_{i+1 \mod \ell}$ for all even $i \in [\ell]$. We define the word corresponding to an equivalence cycle to be $\fg_{\mathcal C} := \fg_{v_1} \cdots \fg_{v_\ell}$. We extend \cref{lem:equiv-word} to observe the following property of equivalence cycles.

\begin{proposition}\label{prop:equiv-cycle-membership}
Let $\mathcal C = (v_1, \hdots, v_{\ell})$, then there exists $\fw_1 \in \G(V_1)$ and $\fw_2 \in \G(V_2)$ such that  $\fw_1\fg_{\mathcal C} \in \langle H\rangle_{\G}$ and $\fw_2\fg_{\mathcal C} \in \langle H\rangle_{\G}.$
\end{proposition}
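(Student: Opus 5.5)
We apply \cref{lem:equiv-word} to each consecutive pair of the cycle and multiply the resulting elements of $\langle H\rangle_{\G}$ together. The point is that all junk produced by the $\sim_1$-steps lies in $\G(V_2)$ and all junk from the $\sim_2$-steps lies in $\G(V_1)$. These two groups commute with each other and with $\G(V_3)$, so the product splits as $\fw_2'\fw_1'\fg_{\mathcal C}$ with $\fw_1' \in \G(V_1)$ and $\fw_2' \in \G(V_2)$. We then kill one of the two junk factors at a time, by pairing the cycle with itself in two different ways.

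Concretely, for each odd $i$ the relation $v_i \sim_1 v_{i+1}$ and \cref{lem:equiv-word} give $\fu_i \in \G(V_2)$ with $\fu_i \fg_{v_i}\fg_{v_{i+1}} \in \langle H\rangle_{\G}$. Multiplying these in order over $i = 1, 3, \ldots, \ell-1$, and using that $\G(V_2)$ commutes with $\G(V_3)$, gives
\[
\Big(\prod_{i \text{ odd}} \fu_i\Big)\,\fg_{v_1}\fg_{v_2}\cdots\fg_{v_\ell} \in \langle H\rangle_{\G}.
\]
The $V_3$-part is exactly $\fg_{\mathcal C}$. So $\fw_2 := \prod_{i\text{ odd}}\fu_i \in \G(V_2)$ satisfies $\fw_2\fg_{\mathcal C} \in \langle H\rangle_{\G}$, and this case needs only the odd steps.

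For $\fw_1$ we use the even steps. For each even $i$, \cref{lem:equiv-word} gives $\fu_i' \in \G(V_1)$ with $\fu_i'\fg_{v_i}\fg_{v_{i+1 \bmod \ell}} \in \langle H\rangle_{\G}$. Multiplying over $i = 2, 4, \ldots, \ell$ gives $\fw_1'\,\fg_{v_2}\fg_{v_3}\cdots\fg_{v_\ell}\fg_{v_1} \in \langle H\rangle_{\G}$. This is $\fg_{\mathcal C}$ cyclically rotated, i.e.\ $\fg_{v_1}\fg_{\mathcal C}\fg_{v_1}$.

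This rotation is the main obstacle. To undo it, conjugate by an edge $h=(t,u,v_1)\in H$, which exists if $v_1$ is incident to some edge (this should be part of the standing assumption as in \cref{prop:normal-subgroup}; otherwise $v_1 \sim_1 v_2$ with $v_1 \neq v_2$ already forces such an edge). We have $\fg_h \in \langle H\rangle_{\G}$, $\fg_h^{-1}=\fg_h$, and $\fw_1'$ commutes with $\fg_u,\fg_{v_1}$. Hence
\[
\fg_h\,\fw_1'\fg_{v_1}\fg_{\mathcal C}\fg_{v_1}\,\fg_h = (\fg_t\fw_1'\fg_t)\,\fg_{\mathcal C} \in \langle H\rangle_{\G},
\]
and $\fw_1 := \fg_t\fw_1'\fg_t \in \G(V_1)$ completes the proof. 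If $v_1$ has no incident edge, we can instead rotate the cycle to start at the other endpoint of the even pair $(v_\ell,v_1)$, or handle the degenerate case $v_i=v_{i+1}$, where the factor is $1$.
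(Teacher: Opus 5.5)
Your proof is correct and is essentially the paper's argument. The odd $\sim_1$-steps give $\fw_2$ directly. For $\fw_1$, the paper likewise multiplies the even $\sim_2$-steps, including the wrap-around pair $v_\ell \sim_2 v_1$, and uses two copies of an edge $e=(a,b,v_1)$ to cancel the cyclic rotation; it places them as $\fg_e(\cdots)\fg_e(\hat\fw_0\fg_{v_1}\fg_{v_\ell})$ rather than as a full conjugation.

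The only loose end is the isolated case, where your fallback of rotating to $v_\ell$ does not help, since $v_\ell=v_1$ there. The paper's fix is cleaner: if $v_1$ lies in no edge, its $\sim_1$- and $\sim_2$-classes are $\{v_1\}$. The whole cycle is then $(v_1,\dots,v_1)$, so $\fg_{\mathcal C}=1$ and $\fw_1=1$ works.
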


\begin{proof}
For $i \in [\ell]$ odd, invoke \cref{lem:equiv-word} to find $\hat{\fw}_{i} \in \fG(V_2)$ such that $\hat{\fw}_i \fg_{v_i} \fg_{v_{i+1}} \in \langle H\rangle_{\G}$. Then, observe that
\begin{align*}
\langle H\rangle_{\G} &\ni (\hat{\fw}_1 \fg_{v_1} \fg_{v_2})(\hat{\fw}_3 \fg_{v_3} \fg_{v_4}) \cdots (\hat{\fw}_{\ell-1} \fg_{v_{\ell-1}} \fg_{v_\ell})\\
&= (\hat{\fw}_1 \hat{\fw}_3 \cdots \hat{\fw}_{\ell-1}) \fg_{\mathcal C}.
\end{align*}
Thus, we can pick $\fw_2 := (\hat{\fw}_1 \hat{\fw}_3 \cdots \hat{\fw}_{\ell-1}) \in \G(V_2)$. To prove the existence of $\fw_1$, first consider the case in which $v_1$ is an isolated vertex of $H$. In that case, the equivalence classes of $\sim_1$ and $\sim_2$ are singletons, so $C = (v_1, \hdots, v_1)$. This means that $\fg_{\mathcal C} = 1$, so we can pick $\fw_{1} = 1$.

Now assume there exists an edge $e := (a, b, v_1) \in H$. For $i \in [\ell-2]$ even, invoke \cref{lem:equiv-word} to find $\hat{\fw}_{i} \in \fG(V_1)$ such that $\hat{\fw}_i \fg_{v_i} \fg_{v_{i+1}} \in \langle H\rangle_{\fG}$. Also since $v_1 \sim_2 v_{\ell}$, pick $\hat{\fw}_0 \in \fG(V_1)$ such that $\hat{\fw}_0 \fg_{v_1} \fg_{v_\ell} \in \langle H\rangle_{\fG}$.

Now observe that
\begin{align*}
\langle H\rangle_{\fG} &\ni \fg_{e} (\hat{\fw}_2 \fg_{v_2} \fg_{v_3}) \cdots (\hat{\fw}_{\ell-2} \fg_{v_{\ell-2}} \fg_{v_{\ell-1}}) \fg_{e} (\hat{\fw}_0 \fg_{v_1} \fg_{v_\ell})\\
&= (\fg_a \hat{\fw}_2 \cdots \hat{\fw}_{\ell-2} \fg_a \hat{\fw}_0) (\fg_b \fg_b) (\fg_{v_1} \fg_{v_2} \fg_{v_3} \cdots \fg_{v_{\ell-1}} \fg_{v_1} \fg_{v_1} \fg_{v_\ell})\\
&= (\fg_a \hat{\fw}_2 \cdots \hat{\fw}_{\ell-2} \fg_a \hat{\fw}_0) \fg_{\mathcal C}.
\end{align*}
Thus, we can pick $\fw_1 := \fg_a \hat{\fw}_2 \cdots \hat{\fw}_{\ell-2} \fg_a \hat{\fw}_0 \in \G(V_1)$.
\end{proof}

As an immediate corollary,  we can derive a commutator for any two equivalence cycles, where we recall that $[g, h] := g^{-1}h^{-1}g h$.

\begin{proposition}\label{prop:cycle-commute}
Let $\mathcal C_u = (u_1, \hdots, u_p)$ and $\mathcal D = (v_1, \hdots, v_q)$ be equivalence cycles. Then, $[\fg_{\mathcal C}, \fg_{\mathcal D}] \in \langle H\rangle_{\fL}$.
\end{proposition}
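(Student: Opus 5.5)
We will use \cref{prop:equiv-cycle-membership} to obtain words $\fw_1 \in \G(V_1)$ and $\fw_2 \in \G(V_2)$ with $\fw_1\fg_{\mathcal C} \in \langle H\rangle_{\G}$ and $\fw_2\fg_{\mathcal D} \in \langle H\rangle_{\G}$. The trick is to pair the $V_1$-witness of one cycle with the $V_2$-witness of the other. Set $x := \fw_1\fg_{\mathcal C}$ and $y := \fw_2\fg_{\mathcal D}$. Both lie in $\langle H\rangle_{\G}$, so their commutator $[x,y]$ does too.

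Next we compute $[x,y]$ using the defining relations of $\G$. Generators from different parts commute, so $\fw_1 \in \G(V_1)$ commutes with $\fw_2 \in \G(V_2)$, with $\fg_{\mathcal C}$, and with $\fg_{\mathcal D}$, since the latter two lie in $\fL = \G(V_3)$. Likewise $\fw_2$ commutes with everything in $\G(V_1)\cup\fL$. We expand
\[
[x,y] = \fg_{\mathcal C}^{-1}\fw_1^{-1}\,\fg_{\mathcal D}^{-1}\fw_2^{-1}\,\fw_1\fg_{\mathcal C}\,\fw_2\fg_{\mathcal D}
\]
and move every $\fw_1^{\pm1}$ and $\fw_2^{\pm 1}$ to the front. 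The $\fw_1$ factors cancel, and so do the $\fw_2$ factors. What remains is $\fg_{\mathcal C}^{-1}\fg_{\mathcal D}^{-1}\fg_{\mathcal C}\fg_{\mathcal D} = [\fg_{\mathcal C}, \fg_{\mathcal D}]$. Hence $[\fg_{\mathcal C},\fg_{\mathcal D}] \in \langle H\rangle_{\G}\cap \fL = \langle H\rangle_{\fL}$.

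There is essentially no obstacle once we choose mixed witnesses, that is, the $V_1$-witness for $\mathcal C$ and the $V_2$-witness for $\mathcal D$. If both witnesses came from the same part, say both from $\G(V_2)$, they need not commute with each other, and the commutator would leave a residue $[\fw,\fw']$ in $\G(V_2)$. This is exactly why \cref{prop:equiv-cycle-membership} provides witnesses from both $\G(V_1)$ and $\G(V_2)$. It also explains why the argument fails for arity $4$, where there is no analogous pair of mutually commuting "other parts" that can absorb both cycles' witnesses in this way.
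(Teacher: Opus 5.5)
Your proposal is correct and matches the paper's proof exactly: take the $\G(V_1)$-witness for $\mathcal C$ and the $\G(V_2)$-witness for $\mathcal D$ from \cref{prop:equiv-cycle-membership}. Then the commutator of $\fw_1\fg_{\mathcal C}$ and $\fw_2\fg_{\mathcal D}$ collapses to $[\fg_{\mathcal C},\fg_{\mathcal D}]$ by the cross-part commutativity relations, so it lies in $\langle H\rangle_{\G}\cap\fL$. (Your closing comment about arity $4$ is only a heuristic aside and is not needed for the proof.)
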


\begin{proof}
By \cref{prop:equiv-cycle-membership}, we can pick $\fw_1 \in \G(V_1)$ and $\fw_2 \in \G(V_2)$ such that $\fw_1\fg_{\mathcal C} \in \langle H\rangle_{\G}$ and $\fw_2\fg_{\mathcal D} \in \langle H\rangle_{\G}.$ Then, since $\fg_{\mathcal C}, \fg_{\mathcal C} \in \fL$, we have that
\begin{align*}
  \langle H\rangle_{\fG} &\ni [\fw_1 \fg_{\mathcal C}, \fw_2\fg_{\mathcal D}]\\
                         &= (\fw_1 \fg_{\mathcal C})^{-1} (\fw_2 \fg_{\mathcal D})^{-1} \fw_1 \fg_{\mathcal C} \fw_2\fg_{\mathcal D}\\
                         &= (\fw_1^{-1} \fw_1) (\fw_2^{-1} \fw_2) (\fg_{\mathcal C}^{-1} \fg_{\mathcal D}^{-1} \fg_{\mathcal C} \fg_{\mathcal D})\\
                         &= [\fg_{\mathcal C}, \fg_{\mathcal D}].
\end{align*}
Thus, $[\fg_{\mathcal C}, \fg_{\mathcal D}] \in \langle H\rangle_{\fL}$.
\end{proof}

We make crucial use of \cref{prop:cycle-commute} when studying cosets of equivalence cycle words in \cref{subsec:equivalence-coset}

\subsection{Equivalence Cycle Decomposition}\label{subsec:topology}

Fix equivalence relations $\sim_1$ and $\sim_2$ on $V_3$. For an equivalence cycle $\mathcal C = (v_1, \hdots, v_\ell)$, the goal of this section is to show that the equivalence cycle word $\fg_{\mathcal C} = \fg_{v_1} \cdots \fg_{v_\ell}$ can we written canonically as a product of \emph{elementary} equivalence cycles. To define these elementary equivalence cycles, we need to consider a certain multigraph whose edge set corresponds to $V_3$.

\begin{definition}[Multigraph $\T$]
Let $\cA := \{A_1, \hdots, A_p\}$ denote the equivalence classes of $\sim_1$ and $\cB := \{B_1, \hdots, B_q\}$ denote the equivalence classes of $\sim_2$. For $v \in V_3$, let $A_v \in \cA$ be the unique equivalence class containing $v$ and let $B_V \in \cB$ be the unique equivalence class containing $v$. Consider the multigraph $\T = (\cV, \cE)$ with $\cV := \cA \cup \cB$ as the vertex set and $\cE := \{(A_v, B_v) \mid v\in V_3\}$ as the edge set, which we identify with $V_3$.
\end{definition}

We let $\cK_1, \hdots, \cK_r \subseteq \cV$ denote the connected components of $\T$. We further let $F_{\T} \subseteq V_3$ denote an arbitrary maximal forest of $\cT$. That is, $F_{\T}$ is a spanning tree when restricted to $\cK_i$ for any $i \in [\ell]$. Without loss of generality, we also assume that $B_i \in \cK_i$ for all $i \in [r]$.

Note that trails in $\T$ starting from $B_i$ for some $i \in [r]$ and returning to itself can be viewed as equivalence cycles. As such, we can define an elementary equivalence cycle using $\T$.

\begin{definition}[Elementary equivalence cycle]
Given $v \in V_3 \setminus F_{\T}$, we define the elementary equivalence cycle $\cC_v$ as follows. Assume that $v$'s edge of $\cT$ lies in $\cK_i$ for $i \in [r]$. Let $u_1, \hdots, u_p \in \F_{\T}$ be chosen such that the sequence of edges $u_1, \hdots, u_p$ is the unique shortest path from $B_i$ to $A_v$ in $F_{\T}$. Further let $v_1, \hdots, v_q \in F_{\T}$ be chosen such that the sequence of edges $v_1, \hdots, v_q$ is the unique shortest path from $B_v$ to $B_i$ in $F_{\T}$. Then, $\cC_v := (u_1, \hdots, u_p, v, v_1, \hdots, v_q)$. Note that $p$ is always odd and $q$ is always even. We let $\fw_{v} := \fg_{\cC_{v}}$ be the word corresponding to this elementary equivalence cycle.
\end{definition}

The main result of this subsection is to show that for (almost) any elementary cycle $\cC$, its corresponding word has a \emph{unique} representation as a product of elementary equivalence cycle words. 

\begin{remark}\label{rem:topology-1}
If one views equivalence cycle words as paths on a topology dictated by the graph $\T$, then we are saying the fundamental group of $\T$ is freely generated by the elementary cycles. This is a well-known fact in combinatorial topology (e.g., \cite{sunada2012topological}). However, we prove this fact in our own notation for completeness.
\end{remark}

\begin{proposition}\label{prop:topology}
Let $\cC = (v_1, \hdots, v_N)$ be a simple cycle such that $v_1, v_N \in B_i$ for some $i \in [r].$ Then, $\fg_{\cC}$ can be expressed uniquely as the product of elements of $\{\fg_u \mid u \in V_3 \setminus F_{\T}\}.$ 
\end{proposition}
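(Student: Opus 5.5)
The plan is to prove existence and uniqueness separately, working in the group $\fL=\fG(V_3)$. Here each generator $\fg_u$ has order $2$, and by \cref{thm:G-reduced-not-equal} reduced words are unique. I read ``product of elements of $\{\fg_u\}$'' as a product of the elementary cycle words $\fw_u$ for $u \in V_3\setminus F_{\T}$, as the surrounding text indicates. The first step is to view $\cC$ as a closed walk in $\T$ based at $B_i$. Since consecutive labels share a $\sim_1$ or $\sim_2$ class, they share an endpoint of $\T$, so the sequence of edges $v_1,\hdots,v_N$ traces a closed walk starting and ending at $B_i$.

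For existence, for each vertex $X$ of $\cK_i$ let $\gamma_X$ be the word of the unique path in $F_{\T}$ from $B_i$ to $X$. Let $\overline{\gamma_X}$ be its reverse, which is its inverse in $\fL$. For each step $v_j$ of the walk, going from $X_{j-1}$ to $X_j$, insert $\overline{\gamma_{X_{j-1}}}\gamma_{X_{j-1}}=1$ between consecutive letters. This telescopes to
\[
\fg_{\cC}=\prod_{j=1}^N \gamma_{X_{j-1}}\fg_{v_j}\overline{\gamma_{X_j}}.
\]
If $v_j\in F_{\T}$, then $\gamma_{X_{j-1}}\fg_{v_j}\overline{\gamma_{X_j}}$ freely reduces to $1$, because $\gamma_{X_j}$ equals $\gamma_{X_{j-1}}$ extended or truncated by the tree edge $v_j$. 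If $v_j\notin F_{\T}$, the factor is $\fw_{v_j}$ or its inverse $\overline{\fw_{v_j}}$, depending on which endpoint is traversed first. Because each generator is an involution, $\overline{\fw_{v_j}}=\fw_{v_j}^{-1}$, so either way we obtain a product of elementary words and their inverses, which gives existence.

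Uniqueness is the main obstacle. The claim amounts to saying that the $\fw_u$ freely generate the subgroup they generate, modulo the sign convention for inverses. I would define a ``reading'' homomorphism by hand, in place of the usual topological argument (cf.\ \cref{rem:topology-1}). Let $\fF'$ be the free group on symbols $x_u$, $u\notin F_{\T}$. For a reduced word $\fu=\fg_{a_1}\cdots\fg_{a_m}$ in $\fL$ that traces a walk from $B_i$, define $\Phi(\fu)$ as the ordered product of $x_{a_j}^{\pm1}$ over the non-tree letters. The sign records the direction in which the edge $a_j$ is traversed, for example $+1$ when going from its $\cA$-end to its $\cB$-end. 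One then checks that $\Phi(\fw_u)=x_u$ and that $\Phi$ respects products and cancellation.

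The delicate point is that the cancellation $\fg_a\fg_a=1$ corresponds to backtracking along an edge. Backtracking contributes $x x^{-1}$, which freely cancels, so $\Phi$ is well defined on walk-words modulo the relations of $\fL$. Since reduced forms are unique by \cref{thm:G-reduced-not-equal}, $\Phi$ is also well defined on group elements. Now suppose two products of elementary words (and their inverses) are equal in $\fL$. Applying $\Phi$ to both sides gives equal elements of the free group $\fF'$, so the two products are the same reduced word in the $x_u$, which proves uniqueness. The hypothesis that $\cC$ is simple and that $v_1,v_N\in B_i$ is used only to anchor the walk at the base point $B_i$, so that every letter has a well-defined direction of traversal.
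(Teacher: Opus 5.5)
Your proposal is correct. The existence half matches the paper's argument, and the uniqueness half takes a genuinely different route.

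\textbf{Existence.} The paper describes in words walking from $B_1$ out along $F_{\T}$, crossing the edge $v_i$, and walking back. That is exactly your telescoping identity, and both produce $\fg_{\cC}=\prod_j \fw_{v_j}^{(-1)^j}$. Since $\T$ is bipartite and the walk starts on the $\cB$ side, the direction in which the $j$-th letter is traversed is fixed by the parity of $j$, so your sign is just $(-1)^j$. The paper first cancels adjacent equal letters by a minimal-counterexample step; you correctly observe this is unnecessary.

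\textbf{Uniqueness, paper's route.} The paper argues by contradiction directly inside $\fL$. Take a freely reduced nontrivial product $\prod_k \fw_{u_k}^{d_k}=1$. By \cref{thm:G-reduced-not-equal} it reduces to the empty word, so the first non-tree letter to cancel must meet the non-tree letter of the adjacent factor. This forces $u_k=u_{k+1}$, and hence $d_k=d_{k+1}$ by free reducedness. The tree word between those two letters is then a walk between opposite sides of $\T$. It therefore has odd length and cannot cancel completely.

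\textbf{Uniqueness, your route.} You build a left inverse $\Phi$ of $x_u\mapsto \fw_u$ into a free group. You show it is invariant under deleting a backtracking pair, and get well-definedness on group elements from uniqueness of reduced forms. One small point: the domain of $\Phi$ should be all walk-words, not just reduced ones. Also, ``well defined modulo the relations of $\fL$'' should really be read as ``invariant under reduction''. Inserting $\fg_a\fg_a$ at an arbitrary position can leave the class of walk-words, and your appeal to uniqueness of reduced forms is what actually carries the step.

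\textbf{What each buys.} Your argument replaces the paper's bookkeeping about which letters cancel first, which the paper compresses into one sentence, with a routine invariance check. It also gives an explicit recipe for reading off the unique representation from any walk-word. And it directly exhibits the closed-walk subgroup at $B_i$ as free on $\{\fw_u\}$, the discrete fundamental-group computation alluded to in \cref{rem:topology-1}. The paper's argument is shorter and needs no auxiliary free group. Both rest on uniqueness of reduced words in $\fL$.
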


\begin{proof}
Assume without loss of generality that $v_1, v_N \in B_1$. We first show that a representation of $\fg_{\cC}$ exists. If $\fg_{\cC} = 1$, this is obvious. Otherwise, by \cref{thm:G-reduced-not-equal}, we can write $\fg_{\cC}$ as a unique reduced word $\prod_{i=1}^{N'} \fg_{v'_i}$ by canceling consecutive equal terms. However, note that removing consecutive equal terms in an equivalence cycle preserves the property of being an equivalence cycle. Further one can check that $v'_1, v'_{N'} \in B_1$ must also hold. Thus, by reducing to a minimal counterexample, we can assume without loss of generality that $v_i \neq v_{i+1}$ for all $i \in [\ell-1]$.

For all $v \in F_{\T}$, define $\fw_{v} = 1$. We then claim that
\begin{align}
  \fg_{\cC} = \prod_{i=1}^N \fw_{v_i}^{(-1)^i}\label{eq:cycle-rep}
\end{align}

To argue this, let $\hat{\cC}$ be the equivalence cycle corresponding to the RHS of \cref{eq:cycle-rep}. We can describe $\hat{\cC}$ in words as follows.

\begin{enumerate}
\item For each odd $i \in [N]$ do the following steps.
\item Start at $B_1$ and walk along $F_{\T}$ to $B_{v_i}$. Cross the edge $v_i$ to $A_{v_i}$, then walk from $A_{v_i}$ to $B_1$ along $F_{\T}$ to $B_1$.\label{item:walk-2}
\item Start at $B_1$ and walk along $F_{\T}$ to $A_{v_{i+1}}$. Cross the edge $v_{i+1}$ to $B_{v_i}$, then walk from $B_{v_{i+1}}$ to $B_1$ along $F_{\T}$ to $B_1$.\label{item:walk-3}
\end{enumerate}
To check that this description is honest, note that if $v \in F_{\T}$, the process of walking from $B_1$ back to itself along edges of $F_{\T}$ always involves seeing the same edges again in opposite order (i.e., the reduced word of the walk is trivial). Thus, the word corresponding to this equivalence (sub)cycle is precisely $1$.

Further observe that since $A_{v_i} = A_{v_{i+1}}$ for all odd $i \in [N]$, the end of \cref{item:walk-2} is perfectly the reversal of \cref{item:walk-3}. Likewise for all even $i \in [N-2]$, $B_{v_i} = B_{v_{i+1}}$, so the end of \cref{item:walk-3} is the reversal of the beginning of \cref{item:walk-2} in the next iteration. Hence, the RHS of \cref{eq:cycle-rep} is equivalent to the following simple walk.
\begin{enumerate}
\item Walk from $B_1$ to $B_{v_1}$.
\item For each odd $i \in [N]$, walk from $B_{v_i}$ to $A_{v_i} = A_{v_{i+1}}$ via $v_i$, then walk to $B_{v_{i+1}}$ via $v_{i+1}$.
\item Walk from $B_{v_\ell}$ to $B_1$.
\end{enumerate}

However, by assumption, we have that $B_1 = B_{v_1} = B_{v_\ell}$. Thus, we are precisely describing $\cC$. Hence, \cref{eq:cycle-rep} holds.

To finish, we need to show this representation is unique. By logic similar to that of \cref{thm:G-reduced-not-equal}, assume for sake of contradiction that 
\begin{align}
\prod_{i=1}^{M} \fw_{u_i}^{d_i} = 1\label{eq:prod-1}
\end{align}
for some $u_1, \hdots, u_M \in V_3 \setminus F_{\T}$ and $d_1, \hdots, d_M \in \{-1,1\}$. Futher assume $d_i \neq d_{i+1}$ implies $u_i \neq u_{i+1}$ for all $i \in [M-1]$. By \cref{thm:G-reduced-not-equal}, if we view the LHS of \cref{eq:prod-1} as a word in $\fL$, its reduced form is the trivial word. Since each $\fw_{u_i}$ can be expressed as a product of $\fg_{u_i}$ and symbols $\{\fg_{v} : v \in F_{\T}\}$, we must have there is some $i \in [M-1]$ for which the $\fg_{u_i}$ in $\fw_{u_i}^{d_i}$ cancels out with $\fg_{u_{i+1}}$ in $\fw_{u_{i+1}}^{d_{i+1}}$ in the reduction procedure. In particular, $u_i = u_{i+1}$ so $d_i = d_{i+1}$. As such, the path from $u_i$ to $u_{i+1}$ according to $\fw_{u_i}^{d_i}\fw_{u_{i+1}}^{d_{i+1}}$ either starts as $A_{u_i}$ and ends at $B_{u_{i+1}}$ or it starts at $B_{u_i}$ and ends at $A_{u_{i+1}}$. In either case, the path has an odd number of edges, so word between $\fg_{u_i}$ and $\fg_{u_{i+1}}$ could not have fully reduced, a contradiction. Thus, the representation in (\ref{eq:cycle-rep}) is unique (up to the placements of the identity element).
\end{proof}

\subsection{Equivalence Cycle Cosets}\label{subsec:equivalence-coset}

Assume $H \subsetneq H' \subseteq V_1 \times V_2 \times V_3$, $\langle H\rangle_{\mathbb Z} = \langle H'\rangle_{\mathbb Z}$ but $\langle H\rangle_{\fG} \subsetneq \langle H'\rangle_{\fG}$. Further assume that every vertex of $V_3$ is incident with some edge of $H$; otherwise, such a vertex cannot be incident with $H'$, so we can delete it from $V_3$. Thus, by \cref{prop:normal-subgroup}, we have that $\langle H\rangle_{\fL} \lhd \fL$. Thus, we can define the subgroup $\fQ := \fL / \langle H\rangle_{\fL}$. We now show that any equivalence cycle of $H$ (and thus $H'$) leads to an element of $\fQ$, but we must first define the \emph{Abelianization} of an equivalence cycle.

\begin{definition}
Let $\cC = (v_1, \hdots, v_N)$ be an equivalence cycle. We let $\fe_{\cC} \in \mathbb Z^{V_3}$ be the \emph{Abelianization} of $\cC$ which we define as follows
\begin{align}
  \fe_{\cC} := \sum_{i=1}^{N} (-1)^i \fe_{v_i}.\label{eq:feC}
\end{align}
We further define $\fq_{\cC} \in \fQ$ to be
\[
  \prod_{v \in V_3} \fw_{v}^{(\fe_{\cC})_v} + \langle H \rangle_{\fL},
\]
where $\fw_v$ are the elementary equivalence cycle words from \cref{subsec:topology}.
\end{definition}

We now show that $\fq_{\cC}$ is precisely the coset of $\langle H\rangle_{\fL}$ corresponding to $\fg_{\cC}$.

\begin{proposition}\label{prop:q-equiv}
Let $\cC = (v_1, \hdots, v_N)$ be an equivalence cycle. Then,
\[
  \fq_{\cC} = \fg_{\cC} + \langle H\rangle_{\fL}.
\]
\end{proposition}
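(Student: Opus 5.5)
Since $\fg_{\cC}+\langle H\rangle_{\fL}$ only depends on $\fg_{\cC}$, I will first express $\fg_{\cC}$ as a word in the elementary cycle words $\fw_v$, and then use commutativity in $\fQ$ to collapse that word into the product defining $\fq_{\cC}$. The plan has three steps.

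\textbf{Step 1 (decompose into elementary cycles).} Let $\cC=(v_1,\hdots,v_N)$ lie in the component $\cK_i$ of $\T$. I will close it up into a cycle based at $B_i$. Concretely, let $P$ be the forest path in $F_{\T}$ from $B_i$ to $B_{v_1}$. Since $v_N\sim_2 v_1$, we have $B_{v_N}=B_{v_1}$, so following $P$, then $\cC$, then $P$ reversed gives a closed walk at $B_i$. Its word is $\fg_P\fg_{\cC}\fg_P^{-1}$, where $\fg_P$ has even length and lies in $\fL$.

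I then rerun the telescoping argument from the existence half of \cref{prop:topology}, which never really used simplicity, only the equivalence-cycle structure. This gives
\[
\fg_P\,\fg_{\cC}\,\fg_P^{-1}=\prod_{j=1}^N \fw_{v_j}^{(-1)^j},
\]
with the convention $\fw_v=1$ for $v\in F_{\T}$. Two orientation points need checking when I redo that argument. First, the sign convention must match $(-1)^j$ in \cref{eq:feC}; for $j$ odd the edge $v_j$ is crossed from $B$ to $A$, which is the reverse of its orientation in $\cC_{v_j}$. Second, since each $\fw_v$ is an involution-free-product word, $\fw_v^{-1}$ is simply its reversal.

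\textbf{Step 2 (remove the conjugation).} I need $\fg_P\fg_{\cC}\fg_P^{-1}\equiv \fg_{\cC}$ modulo $\langle H\rangle_{\fL}$. Here I use $\fQ$ directly. The cycle $\cC$ itself is an equivalence cycle. The conjugated word is a product of elementary cycle words, and each $\fw_v$ is itself an equivalence cycle word, since it is a trail from $B_i$ back to $B_i$ with the correct alternation. So \cref{prop:cycle-commute} applies to $\fg_{\cC}$ against each $\fw_{v_j}$. Taking the base point $B_1$ in a single component for simplicity, this shows that the coset of $\fg_{\cC}$ commutes in $\fQ$ with the coset of $\fg_P\fg_{\cC}\fg_P^{-1}$.

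That alone does not give equality, so I fall back on a more direct route. Rather than conjugating, I will choose the basepoint walk to avoid a conjugation altogether. Alternatively, I will note that $\fg_P$ itself is a product of pairs $\fg_u\fg_{u'}$ with $u\sim_1 u'$ or $u\sim_2 u'$. By \cref{lem:equiv-word}, each such pair equals a $\G(V_1)$ or $\G(V_2)$ word times an element of $\langle H\rangle_{\fG}$. Conjugating the cycle $\fg_{\cC}$, which lies in $\langle H\rangle_{\fG}$ up to a $\G(V_1)$ factor by \cref{prop:equiv-cycle-membership}, by such elements should then be absorbed modulo $\langle H\rangle_{\fL}$, by the same commutator computation as in \cref{prop:cycle-commute}. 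I expect this conjugation removal to be the main obstacle, and I would try the direct commutator computation first.

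\textbf{Step 3 (Abelianize).} Once Step 1 and Step 2 give
\[
\fg_{\cC}+\langle H\rangle_{\fL}=\prod_j \fw_{v_j}^{(-1)^j}+\langle H\rangle_{\fL},
\]
I use \cref{prop:cycle-commute}: all the elementary cycle words $\fw_v$ commute pairwise in $\fQ$. I can therefore regroup the product by $v$. The total exponent of $\fw_v$ is then $\sum_{j:\,v_j=v}(-1)^j=(\fe_{\cC})_v$, which yields exactly $\fq_{\cC}$. Since $\fw_v=1$ for $v\in F_{\T}$, the product over all $v\in V_3$ in the definition of $\fq_{\cC}$ agrees with this.
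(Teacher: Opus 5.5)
Your skeleton matches the paper's proof. You conjugate $\cC$ by a forest path $P$ to base it at $B_i$, expand with the telescoping identity of \cref{prop:topology} (which indeed never uses simplicity), and abelianize with \cref{prop:cycle-commute}.

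The only real difference is Step 2, and there your reasoning is sounder than the paper's written argument. The paper removes the conjugation by citing normality of $\langle H\rangle_{\fL}$. But normality only makes $\fQ$ a group. The elements $\fg_P\fg_{\cC}\fg_P^{-1}$ and $\fg_{\cC}$ give the same coset only if the cosets of $\fg_P$ and $\fg_{\cC}$ commute in $\fQ$, and that is exactly what your commutator computation proves.

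So commit to that route and drop the ``avoid conjugation'' alternative. The $\fw_v$ are pinned to base $B_i$, so a cycle that misses $B_i$must be conjugated.

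Concretely, write $P=(u_1,\hdots,u_p)$. Since $P$ starts at a $B$-vertex, it splits into pairs with $u_{2k-1}\sim_1 u_{2k}$. \cref{lem:equiv-word} gives $\mathfrak{x}_k\in\G(V_2)$ with $\mathfrak{x}_k\fg_{u_{2k-1}}\fg_{u_{2k}}\in\langle H\rangle_{\G}$. \cref{prop:equiv-cycle-membership} gives $\fw_1\in\G(V_1)$ with $\fw_1\fg_{\cC}\in\langle H\rangle_{\G}$. Because $\mathfrak{x}_k$ and $\fw_1$ commute with each other and with $\fL$, the commutator of these two elements of $\langle H\rangle_{\G}$ collapses to $[\fg_{u_{2k-1}}\fg_{u_{2k}},\fg_{\cC}]$. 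That commutator therefore lies in $\langle H\rangle_{\fL}$. Hence every pair, and so $\fg_P$, centralizes $\fg_{\cC}$ modulo $\langle H\rangle_{\fL}$, which gives Step 2.

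The two factors must lie in opposite parts. You wrote that the pairs give ``a $\G(V_1)$ or $\G(V_2)$ word'' but used only the $\G(V_1)$-factor of $\fg_{\cC}$. If you ever paired by $\sim_2$, the pairs would give $\G(V_1)$-factors, and you would need the $\G(V_2)$-factor $\fw_2$ of $\fg_{\cC}$ instead of $\fw_1$.
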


\begin{remark}
Proposition~\ref{prop:q-equiv} can be viewed as a variant of Hurewicz theorem~\cite{Hatcher02} that the Abelianization of the fundamental group is isomorphic to the first homology group.
\end{remark}

\begin{proof}
Assume without loss of generality that $\cC$ is within connected component $\cK_1$ of $\T$. Let $(u_1, \hdots, u_p)$ be a walk within $F_{\T}$ from $B_1$ to $B_{v_1} = B_{v_N}$. Note that since $\T$ is bipartite, $p$ is even. Let $\cC'$ be the equivalence cycle $(u_1, \hdots, u_p, v_1, \hdots, v_N, u_{p}, \hdots, u_1)$. Let $\fw := \fg_{u_1} \cdots \fg_{u_p}$. Then,
\[
  \fg_{\cC'} = \fw \fg_{\cC}\fw^{-1}.
\]
Since $\langle H\rangle_{\fL}$ is a normal subgroup of $\fL$, we have that $\fg_{\cC} + \langle H\rangle_{\fL} = \fg_{\cC'} + \langle H\rangle_{\fL}$. Furthermore, by \cref{prop:topology} and \cref{eq:cycle-rep}, we have that
\[
  \fg_{\cC'} = \left(\prod_{i=1}^p \fw_{u_i}^{(-1)^i}\right)\left(\prod_{i=1}^N \fw_{v_i}^{(-1)^i}\right)\left(\prod_{i=1}^p \fw_{u_i}^{(-1)^i}\right)^{-1}
\]
Now, by \cref{prop:cycle-commute}, we have that $[\fw_u, \fw_v] \in \langle H\rangle_\fL$ for all $u,v \in V_3$. Thus, 
\[
  \fg_{\cC'} + \langle H\rangle_{\fL} = \prod_{v \in V_3} \fw_{v}^{\alpha_v - \beta_v} + \langle H\rangle_{\fL},
\]
where $\alpha_v$ is the number of times $v = c_i$ for $i \in [\ell]$ even and $\beta_v$ is the number of times $v = c_i$ for $i \in [\ell]$ odd. In particular, by \cref{eq:feC}, we have that $\fg_{\cC'} + \langle H\rangle_{\fL} = \fq_{\cC'}$. Furthermore, $\fq_{\cC'} = \fq_{\cC}$, as the contribution of $u_1, \hdots, u_p$ to $\fe_{\cC'}$ cancels out with the contribution of $u_p, \hdots, u_1$. Therefore, $\fq_{\cC} = \fg_{\cC} + \langle H\rangle_{\fL}$. 
\end{proof}

By \cref{prop:G-to-L}, we know there is a $(1,2)$-cycle $\cC = \{(a_1,b_1,c_1), \hdots, (a_\ell,b_\ell,c_\ell)\}$ for which $(a_i,b_i,c_i) \in H$ for all $i \in [\ell-1]$ and $(a_\ell, b_\ell, c_\ell) \in H' \setminus H$. Furthermore, $\fg_{a_\ell}\fg_{b_\ell} \fg_{c_\ell} \not\in \langle H\rangle_{\fG}.$ From the existence of this cycle, we have that $c_i \sim_1 c_{i+1}$ for all odd $i \in [\ell]$ and $c_i \sim_2 c_{i+1 \mod \ell}$ for all even $i \in [\ell]$. Thus, $\mathcal D := (c_1, \hdots, c_\ell)$ is an equivalence cycle for which $\fg_{\mathcal D} \in \langle H'\rangle_{\fL} \setminus \langle H\rangle_{\fL}$. We can view this equivalence cycle as a cycle in $\T$. Thus by \cref{prop:q-equiv}, we have that $\fq_{\mathcal D} \neq \langle H\rangle_{\fL}$. A contradiction will soon lead from the following proposition

\begin{proposition}\label{prop:find-quotient}
Let $\fv \in \mathbb Z^{V_3}$ be a vector for which there exists weights $\alpha_h \in \mathbb Z$ such that
\begin{align}
  \sum_{h \in H} \alpha_h \fe_{h} = \fv.\label{eq:sum-v}
\end{align}
In particular, the $V_1$ and $V_2$ components of the LHS are zero. Then,
\[
  \prod_{v \in V_3} \fw_{v}^{\fv_v} \in \langle H\rangle_{\fL}.
\]
\end{proposition}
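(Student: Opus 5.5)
The strategy is to realize the integer relation \cref{eq:sum-v} as a single equivalence cycle $\cD$ with $\fe_{\cD}=\fv$ and $\fg_{\cD}\in\langle H\rangle_{\fL}$. Once that is done, \cref{prop:q-equiv} gives $\prod_{v}\fw_v^{\fv_v}+\langle H\rangle_{\fL}=\fq_{\cD}=\fg_{\cD}+\langle H\rangle_{\fL}=\langle H\rangle_{\fL}$, which is the claim. Because all equivalence cycle words commute modulo $\langle H\rangle_{\fL}$ (\cref{prop:cycle-commute}), it is enough to write $\fv$ as a $\Z$-combination of Abelianizations $\fe_{\cC_j}$ of equivalence cycles whose words lie in $\langle H\rangle_{\fL}$. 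The product of those words then represents $\prod_v \fw_v^{\fv_v}$ modulo $\langle H\rangle_{\fL}$. The natural candidates for the $\cC_j$ are $(1,2)$-cycles of $H$, whose words lie in $\langle H\rangle_{\fL}$ by \cref{prop:12-cycle}.

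\textbf{Step 1 (cycles in the $(1,2)$-graph).} Consider the bipartite multigraph $E_{1,2}(H)$, with one edge $(a,b)$ for each $h=(a,b,c)\in H$, labelled by $c$. The $V_1$- and $V_2$-components of \cref{eq:sum-v} vanish. Orient every edge from $V_1$ to $V_2$ and give edge $h$ the flow $\alpha_h$. At each $a\in V_1$ the outflow $\sum_{h\ni a}\alpha_h$ is $0$, and at each $b\in V_2$ the inflow is $0$. So $(\alpha_h)$ is an integer circulation, i.e.\ an element of the cycle space $H_1(E_{1,2}(H);\Z)$. It therefore decomposes as an integer combination of closed walks in $E_{1,2}(H)$; concretely, repeatedly peel off a cycle along edges with nonzero flow, or use a fundamental cycle basis relative to a spanning forest. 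Each closed walk alternates between $V_1$ and $V_2$. When traversed, edges met in the $V_1\to V_2$ direction carry sign $+$ and edges met in the $V_2\to V_1$ direction carry sign $-$. Hence its signed edge sum is exactly the alternating-sign sum appearing in $\fe_{\cC}$ for the corresponding $(1,2)$-cycle $\cC=(c_1,\dots,c_\ell)$, up to an overall sign fixed by the starting point.

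\textbf{Step 2 (matching Abelianizations).} For each such closed walk, let $\cC$ be the $(1,2)$-cycle of $H$ it traces. By the definition of $\widehat\sim_1$ and $\widehat\sim_2$, its third coordinates form an equivalence cycle. By \cref{prop:12-cycle}, $\fg_{\cC}\in\langle H\rangle_{\fL}$. Projecting to the $V_3$-coordinates, $\sum_h\alpha_h\fe_{v_3(h)}=\fv$, and by Step 1 this equals $\sum_j n_j\fe_{\cC_j}$ with $n_j\in\Z$. Applying \cref{prop:q-equiv} to each $\cC_j$ gives $\fq_{\cC_j}=\fg_{\cC_j}+\langle H\rangle_{\fL}=\langle H\rangle_{\fL}$. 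Since $\fe\mapsto\prod_v\fw_v^{\fe_v}+\langle H\rangle_{\fL}$ is a homomorphism from $\Z^{V_3}$ to the Abelian group generated by the cosets of the $\fw_v$ (again by \cref{prop:cycle-commute}), we get $\prod_v\fw_v^{\fv_v}+\langle H\rangle_{\fL}=\prod_j\fq_{\cC_j}^{n_j}=\langle H\rangle_{\fL}$.

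\textbf{Main obstacle.} The delicate step is the bookkeeping in Step 1. First, the orientation and sign conventions must line up exactly with the $(-1)^i$ in \cref{eq:feC}, including closed walks whose starting point is a $V_2$-vertex or which traverse edges backwards; this is handled by cyclically rotating the walk, which conjugates the word and is harmless since $\langle H\rangle_{\fL}$ is normal. Second, \cref{prop:q-equiv} is applied to cycles that need not start at the base point $B_i$, and one must check that it still holds in that case. A further point to verify is that the map $\fe\mapsto\prod_v\fw_v^{\fe_v}$ is well defined on the relevant subgroup, which needs $\fw_v$ defined for $v\in F_{\T}$ (as $1$). Provided these conventions are handled consistently, the decomposition of an integer circulation into integer cycles is standard.
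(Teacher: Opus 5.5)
Your proposal is correct and follows essentially the same route as the paper: view $(\alpha_h)$ as an integer circulation on the $(1,2)$-multigraph $E_{1,2}(H)$, peel it into integer-weighted $(1,2)$-cycles $\cC_j$ with $\sum_j n_j \fe_{\cC_j} = \fv$, and then combine \cref{prop:12-cycle}, \cref{prop:q-equiv}, and the commutativity of the cosets of the $\fw_v$ modulo $\langle H\rangle_{\fL}$ from \cref{prop:cycle-commute}. The bookkeeping issues you flag are already absorbed by the cited results: \cref{prop:q-equiv} is stated for arbitrary equivalence cycles, and a one-step rotation of a walk only flips the sign of $\fe_{\cC}$, which goes into the integer coefficient $n_j$.
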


\begin{proof}
Consider the graph $W$ on $E_{1,2}(H)$ where the edge $(a,b)$ corresponding to $h = (a,b,c) \in H$ has weight $\alpha_h$. We decompose this weighted graph into a weighted sum of $(1,2)$-cycles as follows. While $W$ still has a $(1,2)$-cycle, pick an arbitrary $(1,2)$-cycle $h_1, \hdots, h_\ell$ and give it weight $\alpha_{h_1}$. We then replace the weight of $\alpha_{h_i}$ with $\alpha_{h_i} + (-1)^i\alpha_{h_1}$. Note that by \cref{eq:sum-v}, every vertex of $W$ has total weight $0$ and deleting a weighted cycle preserves this property. Thus, when no cycles remain (so there is a tree), every leaf edge has weight $0$. In other words, the graph is empty. In other words, we found $(1,2)$-cycles $\cC_1, \hdots \cC_M$ with weights $\beta_1, \hdots, \beta_M \in \mathbb Z$ such that
\[
  \sum_{i=1}^M \beta_i \fe_{\cC_i} = \fv.
\]
Thus,
\begin{align*}
\prod_{v \in V_3} \fw_{v}^{\fv_v} + \langle H\rangle_{\fL} &= \prod_{i=1}^M \left(\prod_{v \in V_3} \fw_{v}^{(\fe_{\cC_i})_v}\right)^{\beta_i} +\langle H\rangle_{\fL}\\
&=\prod_{i=1}^M \fq_{\cC_i}^{\beta_i} +\langle H\rangle_{\fL}\\
&=\prod_{i=1}^M \fg_{\cC_i}^{\beta_i} +\langle H\rangle_{\fL}.
\end{align*}
However, since $\fg_{\cC_i} \in \langle H\rangle_\fL$ for all $i \in [M]$. Thus, $\prod_{v \in V_3} \fw_{v}^{\fv_v} \in \langle H\rangle_{\fL}.$
\end{proof}

Now, observe that $\fe_{\cD} = \sum_{i=1}^{\ell} (-1)^i\fe_{(a_i,b_i,c_i)}$ and that $\fe_{(a_i,b_i,c_i)} \in \langle H\rangle_{\mathbb Z}$ for all $i \in [\ell]$ as $\langle H'\rangle_{\mathbb Z} = \langle H\rangle_{\mathbb Z}$. Therefore, one can find coefficients $\alpha_h \in \mathbb Z$ for all $h \in H$ such that \cref{eq:sum-v} holds with $\fv = \fe_{\cD}$. Thus, 
\[
  \fq_{\cD} = \prod_{v \in V_3} \fw_{v}^{(\fe_{\cD})_v} + \langle H\rangle_{\fL} = \langle H\rangle_{\fL}. 
\]
This contradicts our assumption that $\fq_{\mathcal D} \neq \langle H\rangle_{\fL}$. Thus, we must have had that $\langle H\rangle_{\fG} = \langle H'\rangle_{\fG}$, implying that \cref{thm:3-Catalan} holds.

\section{Limitations of Abelian Covers in Arity 4}\label{sec:limit}

Although we have shown that Abelian covers and Catalan covers are equivalent for $3$-uniform $3$-partite hypergraphs, we now establish \cref{thm:4uniformIntro} that this cannot be true for $4$-uniform $4$-partite hypergraphs. We restate the result here.

\begin{theorem}\label{thm:4uniform}
    There exists a $4$-uniform hypergraph $H = (V, E)$ which admits an Abelian cover \emph{but not} a Catalan cover.
\end{theorem}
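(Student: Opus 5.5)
We take the $4$-partite hypergraph $H=(V',E')$ of \cref{eq:hypergraph-4}. It has $V'_i=\{a_i,b_i,c_i,d_i\}$ as its $i$th part, and its edges are $e'_1,\hdots,e'_8$. The Abelian half of the statement is the identity displayed in the introduction,
\[
\fe_{e'_5}+\fe_{e'_6}+\fe_{e'_7}+\fe_{e'_8}-\fe_{e'_2}-\fe_{e'_3}-\fe_{e'_4}=\fe_{e'_1}.
\]
It is checked coordinate by coordinate: each of the sixteen vertices occurs in exactly one of $e'_5,\hdots,e'_8$ and in exactly one of $e'_1,\hdots,e'_4$. Hence $\langle H\setminus\{e'_1\}\rangle_{\Z}=\langle H\rangle_{\Z}$. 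For the Catalan half we invoke \cref{prop:test}: for every $h\in E'$ we exhibit a group $\fN$ and a map $\psi:V'\to\fN$ with $\psi(h)\notin\langle H\setminus\{h\}\rangle_{\psi,\fN}$.

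\textbf{Reducing to one edge.} First we use the symmetries of $E'$. Relabelings of the letters $\{a,b,c,d\}$ (applied in every coordinate), combined with permutations of the four coordinates, preserve $E'$. Examples are the Klein-four relabelings that act transitively on $e'_1,\hdots,e'_4$ and on $e'_5,\hdots,e'_8$. Any map $\psi$ that depends only on the letter is carried along by these symmetries. So it suffices to treat one representative of each orbit of edges. If some coordinate permutation exchanges the two blocks $\{e'_1,\ldots,e'_4\}$ and $\{e'_5,\ldots,e'_8\}$, a single edge suffices; otherwise we also redo the argument for $e'_5$, which should be analogous.

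\textbf{The test group.} We take $\fN$ to be the free $2$-step nilpotent group on $g_1,\hdots,g_4$, with $\psi(a_i)=g_1$, $\psi(b_i)=g_2$, $\psi(c_i)=g_3$, $\psi(d_i)=g_4$. Every element of $\fN$ has the unique normal form $g^{\alpha}\prod_{j<k}[g_j,g_k]^{\beta_{jk}}$ with $\alpha\in\Z^4$ and $\beta\in\Z^6$, and multiplication is
\[
(\alpha,\beta)(\alpha',\beta')=\Bigl(\alpha+\alpha',\ \beta+\beta'+B(\alpha,\alpha')\Bigr),\qquad B(\alpha,\alpha')_{jk}=\alpha_k\alpha'_j .
\]
(The sign convention is to be fixed later.) Thus $\fN^4$ is parametrized by $(\alpha^{(1)},\hdots,\alpha^{(4)},\beta^{(1)},\hdots,\beta^{(4)})$, one pair per coordinate. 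Let $K$ be the subgroup generated by $\psi(e'_2),\hdots,\psi(e'_8)$.

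\textbf{Describing $K$.} We aim to show that $K$ lies inside an explicit set of the form
\[
S=\Bigl\{(\alpha,\beta):\ \alpha\in L,\ \ \textstyle\sum_i \beta^{(i)}\ \text{satisfies}\ \ \lambda(\beta)\equiv Q(\alpha)\pmod 2\Bigr\}.
\]
Here $L\subseteq(\Z^4)^4$ is the lattice spanned by the abelianized generators, and $\lambda$ is a suitable linear functional on the commutator coordinates, for instance the sum of selected $\beta^{(i)}_{jk}$ with $j,k$ running over the letters. The function $Q$ is a quadratic form on $L$, chosen so that the cocycle identity
\[
\lambda\bigl(B(\alpha,\alpha')\bigr)\equiv Q(\alpha+\alpha')-Q(\alpha)-Q(\alpha')\pmod 2
\]
holds on $L$. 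Under this identity $S$ is a subgroup. We then check the seven generators against the congruence; each has $\beta=0$, so we need $Q=0$ on each. Commutators of generators land in $S$ automatically, since $S$ is a subgroup. Finally, $\psi(e'_1)$ has abelian part in $L$ by the Abelian cover identity and $\beta=0$, and we verify that $Q(\psi(e'_1))\equiv1$, so $\psi(e'_1)\notin S\supseteq K$.

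\textbf{Main obstacle.} The hard step is finding $\lambda$ and $Q$. Heuristically, the natural first guess is $\lambda$ equal to the total count of commutators $[g_j,g_k]$ summed over the four coordinates, and $Q$ the induced form. Concretely, write $\alpha=\sum_m n_m\,\mathrm{ab}(\psi(e'_m))$ for $m\ge2$ in the basis of generators; this is unique if the seven abelianized generators are independent, which we must check. Then set $Q=\sum_{m<m'}n_mn_{m'}\,\lambda\bigl(B(\mathrm{ab}\,\psi(e'_m),\mathrm{ab}\,\psi(e'_{m'}))\bigr)$, which makes the cocycle identity automatic. It then remains to evaluate $Q$ at the coefficient vector $(-1,-1,-1,1,1,1,1)$ of $\psi(e'_1)$ and hope for an odd value. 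If the total count gives an even value, we will try weighting coordinates or commutator pairs differently, or reducing mod $4$ instead of $2$. Mod $4$ is also the route to a finite group: quotient by $g_i^4=[g_j,g_k]^4=1$, which gives order $4^{10}$.
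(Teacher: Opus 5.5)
\textbf{There is a genuine gap: the proposal never produces the invariant that separates $\psi(e'_1)$ from the subgroup generated by the other edges, and it misplaces where the difficulty lies.}

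Your setup matches the paper's: the same hypergraph, the same Abelian identity, the same free $2$-step nilpotent test group $\fN$ with the letter map $\psi$, and the same reduction via the group-extension test. Your subgroup $S$, once filled in, is a repackaging of the paper's lattice invariant.

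The coefficient-defined $Q$ does \emph{not} make the cocycle identity automatic. Write $b_{mm'}:=\lambda(B(\mathrm{ab}\,\psi(e'_m),\mathrm{ab}\,\psi(e'_{m'})))$. The polarization of your $Q$ is $\sum_{m<m'}(n_mn'_{m'}+n'_mn_{m'})b_{mm'}$. By contrast, $\lambda(B(\alpha,\alpha'))=\sum_{m,m'}n_mn'_{m'}b_{mm'}$, where the diagonal terms vanish. These agree iff $b_{mm'}\equiv b_{m'm}$ for all $m<m'$. Since $B(\alpha,\alpha')-B(\alpha',\alpha)$ is the central part of the commutator, this says exactly that $\lambda$ annihilates all $21$ commutators $[\psi(e'_m),\psi(e'_{m'})]$, $2\le m<m'\le 8$. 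That is precisely the lattice condition the paper checks case by case in its appendix.

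With this $Q$, the final test $Q(\mathrm{ab}\,\psi(e'_1))\not\equiv 0$ is $\lambda$ evaluated on the central part of $\psi(e'_2)^{-1}\psi(e'_3)^{-1}\psi(e'_4)^{-1}\psi(e'_5)\psi(e'_6)\psi(e'_7)\psi(e'_8)$. Up to sign and modulo the commutator lattice $\Lambda$, this is the paper's target. Your first guess already fails the condition: $[\psi(e'_2),\psi(e'_5)]=([g_1,g_2]^{-1},1,[g_2,g_3],[g_2,g_4])$ has an odd total commutator count, so that $S$ is not even a subgroup.

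More seriously, no choice of $\lambda$ can work modulo $2$. Reduce mod $2$ the commutator vectors of the pairs $(m,m')=(2,4),(2,5),(2,7),(3,5),(3,8),(4,5),(4,6),(6,8)$. Their sum is the paper's target matrix, the central part of $\psi(e'_1)\psi(e'_2)\psi(e'_3)\psi(e'_4)\psi(e'_5)^{-1}\psi(e'_6)^{-1}\psi(e'_7)^{-1}\psi(e'_8)^{-1}$. In pair order $12,13,14,23,24,34$ its rows are $(1,1,1,1,1,1)$, $(0,1,1,1,1,0)$, $(1,0,0,0,0,1)$, $(0,\dots,0)$. So the target lies in $\Lambda+2\Z^{24}$, and every mod-$2$ functional vanishing on $\Lambda$ also vanishes on it.

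The obstruction is genuinely a mod-$4$ phenomenon. The paper uses $\zeta$ with weight rows $(1,1,0,3,0,2)$ for coordinates $1,2$ and $(3,3,0,3,2,0)$ for coordinates $3,4$. This $\zeta$ is $\equiv 0 \pmod 4$ on all $21$ commutators and $\equiv 2$ on the target. Your proposal stops at ``hope for an odd value'', so this certificate, which is the actual content of the theorem, is missing.

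A minor point on symmetry: uniform letter relabelings combined with coordinate permutations always fix the set of constant edges, so they never swap the two blocks. The symmetry that does is the per-coordinate relabeling by $\mathrm{id},(ab)(cd),(ac)(bd),(ad)(bc)$ in coordinates $1,\dots,4$. It sends $e'_1\mapsto e'_5$ and makes the hypergraph edge-transitive, so treating $e'_1$ alone suffices.
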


As discussed in the introduction, we consider a $4$-partite $4$-uniform hypergraph with vertex sets $V_i = \{a_i, b_i, c_i, d_i\}$ for all $i \in [4]$ and edge set
\begin{align*}
    E := \{&e_1 := (a_1,a_2,a_3,a_4),\ 
    e_2 := (b_1, b_2, b_3, b_4),\ 
    e_3 := (c_1, c_2, c_3, c_4),\ 
    e_4 := (d_1, d_2, d_3, d_4),\\
    &e_5 := (a_1, b_2, c_3, d_4),\ 
    e_6 := (b_1, a_2, d_3, c_4),\ 
    e_7 := (c_1, d_2, a_3, b_4),\ 
    e_8:= (d_1, d_2, b_3, a_4)\}.
\end{align*}
Let $V = V_1 \cup V_2 \cup V_3 \cup V_4$. t is immediate that $H = (V, E)$ has an Abelian cover as
\begin{align}
    \fe_{e_5} + \fe_{e_6} + \fe_{e_7} + \fe_{e_8} - \fe_{e_2} - \fe_{e_3} - \fe_{e_4} = \fe_{e_1}.\label{eq:abelian-4}
\end{align}
Thus, it suffices to prove that $H$ lacks a Catalan cover. Let $E' = \{e_2, e_3, e_4, e_5, e_6, e_7, e_8\}$ and $H' = (V, E')$, by \cref{eq:abelian-4}, we know that $\langle H\rangle_{\mathbb Z} = \langle H'\rangle_{\mathbb Z}$. However, we claim that $\langle H'\rangle_{\G} \subsetneq \langle H\rangle_{\G}$, where $\G = \G(V_1, V_2, V_3, V_4)$.

\begin{lemma}\label{lem:no-Catalan}
$\langle H'\rangle_{\G} \subsetneq \langle H\rangle_{\G}$. 
\end{lemma}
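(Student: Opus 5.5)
The plan is to apply \cref{prop:test}: I will find a group $\fN$ and a map $\psi : V \to \fN$ with $\langle H'\rangle_{\psi,\fN} \subsetneq \langle H\rangle_{\psi,\fN}$, which is equivalent to $\psi(e_1) \notin \langle H'\rangle_{\psi,\fN}$. Following the introduction, I take $\fN$ to be the free $2$-step nilpotent group on generators $g_1,g_2,g_3,g_4$. I set $\psi(a_i)=g_1$, $\psi(b_i)=g_2$, $\psi(c_i)=g_3$, $\psi(d_i)=g_4$ for all $i$.

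I will use the edge list from \cref{eq:hypergraph-4}, where $e_8=(d_1,c_2,b_3,a_4)$. This is the version for which \cref{eq:abelian-4} actually holds. The images in $\fN^4$ are then
\[
x_1=(g_1,g_1,g_1,g_1),\quad x_2=(g_2,g_2,g_2,g_2),\quad x_3=(g_3,g_3,g_3,g_3),\quad x_4=(g_4,g_4,g_4,g_4),
\]
\[
x_5=(g_1,g_2,g_3,g_4),\quad x_6=(g_2,g_1,g_4,g_3),\quad x_7=(g_3,g_4,g_1,g_2),\quad x_8=(g_4,g_3,g_2,g_1).
\]

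First I describe $K := \langle x_2,\dots,x_8\rangle \le \fN^4$ explicitly. There is a central extension $1 \to [\fN,\fN]^4 \to \fN^4 \to \Z^{16} \to 1$, with $[\fN,\fN]^4 \cong \Z^{24}$ freely generated by the $[g_i,g_j]$, $i<j$, in each coordinate. Let $L \le \Z^{16}$ be the lattice spanned by the abelianized images $\bar x_2,\dots,\bar x_8$. These seven vectors are $\Z$-linearly independent, which I will verify by a small rank computation. Let $M \le \Z^{24}$ be spanned by the $21$ commutators $[x_i,x_j]$, $2\le i<j\le 8$. Each of these is computed coordinatewise using bilinearity of the commutator in a $2$-step nilpotent group.

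The structural claim is that an element $y \in \fN^4$ lies in $K$ if and only if $\bar y = \sum_{i\ge 2} \alpha_i \bar x_i \in L$ and
\[
y \cdot \Bigl(x_2^{\alpha_2}\cdots x_8^{\alpha_8}\Bigr)^{-1} \in M.
\]
This holds because commutators are central and bilinear. Any word in the $x_i^{\pm1}$ can be reordered into $x_2^{\alpha_2}\cdots x_8^{\alpha_8}$ at the cost of an element of $M$. Independence of the $\bar x_i$ makes the exponents $\alpha_i$ unique, so the test above is well defined.

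Next I apply the test to $y=x_1$. By \cref{eq:abelian-4}, the exponents are $\alpha_5=\alpha_6=\alpha_7=\alpha_8=1$ and $\alpha_2=\alpha_3=\alpha_4=-1$. I compute
\[
c := x_1\,\bigl(x_2^{-1}x_3^{-1}x_4^{-1}x_5x_6x_7x_8\bigr)^{-1} \in \Z^{24}
\]
coordinate by coordinate, with the Hall-type collection formula $g_jg_i=g_ig_j[g_j,g_i]$. The goal is then to show $c \notin M$. I will look for a linear functional $\lambda : \Z^{24} \to \Z_2$, in the spirit of the ``parity condition'' mentioned in the introduction, that vanishes on all $21$ generators of $M$ but has $\lambda(c)=1$. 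The candidate I will try first is a signed sum of the $[g_i,g_j]$-coefficients over the four coordinates, weighted to match the Latin-square pattern of $e_5,\dots,e_8$. If that fails, I will compute the Smith normal form of the $24\times 21$ matrix generating $M$ and read off an obstruction modulo $2$ or $4$. Once $c\notin M$ is established, $x_1\notin K$, and \cref{prop:test} gives $\langle H'\rangle_{\G}\subsetneq\langle H\rangle_{\G}$.

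The main obstacle is this last step: organizing the commutator bookkeeping so that the bilinear images of the $21$ generators of $M$ are tractable, and locating the functional $\lambda$. The symmetry of the configuration under permuting $\{g_1,\dots,g_4\}$ together with the coordinates should cut the computation down considerably. The reduction to a linear-algebra problem in $\Z^{24}$ is routine by comparison.
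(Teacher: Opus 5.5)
\textbf{Gap: the certificate that $c\notin M$ is missing, and the tool you plan to try first cannot supply it.}

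Your reduction matches the paper's. You use the same free class-$2$ group and the same $\psi$. You describe the generated subgroup as $K=\{x_2^{\alpha_2}\cdots x_8^{\alpha_8}m : m\in M\}$, with exponents forced by \cref{eq:abelian-4}. Your $c$ differs from the paper's element $P:=x_1x_2x_3x_4x_5^{-1}x_6^{-1}x_7^{-1}x_8^{-1}$ only by an element of $M$. You are also right that $e_8=(d_1,c_2,b_3,a_4)$ is the intended edge, consistent with the paper's $\psi(e_8)$. But the whole content of the lemma is the proof that $c\notin M$, and the proposal stops exactly there.

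A $\Z_2$-valued ``parity'' functional cannot work, however you weight it. Reducing modulo $2$, where signs disappear, one checks that in $\Z^{24}$
\[
P \equiv [x_2,x_3]+[x_2,x_5]+[x_2,x_7]+[x_3,x_4]+[x_3,x_6]+[x_3,x_7]+[x_4,x_5]+[x_4,x_6]+[x_5,x_7] \pmod{2}.
\]
Hence $c\in M+2\Z^{24}$, and every $\Z_2$-valued functional vanishing on $M$ also vanishes on $c$. Moreover $2c\in M$, so $c$ lies in the $\Q$-span of $M$ and no $\Z$-valued functional separates it either. Any homomorphism $\Z^{24}/M\to\Z/n\Z$ that is nonzero on $c$ must have $4\mid n$. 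The ``parity condition'' in the introduction refers to an invariant landing on $2$ rather than $0$ modulo $4$, not to a mod-$2$ functional.

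The missing idea is the paper's mod-$4$ invariant $\zeta$. Order the commutator coefficients in each coordinate as $[g_1,g_2],[g_1,g_3],[g_1,g_4],[g_2,g_3],[g_2,g_4],[g_3,g_4]$. Weight them by $(1,1,0,3,0,2)$ in coordinates $1$ and $2$, and by $(3,3,0,3,2,0)$ in coordinates $3$ and $4$. Then $\zeta\equiv 0\pmod 4$ on all $21$ generators $[x_i,x_j]$ of $M$; this is a finite case check. The coefficient rows of $P$ are $(1,1,1,1,1,1)$, $(0,1,1,1,1,0)$, $(1,0,0,0,0,1)$ and $0$, so $\zeta(P)\equiv 7+4+3\equiv 2\pmod 4$.

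Your Smith-normal-form fallback decides membership in $M$, so it would in principle uncover such a mod-$4$ functional. As written, though, no certificate is produced. To complete the proof you need to exhibit a $\Z/4\Z$-valued invariant like $\zeta$ and verify it on the $21$ commutators and on $c$.
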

\begin{proof}
Assume for sake of contradiction that $\fg_{e_1} = \fg_{a_1}\fg_{a_2}\fg_{a_3}\fg_{a_4} \in \langle H'\rangle_{\fG}$. By \cref{prop:test}, this means for any group $\fN$ and any map $\psi : V \to \fN$, we have that
\begin{align}
  \langle H'\rangle_{\fN,\psi} = \langle H\rangle_{\fN,\psi}.\label{eq:psi-bad}
\end{align}
To achieve a contradiction, we construct an explicit group $\fN$ for which \cref{eq:psi-bad} cannot hold, namely the free two-step nilpotent group\footnote{Also called the free nilpotent group of class $2$.} with four generators. We state the basic facts about such group we need to prove \cref{lem:no-Catalan}, see (e.g., \cite{Tao09} or \cite[Chapter 4]{CMZ17} for a more comprehensive discussion). In particular, our group has four generators $\fu_1, \fu_2, \fu_3, \fu_4 \in \fN$ such that an arbitrary element $\fn \in \fN$ has a unique representation of the form
\begin{align}
  \fn = \prod_{i=1}^4 \fu_i^{\alpha_i} \prod_{1 \le i < j \le 4} [\fu_i, \fu_j]^{\beta_{i,j}}.\label{eq:fN}
\end{align}
for some $\alpha_1, \hdots, \alpha_4, \beta_{1,2}, \hdots, \beta_{3,4} \in \mathbb Z$. The group has the multiplication rule that
\[
  \left(\prod_{i=1}^4 \fu_i^{\alpha_i} \prod_{1 \le i < j \le 4} [\fu_i, \fu_j]^{\beta_{i,j}}\right) \cdot \left(\prod_{i=1}^4 \fu_i^{\alpha'_i} \prod_{1 \le i < j \le 4} [\fu_i, \fu_j]^{\beta'_{i,j}}\right) = \prod_{i=1}^4 \fu_i^{\alpha_i+\alpha'_i} \prod_{1 \le i < j \le 4} [\fu_i, \fu_j]^{\beta_{i,j}+\beta'_{i,j} - \alpha'_i\alpha_j}.
\]
By this multiplication rule, one can show for any $\fn, \fn', \fn'' \in \fN$, we have that $[[\fn, \fn'], \fn''] = 0$. That is, commutators commute with any group element. We also have an homomorphism $\xi : \fN \to \mathbb Z^4$ which maps
\[
\xi\left(\prod_{i=1}^4 \fu_i^{\alpha_i} \prod_{1 \le i < j \le 4} [\fu_i, \fu_j]^{\beta_{i,j}}\right) = (\alpha_1, \alpha_2, \alpha_3, \alpha_4).
\]
In particular, for any $\fn, \fn' \in \fN$, we have that $\xi(\fn\fn') = \xi(\fn'\fn)$.

Now pick $\psi : V \to \fN$ such that for all $i \in [4]$, we have that
\[
  \psi(a_i) := \fu_1, \psi(b_i) := \fu_2, \psi(c_i) := \fu_3, \psi(d_i) := \fu_4.
\]
We can also view $\psi$ as a map from $E \to \fN^4$ as follows
\begin{align*}
\psi(e_1) = (\fu_1, \fu_1, \fu_1, \fu_1), \psi(e_2) = (\fu_2, \fu_2, \fu_2, \fu_2), \psi(e_3) = (\fu_3, \fu_3, \fu_3, \fu_3), \psi(e_4) = (\fu_4, \fu_4, \fu_4, \fu_4)\\
\psi(e_5) = (\fu_1, \fu_2, \fu_3, \fu_4), \psi(e_6) = (\fu_2, \fu_1, \fu_4, \fu_3), \psi(e_7) = (\fu_3, \fu_4, \fu_1, \fu_2), \psi(e_8) = (\fu_4, \fu_3, \fu_2, \fu_1).
\end{align*}
An important observation we make is that for all $i \in [8]$, $\xi(\psi(e_i)) = \fe_{e_i}$.

To rule out \cref{eq:psi-bad}, it suffices to show that $\psi(e_1)$ is not contained in the group $\fU \subseteq \fN^4$ generated by $\{\psi(e_i) \mid i \in \{2, \hdots, 8\}\}$. Any element of $\fv \in \fU$ can be expressed (perhaps non-uniquely) in the form
\begin{align}
  \fv = \prod_{i=2}^8 \psi(e_i)^{\gamma_i} \prod_{2 \le i < j \le 8} [\psi(e_i), \psi(e_j)]^{\delta_{i,j}}\label{eq:fv}
\end{align}
for $\gamma_i, \delta_{i,j} \in \mathbb Z$. Assume for sake of contradiction that $\fv = \psi(e_1)$. Applying the map $\xi$ to both sides, and observing that $\xi([\psi(e_i), \psi(e_j)]) = 0$, we can deduce that
\[
  \fe_{e_1} = \sum_{i=2}^8 \gamma_i \fe_{e_i}.
\]
That is, $(\gamma_2, \hdots, \gamma_8)$ are the coefficients of an Abelian cover. One can show\footnote{The reason why is that $\{\fe_{e_i} : i \in \{2, \hdots, 8\}\}$ are linearly independent over $\mathbb \F_2$. This can be observed by noticing that $\fe_{e_5}, \hdots, \fe_{e_8}$each have a non-zero coordinate not used by any other vector. Thus, any non-trivial linear combination must only use $\fe_{e_2}, \fe_{e_3}, \fe_{e_4}$ which cannot occur.\label{foot}} that \cref{eq:abelian-4} is the \emph{only} possible Abelian cover of $e_1$ using $e_2, \hdots, e_8$. Thus, we must have that $\gamma_{2} = \gamma_{3} = \gamma_4 = -1$ and $\gamma_5 = \gamma_6 = \gamma_7 = \gamma_8 = 1$. Rearranging \cref{eq:fv}, we must have that
\begin{align}
  \prod_{i=1}^4 \psi(e_i) \prod_{i=5}^8 \psi(e_i)^{-1} = \prod_{2 \le i < j \le 8} [\psi(e_i), \psi(e_j)]^{\delta_{i,j}}.\label{eq:comm-eq}
\end{align}
Let $\fM \subseteq \fN$ be the Abelian subgroup of elements of the form
\[
  \fm = \prod_{1 \le i < j \le 4} [\fu_i, \fu_j]^{\beta_{i,j}}.
\]
Let $\chi : \fM \to \mathbb Z^6$ be the projection map $\chi(\fm) = (\beta_{1,2}, \beta_{1,3}, \beta_{1,4}, \beta_{2,3}, \beta_{2,4}, \beta_{3,4})$. By design, both sides of \cref{eq:comm-eq} are contained in $\fM$, so we may apply $\chi$ (as a map from $\fM^4$ to $\mathbb Z^{4 \times 6}$) to both sides to get that
\begin{align}
  \chi\left(\prod_{i=1}^4 \psi(e_i) \prod_{i=5}^8 \psi(e_i)^{-1}\right) = \sum_{2 \le i < j \le 8} \delta_{i,j}\chi([\psi(e_i), \psi(e_j)]).\label{eq:lattice}
\end{align}
With a bit of computation, we can see that
\begin{align}
  \chi\left(\prod_{i=1}^4 \psi(e_i) \prod_{i=5}^8 \psi(e_i)^{-1}\right) = \begin{pmatrix}
    1 & 1 & 1 & 1 & 1 & 1\\
    0 & 1 & 1 & 1 & 1 & 0\\
    1 & 0 & 0 & 0 & 0 & 1\\
    0 & 0 & 0 & 0 & 0 & 0.
  \end{pmatrix}\label{eq:comm-comp}
\end{align}
For example, the first row of the above matrix follows from the fact that in our 2-step nilpotent group, we have that
\[
  \fu_1\fu_2\fu_3\fu_4\fu_1^{-1} \fu_2^{-1} \fu_3^{-1} \fu_4^{-1} = \prod_{1 \le i < j\le 4} [\fu_i, \fu_j].
\]
Note that the RHS of \cref{eq:lattice} defines a lattice $\Lambda$ over $\mathbb Z^{4 \times 6}$. To get a contradiction, it suffices to show that the RHS of \cref{eq:comm-comp} is not contained in this lattice. To do this, we describe an invariant that holds for every element of $\Lambda$ but not for the RHS of \cref{eq:comm-comp}.

More precisely, consider the map $\zeta : \mathbb Z^{6 \times 4} \to \mathbb Z/4\mathbb Z$ defined by
\begin{align}
\zeta(M) := M \cdot \begin{pmatrix}1 & 1 & 0 & 3 & 0 & 2\\1 & 1 & 0 & 3 & 0 & 2\\3 & 3 & 0 & 3 & 2 & 0\\3 & 3 & 0 & 3 & 2 & 0\end{pmatrix}\mod 4,\label{eq:inv}
\end{align}
where $\cdot$ denotes dot product (i.e., sum of entry-wise products).

Observe that
\[
  \zeta\begin{pmatrix}
    1 & 1 & 1 & 1 & 1 & 1\\
    0 & 1 & 1 & 1 & 1 & 0\\
    1 & 0 & 0 & 0 & 0 & 1\\
    0 & 0 & 0 & 0 & 0 & 0.
  \end{pmatrix} \equiv 1 + 1 + 3 + 2 + 1 + 3 + 3 \equiv 2 \mod 4.
\]
However, we argume that for all $M \in \Lambda$, we have that $\zeta(M) \equiv 0 \mod 4$. To do this, it suffices to argue that $\zeta(\chi([\psi(e_i), \psi(e_j)])) \equiv 0 \mod 4$ for all $2 \le i < j \le 8$. This can be done by checking all $21$ cases, which is done in Appendix~\ref{app:A}. Thus, we contradict our earlier assumption that $\langle H'\rangle_{\G} = \langle H\rangle_{\G}$.
\end{proof}

By \cref{lem:no-Catalan}, we have that $e_1$ cannot be deduced in a Catalan cover from $e_2, \hdots, e_8$. Due to the symmetries of the hypergraph $H$, this also implies that not $e_i$ can be deduced from $E \setminus \{e_i\}$. Therefore, $H$ lacks a Catalan cover, proving \cref{thm:4uniform}.

\begin{remark}\label{rem:finite-group}
Note by the choice of $\zeta$ in \cref{eq:inv} we did not need the full power of $\fN$ to refute the existence of a Catalan cover. Rather, we could have made $\fN$ finite by imposing that all indices in \cref{eq:fN} are taken modulo $4$, making the group have size $4^{4 + \binom{4}{2}} = 2^{20}$. Call this finite group $\fN_4$. By inspection, the details of the proof remain unchanged when $\fN$ is replaced by $\fN_4$ as \cref{foot} still applies due to linear independent of $e_2, \hdots, e_8$ over $\F_2$. We make use of this observation in \cref{thm:4-uniform-Cat} to argue that a particular CSP lacking an Abelian extension has linear non-redundancy.
\end{remark}

\section{Applications to CSP Non-redundancy}\label{sec:csp}

So far, we have studied Abelian and Catalan covers intrinsically as combinatorial objects. We now apply these combinatorial results to the \emph{non-redundancy} of constraint satisfaction problems. We begin by establishing some notation.

\subsection{CSP Notation}

A CSP is specified by a relation $R \subseteq D^r$ (with $r$ denoting the arity of the CSP and $D$ the domain), along with a universe of $n$ variables $x_1, \dots x_n \in D$. The CSP consists of $m$ constraints, each of which is the result of applying the relation $R$ to an ordered subset of $r$ variables. To specify a CSP instance, we thus provide: (a) the relation $R$, (b) the number of variables $n$, and (c) a set of constraints $H \subseteq V_1 \times V_2 \times \dots \times V_r$. We use $\mathrm{CSP}(R)$ to denote the collection of all possible CSP instances using relation $R$.

\begin{definition}
    We say that a CSP instance with relation $R$, $n$ variables, and constraints $H \subseteq V_1 \times V_2 \times \dots \times V_r$ is \emph{non-redundant}, if for each constraint $h \in H$, there exists an assignment $\psi: [n] \rightarrow D$ such that:
    \begin{enumerate}
        \item $\psi(h) \notin R$.
        \item For every $h' \neq h$, $\psi(h') \in R$.
    \end{enumerate}

    We use $\NRD(R, n)$ to denote the largest number of constraints that can be in any non-redundant instance.
\end{definition}

\begin{remark}\label{rem:nrd}
    We assume without loss of generality that our CSP instances are \emph{$r$-partite}. This means we have a partition of the variables $V = V_1 \cup \cdots \cup V_r$ such that the set of constraints is a subset of $V_1 \times V_2 \times \dots \times V_r$.  Up to constant factors (depending only on $r$), this does not alter the non-redundancy (e.g., see Lemma 2.3 and Lemma 2.4 of \cite{brakensiek2025Richness}).
\end{remark}

\subsection{Extensions and Catalan Identities} 
A frequently used tool \cite{lagerkvist2020Sparsification,bessiere2020Chain,khanna2025efficient,brakensiek2025redundancy,brakensiek2025Richness} in the study of CSP non-redundancy and related questions is the use of \emph{extensions}. In particular, we say that $R \subseteq D^r$ has an extension $S \subseteq E^r$ (with $D \subseteq E$) if $R = S \cap D^r$. In this case, it is easy to prove that $\NRD(R, n) \le \NRD(S, n)$.

For example, if $E$ is a finite group and $S \subseteq E^r$ is a coset, then by \cref{fact:malt}, we have that $\NRD(R, n) \le O(|E|^2n)$. More generally, this fact holds when $S$ is what is known as a \emph{Mal'tsev relation}. That is, there exists a \emph{Mal'tsev polymorphism} $p : E^3 \to E$ such that $p(x,y,y) = p(y,y,x) = x$ for $x,y \in E$ and for all $(a_1, \hdots, a_r), (b_1, \hdots, b_r), (c_1, \hdots, c_r) \in S$, we have that
\[
  (p(a_1, b_1, c_1), p(a_2, b_2, c_2), \cdots p(a_r, b_r, c_r)) \in S.
\]
In particular, if $S$ is a Mal'tsev relation \emph{and $|E|$ is finite}, then we know that $\NRD(R, n) \le O(|E|^2n)$ by \cref{fact:malt}.

Mal'tsev extensions are a bit difficult to reason about as there is much freedom in the choice of the domain $E$ and the associated Mal'tsev polymorphism. \cite{brakensiek2025Richness} recently made significant progress in understanding arbitrary Mal'tsev extensions by showing that any relation with a Mal'tsev extension has an extension into a (possibly infinite) group. In fact, we always have an extension into the Catalan group defined in \cref{sec:Catalan}. We discuss this connection in more detail in \cref{subsec:app-1.2}. However, we first gain a deeper understanding of extensions into Abelian groups.

\subsection{Application of \texorpdfstring{\cref{thm:AbelianCoverIntro}}{Theorem 1.1} to Non-redundancy}

Assume our relation $R \subseteq D^r$ has an Abelian extension $S \subseteq E^r$. That is, $E$ is a possibly infinite group. By previous works on CSP non-redundancy \cite{jansen2019optimal,chen2020BestCase,khanna2025efficient}, we know that $\NRD(R, n) = O_{D,r}(n)$, where the hidden constant depends (rather nontrivially) on both the arity $r$ and the domain $D$ of the original relation $R$, but \emph{not} on the size of $E$. In particular, these bounds apply when $E$ is finite. Using \cref{thm:AbelianCoverIntro}, we can completely eliminate the dependence on $D$ as well as make the dependence on $r$ logarithmic. 

\begin{theorem}\label{thm:Abelian-NRD-UB}
Assume that $R \subseteq D^r$ has an extension into an Abelian group, then $\NRD(R, n) = O(n \log r)$.
\end{theorem}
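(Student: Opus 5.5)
We will show that any non-redundant instance of $\CSP(R)$ yields an Abelian non-redundant hypergraph, and then apply \cref{lemma:Z-NRD}. By \cref{rem:nrd}, it suffices (up to a constant factor depending only on $r$, which we will need to check is at most polynomial in $r$ or otherwise harmless) to work with an $r$-partite non-redundant instance $H \subseteq V_1 \times \cdots \times V_r$ on $n$ variables. Let $S \subseteq E^r$ be the Abelian extension of $R$, with $E$ an Abelian group written additively and $S$ a coset $s_0 + S_0$ of a subgroup $S_0 \le E^r$.

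\textbf{Claim.} If $H$ is a non-redundant instance of $\CSP(R)$, then the hypergraph $H$ (viewed in $\Z^V$ via the vectors $\fe_h$) is Abelian non-redundant.

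To prove the claim, suppose some $h \in H$ satisfies $\fe_h = \sum_{h' \ne h} \alpha_{h'} \fe_{h'}$ with $\alpha_{h'} \in \Z$. Non-redundancy gives an assignment $\psi : V \to D \subseteq E$ with $\psi(h') \in S$ for all $h' \ne h$ and $\psi(h) \notin R$. Since $\psi(h) \in D^r$ and $R = S \cap D^r$, we get $\psi(h) \notin S$.

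Now compare coordinates. Since $H$ is $r$-partite, the $V_j$-part of the identity reads $\fe_{h_j} = \sum_{h'\ne h} \alpha_{h'} \fe_{h'_j}$ in $\Z^{V_j}$. Applying the $\Z$-linear map $\Z^{V_j} \to E$, $\fe_v \mapsto \psi(v)$, yields $\psi(h) = \sum_{h' \ne h} \alpha_{h'} \psi(h')$ in $E^r$. Summing the $j$-th coordinate identities also shows $\sum_{h'\ne h} \alpha_{h'} = 1$: count total mass in any one part $V_j$, where each $\fe_{h'_j}$ has mass $1$. So $\psi(h)$ is an affine integer combination, with coefficients summing to $1$, of elements of the coset $S$. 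Hence it lies in $S$, a contradiction. This proves the claim.

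By \cref{lemma:Z-NRD}, $|H| \le n(1 + \log_2(r)/2)$, so $\NRD(R,n) = O(n\log r)$ for $r$-partite instances. The main obstacle is removing the $r$-partite assumption without losing more than the claimed factor. Rather than invoking \cref{rem:nrd} as a black box, we will try to run the argument directly on an arbitrary instance. Lemma~\ref{lemma:Z-NRD} is stated for arbitrary hypergraphs, but the coordinate-splitting step above used partiteness.

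For a general instance, we instead work with vectors in $\Z^{V \times [r]}$, indexed by (variable, position). With this encoding the same argument applies verbatim, since the linear map $\fe_{(v,j)} \mapsto \psi(v)$ in coordinate $j$ is well-defined. These vectors have weight $r$ and live in dimension $nr$. To recover the $n\log r$ bound we will instead bound the rank of the span by $n$: the Abelian lattice generated lies in a rank-$n$ image space, so the ``infinite index'' steps number at most $n$. The determinant/index bound of \cref{clm:boundIndex} still gives at most $(\sqrt r)^{n}$ for the finite-index steps. If this refinement proves delicate, we fall back on \cref{rem:nrd}, accepting an $r$-dependent constant.
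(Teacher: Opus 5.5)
Your argument for $r$-partite instances is correct and is essentially the paper's proof. The Abelian dependency $\fe_h=\sum_{h'\neq h}\alpha_{h'}\fe_{h'}$ is pushed block by block through the $\Z$-linear maps $\fe_v\mapsto\psi(v)$, so every non-redundant instance is Abelian non-redundant, and \cref{lemma:Z-NRD} then gives $|H|\le n(1+\log_2(r)/2)$. Your check that $\sum_{h'\neq h}\alpha_{h'}=1$, obtained by comparing mass in a single part $V_j$, is more careful than the paper. The paper simply writes $R\ni\sum_i a_i\psi(h_i)$, and for a coset $S$ it is exactly your affine condition that justifies this.

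The proposed refinement for non-partite instances does not work, because the rank claim is false. In $\Z^{V\times[r]}$, two tuples that differ only in position $j$ have difference $\fe_{(u,j)}-\fe_{(v,j)}$. So the position-indexed edge vectors span a lattice of rank $r(n-1)+1$, not $n$, and an individual instance can have rank of order $nr$. Projecting to $\Z^V$ would bound the rank by $n$, but a dependency there does not lift to $\Z^{V\times[r]}$, so the coordinatewise linear-map argument is lost. Your encoding is just the $r$-partite lift on $nr$ vertices, and \cref{lemma:Z-NRD} applied to it gives only $nr(1+\log_2(r)/2)=O(nr\log r)$. Abelian covers alone cannot do better in this encoding, since the lifted hypergraph admits $\Q$-independent edge families of size about $nr$. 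For the same reason, a reduction through \cref{rem:nrd} that loses any factor growing with $r$ is not ``harmless'' for an $O(n\log r)$ claim. This matches the paper's own remark that dropping partiteness yields only $O_r(n)$. The $O(n\log r)$ statement is meant for $r$-partite instances, the convention of \cref{rem:nrd}, and your proof is complete for that reading once the refinement is discarded.
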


\begin{proof}
Let $V = V_1 \cup \cdots \cup V_r$ be our $r$-partite vertex set with $|V| = n$. Let $H \subseteq V_1 \times \cdots \times V_r$. We claim that if $H$ has an Abelian cover, then $H$ is a redundant instance of $\CSP(R)$. Then, by invoking \cref{thm:AbelianCoverIntro}, we can deduce that any non-redundant instance of $\CSP(R)$ must have $O(n\log r)$ hyperedges.

Assuming that $H$ has an Abelian cover, we may pick $h_0 \in H$ such that $\langle H \setminus \{h_0\}\rangle_{\mathbb Z} = \langle H\rangle_{\mathbb Z}$. In particular this means that $\fe_{h_0} \in \langle H \setminus \{h_0\}\rangle_{\mathbb Z}$. Thus, there exists $h_1, \hdots, h_\ell \in H \setminus \{h_0\}$ and coefficients $a_1, \hdots, a_\ell \in \mathbb Z$ such that
\[
    \fe_{h_0} = \sum_{i=1}^{\ell} a_i \fe_{h_i}.
\]
We use this fact to prove that $H$ is a redundant instance of $\CSP(R)$. Assume that $h_i = (v_{i,1}, \hdots, v_{i,r})$ for all $i \in \{0\}\cup [\ell]$. In particular, consider any assignment $\psi : V \to D$ such that\footnote{Given $h = (v_1, \hdots, v_r)\in H$, we let $\psi(h) := (\psi(v_1), \hdots, \psi(v_r)).$} $\psi(h_i) \in R$ for all $i \in [\ell]$. Also define vectors $w_1, \hdots, w_r \in \mathbb Z^V$ such that for all $v \in V$, we have that
\[
    w_{i,v} := \begin{cases}
    \varphi(v) & e \in V_i\\
    0 & \text{otherwise}.
    \end{cases}
\]
Let $\langle \cdot, \cdot\rangle : \Z^2 \to \Z$ be the standard bilinear form (a.k.a. ``inner product''). The key calculation is that
\begin{align*}
    R \ni \sum_{i=1}^{\ell} a_i \psi(h_i) &= \sum_{i=1}^{\ell} (a_i \varphi(v_{i,1}), a_i \varphi(v_{i,2}), \hdots, a_i \varphi(v_{i,r}))\\
    &= \sum_{i=1}^{\ell} (a_i \langle \fe_{v_{i,1}}, w_1\rangle, a_i \langle \fe_{v_{i,2}}, w_2\rangle, \hdots, a_i \langle \fe_{v_{i,r}}, w_r\rangle)\\
    &= \left(\left\langle \sum_{i=1}^{\ell} a_i \fe_{v_{i,1}}, w_1\right\rangle, \left\langle \sum_{i=1}^{\ell} a_i \fe_{v_{i,2}}, w_2\right\rangle, \hdots, \left\langle \sum_{i=1}^{\ell} a_i \fe_{v_{i,r}}, w_r\right\rangle\right)\\
    &= \left(\left\langle \fe_{v_{0,1}}, w_1\right\rangle, \left\langle \fe_{v_{0,2}}, w_2\right\rangle, \hdots, \left\langle \fe_{v_{0,r}}, w_r\right\rangle\right)\\
    &= (\varphi(v_{0,1}), \varphi(v_{0,2}), \hdots, \varphi(v_{0,r}))\\
    &= \varphi(h_0).
\end{align*}
In other words, any assignment which satisfies $h_1, \hdots, h_\ell$ also satisfies $h_0$. Thus, $H$ is a redundant instance of $\CSP(R)$, as desired.\end{proof}

\begin{remark}
If one drops the $r$-partite assumption on the instance (see \cref{rem:nrd}), we get a weaker bound of $\NRD(R, n) = O_r(n)$. However, our result is the first bound to not depend on the domain size $D$.
\end{remark}

\subsection{Application of \texorpdfstring{\cref{thm:CatalanArity3}}{Theorem 1.2} to Non-redundancy}\label{subsec:app-1.2}

The main implication of \cref{thm:CatalanArity3} is that Mal'tsev extensions imply Abelian extensions for arity-$3$ predicates.

\begin{theorem}\label{thm:3-Malt-Abelian}
Let $R \subseteq D^3$ be a relation with a Mal'tsev extension, then $R$ has an Abelian extension.
\end{theorem}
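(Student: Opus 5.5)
We will route everything through the Catalan group and use \cref{thm:3-Catalan} as the bridge from Catalan to Abelian structure. The plan is to first convert the Mal'tsev hypothesis into a statement about the Catalan group, and then build an Abelian extension out of the free Abelian lattice $\Z^V$.

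\textbf{Step 1: from Mal'tsev to a Catalan condition.} By the result of \cite{brakensiek2025Richness} recalled in \cref{subsec:app-1.2}, a Mal'tsev extension yields an extension of $R$ into the Catalan group. Concretely, we use it in the following form: whenever a tuple $t\in D^3$ arises from tuples of $R$ by a ``Catalan identity'', i.e.\ an alternating product $t_1t_2^{-1}t_3\cdots t_k$ coordinatewise in $\G$ that reduces to a single symbol in each coordinate, then $t\in R$. It is convenient to phrase this via a canonical instance. Take $V_i := D\times\{i\}$ for $i\in[3]$, let $H_R := \{((a_1,1),(a_2,2),(a_3,3)) : a\in R\}$, and let $H_{\mathrm{all}} := V_1\times V_2\times V_3$. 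The Mal'tsev hypothesis then says exactly that for every $h\in H_{\mathrm{all}}$,
\[
\fg_h\in\langle H_R\rangle_{\G}\ \Longrightarrow\ h\in H_R .
\]
This is the Catalan polymorphism statement of \cite{brakensiek2025Richness}, specialized to the generic instance.

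\textbf{Step 2: Catalan closure equals Abelian closure, hence the candidate extension.} We claim that the same implication holds with $\G$ replaced by $\Z$. Suppose $h\in H_{\mathrm{all}}$ and $\fe_h\in\langle H_R\rangle_{\Z}$. Put $H := H_R$ and $H' := H_R\cup\{h\}$. Then $\langle H\rangle_\Z=\langle H'\rangle_\Z$, so \cref{thm:3-Catalan} gives $\langle H\rangle_\G=\langle H'\rangle_\G$. Hence $\fg_h\in\langle H_R\rangle_\G$, and by Step 1 we get $h\in H_R$. Now take the Abelian group $E:=\Z^{V}$ (or each coordinate's copy $\Z^{V_i}$). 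Define $\sigma_i(a):=\fe_{(a,i)}$, which is injective, and let $S:=\fe_{h_0}+\langle \fe_{h}-\fe_{h'}: h,h'\in H_R\rangle_\Z$, viewed coordinatewise in $\Z^{V_1}\times\Z^{V_2}\times\Z^{V_3}$. This $S$ is a coset, hence a Mal'tsev relation in an Abelian group. A tuple $\sigma(t)$ lies in $S$ iff $\fe_t$ is an affine integer combination of the vectors $\fe_{h}$, $h\in H_R$. Such a combination has coefficients summing to $1$, which is automatic here since every vector has exactly one $1$ in the $V_1$ block. So membership is equivalent to $\fe_t\in\langle H_R\rangle_\Z$, and by the claim this holds iff $t\in R$. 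Therefore $\sigma(D)^3\cap S=\sigma(R)$, and $S$ is an Abelian extension of $R$. (Using a single group $E$ and a uniform $\sigma$ is handled as in \cref{rem:nrd}, for instance by taking $E=\Z^{V}$ with $\sigma(a)=\fe_{(a,1)}+\fe_{(a,2)}+\fe_{(a,3)}$-style bookkeeping, or by the standard reduction to $r$-partite extensions.)

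\textbf{Main obstacle.} The main difficulty is Step 1: stating precisely, in our notation, the consequence of \cite{brakensiek2025Richness} that Mal'tsev extensions force closure under Catalan identities. In particular, one must check that their Catalan polymorphism corresponds exactly to membership in $\langle H_R\rangle_\G$, including repeated tuples and the parity constraint; \cref{prop:test} gives the analogous direction for group extensions. A second, minor issue is the hypothesis of \cref{prop:normal-subgroup} used inside \cref{thm:3-Catalan}, namely that every $V_3$ vertex is covered. This is handled by deleting the unused values from $V_3$ exactly as in \cref{subsec:equivalence-coset}. If the authors intend an extension with a finite Abelian group, one can afterwards reduce modulo a large integer using finite generation of the relevant lattice.
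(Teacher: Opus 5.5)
Your proposal is correct and is essentially the paper's proof. The paper takes $T$ to be the smallest coset of $(\Z^D)^3$ containing the indicator image of $R$, and rules out any $t \in T\cap D^3\setminus R$ in two moves: it applies the $r=3$ ``Abelian implies Catalan'' theorem to the tuples of the integer relation, and then uses the Catalan-polymorphism argument (Theorem~\ref{thm:maltsev-Catalan} together with the proof of Lemma~\ref{lemma:R-NRD-Catalan-NRD}) to force $t \in S\cap D^3 = R$, which is exactly your Steps 1--2. Your concern about needing a single group and a uniform embedding is unnecessary: each $\Z^{V_i}$ is canonically $\Z^D$, so your coset in $\Z^{V_1}\times\Z^{V_2}\times\Z^{V_3}$ with $\sigma_i(a)=\fe_{(a,i)}$ is literally the paper's coset in $(\Z^D)^3$ with the single map $a\mapsto \fe_a$.
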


The case of binary (arity $2$) relations is well-known (e.g., \cite{bessiere2020Chain}) by much simpler means. As an immediate corollary, we obtain the following (compare \cref{fact:AbelianCatalanIntro}).
\begin{corollary}
Let $R \subseteq D^3$ be a relation with an infinite Mal'tsev extension, then $\NRD(R, n) = O(n)$.
\end{corollary}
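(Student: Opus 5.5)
The plan is to chain three ingredients: \cref{thm:3-Malt-Abelian}, \cref{thm:Abelian-NRD-UB}, and the remark following it on dropping the $r$-partite assumption. The statement has no arity parameter left free ($r=3$), so an $O(n\log r)$ bound becomes $O(n)$ with an absolute constant.

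First, given $R\subseteq D^3$ with a (possibly infinite) Mal'tsev extension, I invoke \cref{thm:3-Malt-Abelian} to obtain an extension $S\subseteq E^3$ of $R$ with $E$ an Abelian group; in line with \cref{thm:Abelian-NRD-UB}, $S$ is a coset of $E^3$. Crucially, $E$ may still be infinite. That is harmless here because \cref{thm:Abelian-NRD-UB} does not depend on $|E|$: its proof only uses that an integer combination $\sum a_i\psi(h_i)$ of tuples in a coset with $\sum a_i=1$ stays in the coset. The requirement $\sum a_i = 1$ is guaranteed by summing coordinates within one part, using $r$-partiteness.

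Next, I reduce to $3$-partite instances via \cref{rem:nrd}. Any non-redundant instance of $\CSP(R)$ on $n$ variables yields a $3$-partite non-redundant instance on $3n$ variables with at least a constant fraction of the constraints. For that instance, \cref{thm:Abelian-NRD-UB} shows that any Abelian cover makes the instance redundant. Then \cref{lemma:Z-NRD} bounds the number of hyperedges by $3n(1+\log_2(3)/2)=O(n)$. Pulling back through the constant-factor loss of \cref{rem:nrd} gives $\NRD(R,n)=O(n)$, with a constant independent of $D$ and $E$.

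The only genuine work sits inside \cref{thm:3-Malt-Abelian}, which I am allowed to assume. For the corollary itself, the main point to check is that no hidden dependence on $|E|$ or $|D|$ enters. In particular, \cref{fact:malt} must not be used, since it would give the useless bound $|E|^2 n$ when $E$ is infinite. Everything instead routes through the lattice-based bound of \cref{lemma:Z-NRD}, which is purely combinatorial in $n$ and $r$.
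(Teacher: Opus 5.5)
Your proposal is correct and follows the paper's proof: apply \cref{thm:3-Malt-Abelian} to get an Abelian (coset) extension, then \cref{thm:Abelian-NRD-UB} with $r=3$. You have simply unpacked the latter into the $3$-partite lift of \cref{rem:nrd} (which in fact preserves every constraint, not just a constant fraction) plus \cref{lemma:Z-NRD}. You also correctly identify the point that makes the bound independent of $|E|$, namely that the integer combination stays in the coset because $\sum_i a_i = 1$, which follows by summing the coordinates of a single part.
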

\begin{proof}
Apply \cref{thm:3-Malt-Abelian} to get an Abelian group extension, then apply \cref{thm:Abelian-NRD-UB} to bound $\NRD(R, n)$, noting that $r=3$ is a constant.
\end{proof}

To prove \cref{thm:3-Malt-Abelian}, we need to introduce the \emph{Catalan polymorphisms} introduced by \cite{brakensiek2025Richness}. Given a relation $R \subseteq D^r$, a polymorphism $p : D^L \to D$ is an operator such that for any tuples $t^{(1)}, \hdots, t^{(L)} \in R$, we have that
\[
  (p(t^{(1)}_1, \hdots, t^{(L)}_1), p(t^{(1)}_2, \hdots, t^{(L)}_2), \hdots, p(t^{(1)}_r, \hdots, t^{(L)}_r)) \in R.
\]
Informally, $R$ is closed under the coordinate-wise operation $p$. We let $\Pol(R)$ denote the set of polymorphisms of $R$. We can now define Catalan polymorphisms.

\begin{definition}[\cite{brakensiek2025Richness}]\label{def:Catalan}
We say that operators $\{\varphi_i : D^i\to D \mid i \in \mathbb N_{\text{odd}}\} \subseteq \Pol(R)$ are \emph{Catalan polymorphisms} if the following rules are true.
\begin{align*}
\forall x \in D, \varphi_1(x) &= x.\\
\forall x_1, \hdots, x_{k+2} \in D, \varphi_{k+2}(x_1, \hdots, x_{k+2}) &= \varphi_k(x_1, \hdots, x_{j-1},x_{j+2}, \hdots, x_{k+2})\text{ if }x_j = x_{j+1}.
\end{align*}
\end{definition}

The key result of \cite{brakensiek2025Richness} is that any Mal'tsev relation has a set of Catalan polymorphisms.

\begin{theorem}[\cite{brakensiek2025Richness}]\label{thm:maltsev-Catalan}
Let $R \subseteq D^r$ be a relation that admits a Mal'tsev polymorphism $p$. Then, there are Catalan polymorphisms $\varphi_1, \varphi_3, \hdots \in \Pol(R)$ such that $\varphi_3 = p$. 
\end{theorem}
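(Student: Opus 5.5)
The plan is to build the family $\{\varphi_k\}$ by reducing the input word first and only then applying $p$. A naive left fold of $p$ does not work, so the definition has to be canonical in the reduced word.

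Concretely, set $\varphi_1(x)=x$ and define the left fold
\[
F_m(y_1,\hdots,y_m) := p(\cdots p(p(y_1,y_2,y_3),y_4,y_5)\cdots,y_{m-1},y_m)
\]
for odd $m$. The left fold alone fails: when $x_j=x_{j+1}$ with $j$ even, the identity $p(y,z,z)=y$ gives the required cancellation, but for odd $j\ge 3$ one would need $p(p(y,a,z),z,c)=p(y,a,c)$, which a general Mal'tsev operation does not satisfy.

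So I would instead define $\varphi_k(x_1,\hdots,x_k) := F_m(y_1,\hdots,y_m)$, where $y_1\cdots y_m$ is the Catalan reduction of $x_1\cdots x_k$. This is the reduced form obtained by repeatedly deleting adjacent equal letters. Since the signs $x_1x_2^{-1}x_3\cdots$ alternate, this is exactly free reduction of an alternating word. Uniqueness of the reduced form then follows as in \cref{thm:G-reduced-not-equal}, or directly from the theory of free groups. With this definition the rule of \cref{def:Catalan} holds by construction, because deleting an adjacent equal pair does not change the reduced word. We also get $\varphi_3=p$, since a reduced length-$3$ word gives $p$ directly, and the cases $x_1=x_2$ and $x_2=x_3$ give $x_3$ and $x_1$, matching $p(z,z,x)=x$ and $p(x,z,z)=x$.

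The entire difficulty, and the step I expect to be the main obstacle, is showing $\varphi_k\in\Pol(R)$. Take a stack $t^{(1)},\hdots,t^{(k)}\in R$. Each coordinate column reduces along a different cancellation pattern, so the output tuple is not a single composition of $p$ applied to rows of $R$.

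I would prove membership by strong induction on $k$, with a statement strong enough to carry through: for every stack of odd length $k$ of tuples in $R$, the coordinatewise tuple $T=(\varphi_k(\text{column}_i))_i$ lies in $R$.

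For the inductive step I would look at the last three rows. Consider the shorter stacks $S' = (t^{(1)},\hdots,t^{(k-2)})$ and $S''=(t^{(1)},\hdots,t^{(k-1)},\ast)$-type truncations, whose outputs lie in $R$ by induction. I would then try to express $T$ as $p(T',t^{(k-1)},t^{(k)})$, or as $p(T',T'',T''')$ for suitable shorter-stack outputs $T',T'',T'''\in R$. The choice has to work uniformly across coordinates:
\begin{itemize}
\item in coordinates where the last letter survives reduction, we need $\varphi_k(\text{col}) = p(\varphi_{k-2}(\text{col}'),x_{k-1},x_k)$;
\item in coordinates where $x_k$ cancels against some earlier letter, the identities $p(y,z,z)=y$ and $p(z,z,y)=y$ should collapse the corresponding entry to the inductively computed value.
\end{itemize}

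To get this uniformity I would first try to prove an auxiliary identity for Mal'tsev polymorphisms restricted to $R$. The specific claim is that for tuples in $R$, replacing $p(p(y,a,z),z,c)$ by $p(y,a,c)$ in a subset of coordinates preserves membership in $R$. This is a rectangularity-type property of Mal'tsev relations: a binary projection of a Mal'tsev relation is rectangular, so if $(u,w),(u',w),(u,w')\in R$ then $(u',w')\in R$.

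If that fails, the fallback is to pass to the Catalan group $\fG$ of \cref{sec:Catalan}. One would interpret each column as a group word and use \cref{thm:G-reduced-not-equal} to organize the cancellations of all coordinates simultaneously as a single sequence of stack operations, each justified by one application of $p$ to tuples already known to be in $R$.
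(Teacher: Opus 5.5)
\textbf{There is a genuine gap, and it is fatal to this definition.} Your check that the cancellation rules and $\varphi_3=p$ hold for your $\varphi_k$ is fine. But the step you flag as the main obstacle, $\varphi_k\in\Pol(R)$, is not merely unproved: for your definition it is false. The Catalan identities say nothing about $\varphi_k$ on reduced inputs, and fixing the left fold there breaks closure under $R$.

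Here is a counterexample. Let $F$ be the free algebra on $a,b,c,d,e$ in the variety axiomatized by $p(x,y,y)\approx x\approx p(y,y,x)$. Let $R\subseteq F^2$ be the subuniverse generated by the rows $t^{(1)},\dots,t^{(5)}=(a,a),(b,b),(c,c),(d,c),(e,e)$, so $p\in\Pol(R)$.
\begin{itemize}
\item The first column $abcde$ is reduced and the second column $abcce$ reduces to $abe$. So your $\varphi_5$ outputs $T=(p(p(a,b,c),d,e),\,p(a,b,e))$.
\item Every element of $R$ has the form $(\tau(a,b,c,d,e),\tau(a,b,c,c,e))$ for some $p$-term $\tau$.
\item If the first entry is $p(p(a,b,c),d,e)$, freeness forces $\tau\approx p(p(x_1,x_2,x_3),x_4,x_5)$. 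Hence the second entry is $p(p(a,b,c),c,e)$.
\item This is not $p(a,b,e)$. Otherwise $p(p(x,y,z),z,w)\approx p(x,y,w)$ would hold in every Mal'tsev algebra. It fails for the discriminator ($z$ if $x=y$, else $x$) on $\{0,1,2\}$ at $(x,y,z,w)=(0,1,0,2)$, where the two sides are $2$ and $0$.
\end{itemize}
So $T\notin R$, and no induction expressing $T$ through $p$ and shorter stacks can succeed. The domain here is infinite, but the paper needs the theorem in that generality: it applies it to a possibly infinite Mal'tsev extension $S$ in the proof of \cref{thm:3-Malt-Abelian}.

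\textbf{Your proposed repairs fail for the same reason.}
\begin{itemize}
\item The coordinatewise replacement claim is exactly what breaks here. Put $U=p(p(t^{(1)},t^{(2)},t^{(3)}),t^{(4)},t^{(5)})$, $V=p(t^{(1)},t^{(2)},t^{(5)})$ and $J=\{2\}$, so that $T=(U_{\bar J},V_J)$. Rectangularity would produce $T$ only given the corner $(V_{\bar J},U_J)$, and the same freeness argument shows that this corner is not in $R$ either.
\item The fallback via $\fG$ is circular. $\fG$ encodes only the column-by-column cancellation combinatorics and knows nothing about $p$ or $R$. In this paper it is used (\cref{lemma:R-NRD-Catalan-NRD}) only once Catalan polymorphisms are already available.
\end{itemize}

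\textbf{The missing idea is to reverse the roles.} Take each $\varphi_k$ to be a composition of $p$, so that $\varphi_k\in\Pol(R)$ is automatic for every $p$-closed $R$, and do the real work in verifying the identities. The example also shows that no Catalan polymorphism of that $R$ can agree with the left fold at the reduced input $(a,b,c,d,e)$. For $k=5$ one such term is
\[
\varphi_5=p(s,u,t),
\]
where $s=p(p(x_1,x_2,x_3),x_4,x_5)$, $t=p(x_1,x_2,p(x_3,x_4,x_5))$ and $u=p(p(x_1,x_2,x_3),x_3,p(x_3,x_4,x_5))$. Then $s=t=u$ when $x_1=x_2$ or $x_4=x_5$; $u=t$ when $x_2=x_3$; and $u=s$ when $x_3=x_4$. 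So the Mal'tsev identities give all four cancellation rules. General odd $k$ needs a systematic construction of such terms. The paper does not provide one; it imports the theorem from \cite{brakensiek2025Richness}.
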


To connect \cref{thm:maltsev-Catalan} with \cref{thm:3-Malt-Abelian}, we need to show how the Catalan polymorphisms can take advantage of Catalan covers. This idea is implicit in \cite{brakensiek2025Richness} in their question to show that Mal'tsev extensions imply group extensions, but we include a proof here for completeness.

\begin{lemma}\label{lemma:R-NRD-Catalan-NRD}
Let $R \subseteq D^r$ be a Mal'tsev relation. Let $H \subseteq V_1 \times \cdots V_r$ be an $r$-partite hypergraph. If $H$ has a Catalan cover, then $H$ is redundant instance of $\CSP(R)$.
\end{lemma}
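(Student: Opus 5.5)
We would argue directly from the definition of a Catalan cover, using the Catalan polymorphisms supplied by \cref{thm:maltsev-Catalan}. Suppose $H$ has a Catalan cover, so there is $h_0 \in H$ with $\langle H\setminus\{h_0\}\rangle_{\G} = \langle H\rangle_{\G}$.

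\textbf{Step 1: an odd-length word for $h_0$.} Since $\fg_{h_0} \in \langle H\setminus\{h_0\}\rangle_{\G}$ and each generator $\fg_h$ is an involution ($\fg_h^{-1}=\fg_h$, as the factors of $\fg_h$ lie in distinct parts and commute), we get $\fg_{h_0} = \fg_{h_1}\cdots\fg_{h_k}$ for some $h_1,\dots,h_k \in H\setminus\{h_0\}$. Apply the parity homomorphism $\pi_1$ (\cref{prop:pi-surjective} setting): each $\fg_h$ has $\pi_1=1$, so $k$ is odd. This is exactly the start of the proof of \cref{prop:test}.

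\textbf{Step 2: coordinatewise reduction.} Write $h_i=(v_{i,1},\dots,v_{i,r})$. Projecting to $\G(V_j)$ via the $r$-partite representation, we get $\fg_{v_{1,j}}\cdots\fg_{v_{k,j}} = \fg_{v_{0,j}}$ for every $j$. By \cref{thm:G-reduced-not-equal}, the word $\fg_{v_{1,j}}\cdots\fg_{v_{k,j}}$ reduces to the single letter $\fg_{v_{0,j}}$. This reduction is carried out by repeatedly deleting adjacent equal letters, and the unique reduced form is reached by the naive reduction of \cref{rem:naive-reduction} (each $V_j$-word already satisfies condition (2)). Hence there is a sequence of deletions of adjacent equal pairs taking $(v_{1,j},\dots,v_{k,j})$ down to $(v_{0,j})$.

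\textbf{Step 3: apply the polymorphisms.} Let $\psi: V\to D$ be any assignment with $\psi(h_i)\in R$ for all $i\in[k]$. Since $\varphi_k \in \Pol(R)$,
\[
t := \bigl(\varphi_k(\psi(v_{1,1}),\dots,\psi(v_{k,1})),\dots,\varphi_k(\psi(v_{1,r}),\dots,\psi(v_{k,r}))\bigr) \in R.
\]
In coordinate $j$, each deletion of an adjacent equal pair $v_{i,j}=v_{i+1,j}$ gives $\psi(v_{i,j})=\psi(v_{i+1,j})$. The Catalan rule of \cref{def:Catalan} then lets us replace $\varphi_m$ by $\varphi_{m-2}$ on the shortened argument list. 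Iterating along the reduction from Step 2 yields $\varphi_1(\psi(v_{0,j}))=\psi(v_{0,j})$. Thus $t=\psi(h_0)\in R$. So every assignment satisfying $H\setminus\{h_0\}$ satisfies $h_0$, i.e.\ $H$ is redundant.

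\textbf{Main obstacle.} The delicate point is Step 2: justifying that equality in $\G(V_j)$ yields an explicit chain of adjacent-pair deletions on the original sequence, and not merely equality of group elements. We would handle this via the uniqueness of reduced words in \cref{thm:G-reduced-not-equal}. Any greedy cancellation ends at the unique reduced form, which must be the one-letter word $\fg_{v_{0,j}}$; equivalently, one can simply reuse the argument already given in \cref{prop:test}.
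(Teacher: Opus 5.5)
Your proposal is correct and follows the paper's proof essentially step for step. The steps are: an odd-length positive word for $\fg_{h_0}$ via $\pi_1$, a coordinatewise naive reduction justified by uniqueness of reduced words (\cref{thm:G-reduced-not-equal}), and then matching each deletion of an adjacent equal pair with the Catalan cancellation rule for $\varphi_k \in \Pol(R)$. The only cosmetic difference is that you project to each part $V_j$ separately, whereas the paper naively reduces the concatenated word $\fw^{(1)}\cdots\fw^{(r)}$ (which already satisfies condition (2)) and compares it with the reduced word $\fg_{h_0}$; the two routes are equivalent.
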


\begin{proof}
By \cref{thm:maltsev-Catalan}, we have that there are Catalan polymorphisms $\varphi_1, \varphi_3, \varphi_5 \hdots \in \Pol(R)$. Assume that $H$ has a Catalan cover. Thus, there exists $h_0 \in H$ such that $\fg_{h_0} \in \langle H \setminus \{h_0\}\rangle_{\G}$. By definition of $\langle H \setminus \{h_0\}\rangle_{\G}$ and the fact that $\fg_h^2 = 1$ for all $h \in H$, we have that there exists $h_1, \hdots, h_\ell \in H$ (possibly with repetition) such that $\fg_{h_0} = \fg_{h_1} \cdots \fg_{h_\ell}$. Applying the homomorphism $\pi_1$ from \cref{subsec:parity}, we have that
\[
  1 = \pi_{1}(\fg_{h_0}) = \pi_1(\fg_{h_1}) \cdots \pi_1(\fg_{h_\ell}) = \ell \mod 2.
\]
Thus, $\ell$ is odd. For all $i \in [r]$, define the word $\fw^{(i)}$ to be
\[
  \fw^{(i)} = \prod_{j=1}^{\ell} \fg_{h_{j,i}}.
\]
By the commutativity relations of $\G$, we have that
\[
  \fw^{(1)} \fw^{(2)} \cdots \fw^{(r)} = \fg_{h_1} \cdots \fg_{h_\ell} = \fg_{h_0}.
\]
Note that $\fg_{h_0} = (\fg_{h_{0,1}}, \hdots, \fg_{h_{0,r}})$ is a reduced word in the sense of \cref{def:reduced-word-G}. Furthermore, the word $\fw^{(1)} \fw^{(2)} \cdots \fw^{(r)}$ satisfies condition (2) of \cref{def:reduced-word-G}. Thus, by \cref{rem:naive-reduction} and \cref{thm:G-reduced-not-equal} the \emph{naive reduction} of $\fw^{(1)} \fw^{(2)} \cdots \fw^{(r)}$ equals $\fg_{h_0}$ (as words). More precisely, fix $i \in [r]$ and run the following procedure on $\fw^{(i)}$: if two consecutive symbols of $\fw^{(i)}$ are equal, delete them both; repeat until termination. Since the resulting word is reduced, by \cref{thm:G-reduced-not-equal}, this procedure must terminate with a single symbol equal to $\fg_{h_{0,i}}$.

The key observation is that the rules of this naive reduction are \emph{identical} to the cancellation conditions of the Catalan polymorphisms in \cref{def:Catalan}! In other words, we have proved that $\fw^{(1)} \fw^{(2)} \cdots \fw^{(r)} = \fg_{h_0}$ (in $\G$) implies that for any map $f : V \to D$ we have that %
\begin{align}
\varphi_{\ell}(f(h_{1,i}), f(h_{2,i}), \hdots, f(h_{\ell, i})) &= f(h_{0,i})\text{ for all $i \in [r]$.}\label{eq:cat-h}
\end{align}
Let $f : V \to D$ be any assignment such that $f(h) \in R$ for all $h \in H \setminus \{h_0\}$. Since $h_1, \hdots, h_\ell \in H \setminus \{h_0\}$, we have that $f(h_{1}), \hdots, f(h_\ell) \in R$. Since $\varphi_\ell \in \Pol(R)$, this means that $\varphi_\ell(f(h_1), \hdots, f(h_{\ell, r})) \in R$. By \cref{eq:cat-h}, we thus have that $f(h_0) \in R$. That is, any satisfying assignment to $H \setminus \{h_0\}$ also satisfies $H$. Thus, $H$ is a redundant instance of $\CSP(R)$, as desired.
\end{proof}

We can now prove \cref{thm:3-Malt-Abelian}.

\begin{proof}[Proof of \cref{thm:3-Malt-Abelian}]
Let $R \subseteq D^3$ be our relation, and let $S \subseteq E^3$ be a possibly infinite Mal'tsev extension. Let $F := \mathbb Z^D$ we identify each $d \in D$ with the indicator $\fe_D \in F$. Let $T \subseteq (Z^{D})^3$ be the smallest coset containing $R$. We claim that $T \cap D^3 = R$; thus proving that $R$ has an Abelian extension.

Assume $T \cap D^3 \neq R$. That is, there is some $t \in T \cap D^3 \setminus R$. In other words, there exists a (finite) $k$ and tuples $t_1, \hdots, t_k \in R$ as well as coefficients $\alpha_1, \hdots, \alpha_k \in \mathbb Z$ such that
\[
  \fe_t = \sum_{i=1}^k \alpha_i\fe_{t_i}.
\]
Viewing $\{t, t_1, t_2, \hdots, t_k\}$ as a hypergraph whose vertex set is $3$ disjoint copies of $D$, we can apply \cref{thm:CatalanArity3} to deduce that $\{t, t_1, t_2, \hdots, t_k\}$ is a Catalan cover. Thus, by the proof of \cref{lemma:R-NRD-Catalan-NRD}, there exists a large odd integer $\ell$ and a Catalan polymorphism $\varphi_\ell \in \Pol(S)$ as well as indices $i_1, \hdots, i_\ell \in \{1, \hdots, k\}$ such that 
\[
  \varphi_\ell(t_{i_1}, \hdots, t_{i_\ell}) = t.
\]
This means that $t \in S \cap D^3$, contradicting the fact that $S \cap D^3 = R$. Therefore, $T \subseteq F^3$ is indeed an Abelian extension of $R$.
\end{proof}

\begin{remark}\label{rem:finite}
Using known techniques \cite{chen2020BestCase,khanna2025efficient}, this Abelian extension can be made finite. Thus, \cref{cor:arity3Maltsev} holds via application of \cref{thm:3-Malt-Abelian} and \cref{thm:Abelian-NRD-UB}.
\end{remark}

\subsection{Application of \texorpdfstring{\cref{thm:4uniformIntro}}{Theorem 1.3} to Non-redundancy}

Our main application of \cref{thm:4uniformIntro} is as follows.

\begin{theorem}\label{thm:4-uniform-Cat}
There exists $R \subseteq D^4$ which has a Mal'tsev extension (in fact, a finite group extension) but not an Abelian extension.
\end{theorem}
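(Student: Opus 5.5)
We build $R$ directly from the $4$-uniform hypergraph $H=(V,E)$ of \cref{thm:4uniform}, reading each hyperedge as a tuple. Take the domain $D := \{a,b,c,d\}$ and the relation
\[
R := \{(a,a,a,a),(b,b,b,b),(c,c,c,c),(d,d,d,d),(a,b,c,d),(b,a,d,c),(c,d,a,b),(d,c,b,a)\} \subseteq D^4 .
\]
Concretely, $e_i \mapsto t_i$, where each vertex $x_j\in V_j$ is sent to the letter $x$. This is exactly the image of $E$ under the map $\psi$ from the proof of \cref{lem:no-Catalan}, with $\fu_1,\dots,\fu_4$ renamed $a,b,c,d$.

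\textbf{No Abelian extension.} The first three tuples-type relations in \cref{eq:abelian-4} use only the letter pattern of $E$, so they hold verbatim for the indicator vectors of the tuples. Now suppose $S\subseteq A^4$ is a coset of an Abelian group $A$ with $S\cap\sigma(D)^4=\sigma(R)$. Closure of the coset under $x-y+z$ gives
\[
\sigma(t_5)+\sigma(t_6)+\sigma(t_7)+\sigma(t_8)-\sigma(t_2)-\sigma(t_3)-\sigma(t_4)\in S .
\]
Computing coordinatewise, this element equals $(\sigma(a),\sigma(a),\sigma(a),\sigma(a))=\sigma(t_1)$. That tuple already lies in $R$, so I will instead remove a tuple to get the contradiction. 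Let
\[
R := \{t_2,\dots,t_8\}.
\]
The same computation puts $\sigma(t_1)=\sigma((a,a,a,a))$ in $S\cap\sigma(D)^4$, while $(a,a,a,a)\notin R$. So $R$ has no Abelian extension. By the symmetry used at the end of \cref{sec:limit}, removing $t_1$ is no loss of generality.

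\textbf{Finite group extension.} Use the finite group $\fN_4$ of \cref{rem:finite-group} and the embedding $\sigma(x)=\fu_x$. Let $S$ be the coset of $\fN_4^4$ generated by $\psi(e_2),\dots,\psi(e_8)$, i.e. elements $g_1 g_2^{-1} g_3\cdots g_k$ with $k$ odd. A coset of a finite group carries the Mal'tsev polymorphism $xy^{-1}z$. It remains to show $S\cap\sigma(D)^4=\sigma(R)$.

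For $\sigma(t_1)$ this is the content of \cref{lem:no-Catalan} together with \cref{rem:finite-group}. One point needs checking: the lemma rules out $\psi(e_1)$ from the subgroup generated by the others, and the coset is contained in that subgroup, so $\sigma(t_1)\notin S$.

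The remaining danger is some other tuple of $D^4$, such as $(a,b,a,b)$, landing in $S$. The abelianization $\xi$ handles this. Any $s\in S$ has $\xi(s)$ in the $\Z/4$ affine span of $\fe_{e_2},\dots,\fe_{e_8}$, with odd total coefficient sum. A tuple in $\sigma(D)^4$ maps under $\xi$ to its indicator pattern. So the step is to enumerate which letter tuples $x\in D^4$ have $\fe_x$ in this $\Z/4$ span. By the $\F_2$-independence in \cref{foot}, the only candidates beyond $t_2,\dots,t_8$ should be $t_1$, since a letter tuple is a sum of four unit vectors in distinct blocks. This case check is routine but tedious, and it is where I expect the main obstacle.

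If extra candidates survive, I would treat them with the commutator invariant $\zeta$ exactly as was done for $t_1$. Alternatively, I would enlarge the set of forbidden tuples by replacing $R$ with $S\cap\sigma(D)^4$ itself. That set still omits $t_1$, so the no-Abelian-extension argument above goes through unchanged, and $S$ is then by definition a finite group extension of it.
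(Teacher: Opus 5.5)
Your final relation $R=\{t_2,\dots,t_8\}$ is the paper's $\{2222,3333,4444,1234,2143,3412,4321\}$ under $a,b,c,d\mapsto 1,2,3,4$. Your no-Abelian-extension argument (iterating $x-y+z$ to reach $\sigma(t_1)$) is also the paper's. The difference is in showing that no other letter tuple lands in $S$.

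Your main line leaves that step unproved; the $\F_2$-independence remark is only a heuristic. The paper closes it with a short mod-$4$ computation on $\fe_{abcd}\equiv\sum_t c_t\fe_t \pmod 4$:
\begin{itemize}
\item Assume $a\neq 1$, using the coordinate symmetry of $R$. Then block $1$ forces $c_{1234}=0$.
\item Blocks $2,3,4$ each contain two letters that occur in exactly one remaining tuple. So at most one coefficient survives from each of the pairs $\{c_{2143},c_{2222}\}$, $\{c_{3412},c_{3333}\}$, $\{c_{4321},c_{4444}\}$.
\item The survivors have distinct first letters, so block $1$ forces a single coefficient equal to $1$, i.e.\ $abcd\in R$.
\end{itemize}

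Your second fallback, however, is already a complete proof and needs none of this. Put $R':=\{x\in D^4:\sigma(x)\in S\}$. Then:
\begin{itemize}
\item $t_2,\dots,t_8\in R'$, since they generate $S$.
\item $t_1\notin R'$, since the coset lies in the subgroup generated by $\psi(e_2),\dots,\psi(e_8)$, which excludes $\psi(e_1)$ by \cref{lem:no-Catalan} and \cref{rem:finite-group}.
\item $S$ is a finite coset extension of $R'$ by construction, because $\sigma$ is injective; a coset is preserved by $xy^{-1}z$, so the extension is Mal'tsev.
\item Your $x-y+z$ argument rules out any Abelian extension of $R'$.
\end{itemize}

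That route buys economy: the only nontrivial input is $\psi(e_1)\notin S$, and no enumeration of letter tuples is needed. The paper's route buys an explicit seven-tuple relation with the exact identity $S\cap D^4=R$, which incidentally shows $R'=\{t_2,\dots,t_8\}$. You should present the fallback as the proof rather than as a contingency. Also note that it enlarges the set of allowed tuples, not the set of forbidden ones.
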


This improves on a result of \cite{brakensiek2025Richness} which proves an analogous fact for an arity $6$ relation.

\begin{proof}
Let $D = \{1,2,3,4\}$ and let\footnote{We omit commas and parentheses for succinctness.} $R = \{2222,3333,4444,1234,2143,3412,4321\}$. It is immediate that $R$ lacks an Abelian extension because\footnote{Here, we use $\fe_{abcd}$ as shorthand for the ordered tuple $(\fe_a, \fe_b, \fe_c, \fe_d)$. Likewise, we soon use $\fu_{abcd}$ as shorthand for $(\fu_a, \fu_b, \fu_c, \fu_d)$.}
\[
  \fe_{1111} = \fe_{1234} + \fe_{2143} + \fe_{3412} + \fe_{4321} - \fe_{2222} - \fe_{3333} - \fe_{4444}.
\]
That is, any $S \subseteq E^4$ with $R \subseteq S \cap D^4$ must have that $1111 \in S \cap D^4$.

Recall from the proof of \cref{thm:4uniformIntro} and \cref{rem:finite} the finite group $\fN_4$ with generators $\fu_1, \fu_2, \fu_3, \fu_4$, where each element is of the form
\begin{align*}
  \fn = \prod_{i=1}^4 \fu_i^{\alpha_i \mod 4} \prod_{1 \le i < j \le 4} [\fu_i, \fu_j]^{\beta_{i,j} \mod 4}.
\end{align*}
for $\alpha_1, \hdots, \alpha_4, \beta_{1,2}, \hdots, \beta_{3,4} \in \mathbb Z/4\mathbb Z$. Let $E = \fN_4$ and let $S \subseteq E^4$ be the subgroup generated by
\begin{align*}
\fu_R := \{&\fu_{2222}, \fu_{3333}, \fu_{4444}, \fu_{1234}, \fu_{2143}, \fu_{3412}, \fu_{4321}\}.
\end{align*}
Assume some tuple $abcd \in D^4$ satisfies $\fu_{abcd} \in S$. If $abcd = 1111$, then we already ruled out the possibility in the proof of \cref{thm:4uniformIntro} and \cref{rem:finite}. We now also show that any $abcd \in D^4 \setminus R \setminus \{1111\}$ is also impossible. Recall from \cref{thm:4uniformIntro} the Abelian homomorphism
\[
    \xi\left(\prod_{i=1}^4 \fu_i^{\alpha_i \mod 4} \prod_{1 \le i < j \le 4} [\fu_i, \fu_j]^{\beta_{i,j} \mod 4}\right) = (\alpha_1 \mod 4, \hdots \alpha_4 \mod 4).
\]
By this definition, $\xi(\fu_i) = \fe_i \mod 4$ for all $i \in D$. Therefore, since $\fu_{abcd}$ is in the subgroup generated by $\fu_R$, by applying the map $\xi$, we can deduce there are coefficients, $c_{2222}, \hdots, c_{4321} \in \Z/4\Z$ such that
\begin{align}
    \fe_{abcd} \equiv \sum_{t \in R} c_t \fe_t \mod 4.\label{eq:done}
\end{align}
From this equation we derive a contradiction. Assume without loss of generality that $a \neq 1$ (since $abcd \neq 1111$). This means that $c_{1234} = 0$ since $\fe_{1234}$ is the only term on the RHS of \cref{eq:done} $1$ in the first coordinate. Now among the remaining tuples of $R$, the second coordinate has the values $1$ and $2$ each appear once in $2143$ and $2222$, respectively. Since $b$ cannot equal both $1$ and $2$, at least one of $c_{2143}$ or $c_{2222}$ equals zero. Likewise, in the third coordinate $1$ and $3$ each appear once in $3412$ and $3333$, respectively. Thus, at least one of $c_{3412}$ or $c_{3333}$ equals zero. Finally, in the fourth coordinate $1$ and $4$ each appear once in $4321$ and $4444$, respectively, so at least one of $c_{4321}$ or $c_{3333}$ equals zero.

In particular at most three coefficients in the RHS of \cref{eq:done} are nonzero. In particular, these nonzero coefficients must have supported within on of $8$ sets of tuples
\begin{align*}
\{2143, 3412, 4321\}, \{2143, 3412, 4444\}, \{2143, 3333, 4321\}, \{2143, 3333, 4444\},\\
\{2222, 3412, 4321\}, \{2222, 3412, 4444\}, \{2222, 3333, 4321\}, \{2222, 3333, 4444\}.
\end{align*}
In all $8$ cases, note that the first coordinates of all three tuples are distinct. Thus, in order for \cref{eq:done} to hold at least two of the three tuples in the set must have a coefficient of zero. Thus, the only valid solutions to \cref{eq:done} have $abcd \in R$, a contradiction! Thus, $S \subseteq E^4$ is indeed a Mal'tsev extension of $R$.
\end{proof}

\section{Conclusion}

In this paper, inspired by recent progress in the area of CSP non-redundancy, we introduced the notions of Abelian and Catalan covers of hypergraphs. By using a number of tools from abstract algebra, lattice theory, and algebraic topology, we proved a number new properties about Abelian and Catalan covers.

We leave \cref{conj:linear} as our primary open question. In particular, do there exist (say) $4$-uniform hypergraphs with $n$ vertices and $\omega(n)$ hyperedges which lack a Catalan cover? By resolving \cref{conj:linear}, we can show CSP predicate with Mal'tsev extensions always have linear non-redundancy, which could be viewed as essential progress toward classifying which CSP predicates have linear non-redundancy.

\bibliographystyle{alphaurl}
\bibliography{references}

\appendix

\allowdisplaybreaks

\section{Analysis of the Lattice Invariant}\label{app:A}

From the proof of \cref{lem:no-Catalan}, we now prove that $\zeta(\chi([\psi(e_i), \psi(e_j)])) \equiv 0 \mod 4$ for all $2 \le i < j \le 8$. Recall from \cref{eq:inv} that

\begin{align*}
\zeta(M) := M \cdot \begin{pmatrix}1 & 1 & 0 & 3 & 0 & 2\\1 & 1 & 0 & 3 & 0 & 2\\3 & 3 & 0 & 3 & 2 & 0\\3 & 3 & 0 & 3 & 2 & 0\end{pmatrix}\mod 4.
\end{align*}
And also recall that
\begin{align*}
\psi(e_1) = (\fu_1, \fu_1, \fu_1, \fu_1), \psi(e_2) = (\fu_2, \fu_2, \fu_2, \fu_2), \psi(e_3) = (\fu_3, \fu_3, \fu_3, \fu_3), \psi(e_4) = (\fu_4, \fu_4, \fu_4, \fu_4)\\
\psi(e_5) = (\fu_1, \fu_2, \fu_3, \fu_4), \psi(e_6) = (\fu_2, \fu_1, \fu_4, \fu_3), \psi(e_7) = (\fu_3, \fu_4, \fu_1, \fu_2), \psi(e_8) = (\fu_4, \fu_3, \fu_2, \fu_1).
\end{align*}

We demonstrate this computation in batches: the case $2 \le i < j \le 4$, the case $2 \le i \le 4 < j \le 8$, and the case $5 \le i < j \le 8$.

\subsection{\texorpdfstring{$2 \le i < j \le 4$}{2 <= i < j <= 4}}

For the first batch, observe that $[\psi(e_2), \psi(e_3)] = ([\fu_2, \fu_3], [\fu_2, \fu_3], [\fu_2, \fu_3], [\fu_2, \fu_3])$. Thus,
\[
\zeta(\chi([\psi(e_2), \psi(e_3)])) = \zeta\begin{pmatrix}
0 & 0 & 0 & 1 & 0 & 0\\
0 & 0 & 0 & 1 & 0 & 0\\
0 & 0 & 0 & 1 & 0 & 0\\
0 & 0 & 0 & 1 & 0 & 0
\end{pmatrix} \equiv 3 + 3 + 3 + 3 \equiv 0 \mod 4.
\]
Likewise,
\[
\zeta(\chi([\psi(e_2), \psi(e_4)])) = \zeta\begin{pmatrix}
0 & 0 & 0 & 0 & 1 & 0\\
0 & 0 & 0 & 0 & 1 & 0\\
0 & 0 & 0 & 0 & 1 & 0\\
0 & 0 & 0 & 0 & 1 & 0
\end{pmatrix} \equiv 0 + 0 + 2 + 2 \equiv 0 \mod 4.
\]
and
\[
\zeta(\chi([\psi(e_3), \psi(e_4)])) = \zeta\begin{pmatrix}
0 & 0 & 0 & 0 & 0 & 1\\
0 & 0 & 0 & 0 & 0 & 1\\
0 & 0 & 0 & 0 & 0 & 1\\
0 & 0 & 0 & 0 & 0 & 1
\end{pmatrix} \equiv 2 + 2 + 0 + 0 \equiv 0 \mod 4.
\]

\subsection{\texorpdfstring{$2 \le i \le 4 < j \le 8$}{2 <= i <= 4 < j <= 8}}
For our next batch, observe that 
\[
[\psi(e_2), \psi(e_5)] = ([\fu_2, \fu_1], [\fu_2, \fu_2], [\fu_2, \fu_3], [\fu_2, \fu_4]) = ([\fu_1, \fu_2]^{-1}, 1, [\fu_2, \fu_3], [\fu_2, \fu_4]).
\]
Therefore,
\[
\zeta(\chi([\psi(e_2), \psi(e_5)])) = \zeta\begin{pmatrix}
-1 & 0 & 0 & 0 & 0 & 0\\
0 & 0 & 0 & 0 & 0 & 0\\
0 & 0 & 0 & 1 & 0 & 0\\
0 & 0 & 0 & 0 & 1 & 0
\end{pmatrix} \equiv -1 + 3 + 2 \equiv 0 \mod 4.
\]
By similar logic, we can deduce that
\begin{align*}
\zeta(\chi([\psi(e_2), \psi(e_6)])) &= \zeta\begin{pmatrix}
0 & 0 & 0 & 0 & 0 & 0\\
-1 & 0 & 0 & 0 & 0 & 0\\
0 & 0 & 0 & 0 & 1 & 0\\
0 & 0 & 0 & 1 & 0 & 0
\end{pmatrix} \equiv -1 + 3 + 2 \equiv 0 \mod 4,\\
\zeta(\chi([\psi(e_2), \psi(e_7)])) &= \zeta\begin{pmatrix}
0 & 0 & 0 & 1 & 0 & 0\\
0 & 0 & 0 & 0 & 1 & 0\\
-1 & 0 & 0 & 0 & 0 & 0\\
0 & 0 & 0 & 0 & 0 & 0
\end{pmatrix} \equiv 3 + 0 - 3 \equiv 0 \mod 4,\\
\zeta(\chi([\psi(e_2), \psi(e_8)])) &= \zeta\begin{pmatrix}
0 & 0 & 0 & 0 & 1 & 0\\
0 & 0 & 0 & 1 & 0 & 0\\
0 & 0 & 0 & 0 & 0 & 0\\
-1 & 0 & 0 & 0 & 0 & 0
\end{pmatrix} \equiv 3 + 0 - 3 \equiv 0 \mod 4.
\end{align*}

For $i = 3$, we get
\begin{align*}
\zeta(\chi([\psi(e_3), \psi(e_5)])) &= \zeta\begin{pmatrix}
0 & -1 & 0 & 0 & 0 & 0\\
0 & 0 & 0 & -1 & 0 & 0\\
0 & 0 & 0 & 0 & 0 & 0\\
0 & 0 & 0 & 0 & 0 & 1
\end{pmatrix} \equiv -1 - 3 + 0\equiv 0 \mod 4,\\
\zeta(\chi([\psi(e_3), \psi(e_6)])) &= \zeta\begin{pmatrix}
0 & 0 & 0 & -1 & 0 & 0\\
0 & -1 & 0 & 0 & 0 & 0\\
0 & 0 & 0 & 0 & 0 & 1\\
0 & 0 & 0 & 0 & 0 & 0
\end{pmatrix} \equiv -3 -1 + 0 \equiv 0 \mod 4,\\
\zeta(\chi([\psi(e_3), \psi(e_7)])) &= \zeta\begin{pmatrix}
0 & 0 & 0 & 0 & 0 & 0\\
0 & 0 & 0 & 0 & 0 & 1\\
0 & -1 & 0 & 0 & 0 & 0\\
0 & 0 & 0 & -1 & 0 & 0
\end{pmatrix} \equiv 2-3-3\equiv 0 \mod 4,\\
\zeta(\chi([\psi(e_3), \psi(e_8)])) &= \zeta\begin{pmatrix}
0 & 0 & 0 & 0 & 0 & 1\\
0 & 0 & 0 & 0 & 0 & 0\\
0 & 0 & 0 & -1 & 0 & 0\\
0 & -1 & 0 & 0 & 0 & 0
\end{pmatrix} \equiv 2-3-3 \equiv 0 \mod 4.
\end{align*}

For $i = 4$, we get
\begin{align*}
\zeta(\chi([\psi(e_4), \psi(e_5)])) &= \zeta\begin{pmatrix}
0 & 0 & -1 & 0 & 0 & 0\\
0 & 0 & 0 & 0 & -1 & 0\\
0 & 0 & 0 & 0 & 0 & -1\\
0 & 0 & 0 & 0 & 0 & 0
\end{pmatrix} \equiv -0-0-0\equiv 0 \mod 4,\\
\zeta(\chi([\psi(e_4), \psi(e_6)])) &= \zeta\begin{pmatrix}
0 & 0 & 0  & 0 & -1 & 0\\
0 & 0 & -1 & 0 & 0 & 0\\
0 & 0 & 0 & 0 & 0 & 0\\
0 & 0 & 0 & 0 & 0 & -1
\end{pmatrix} \equiv -0-0-0\equiv 0 \mod 4,\\
\zeta(\chi([\psi(e_4), \psi(e_7)])) &= \zeta\begin{pmatrix}
0 & 0 & 0 & 0 & 0 & -1\\
0 & 0 & 0 & 0 & 0 & 0\\
0 & 0 & -1 & 0 & 0 & 0\\
0 & 0 & 0 & 0 & -1 & 0
\end{pmatrix} \equiv -2-0-2\equiv 0 \mod 4,\\
\zeta(\chi([\psi(e_4), \psi(e_8)])) &= \zeta\begin{pmatrix}
0 & 0 & 0 & 0 & 0 & 0\\
0 & 0 & 0 & 0 & 0 & -1\\
0 & 0 & 0 & 0 & -1 & 0\\
0 & 0 & -1 & 0 & 0 & 0
\end{pmatrix} \equiv -2-2-0 \equiv 0 \mod 4.
\end{align*}

\subsection{\texorpdfstring{$5 \le i < j \le 8$}{5 <= i < j <= 8}}

For our last batch, observe that  $[\psi(e_5), \psi(e_6)] = ([\fu_1, \fu_2], [\fu_2, \fu_1], [\fu_3, \fu_4], [\fu_4, \fu_3])$. Therefore,
\[
\zeta(\chi([\psi(e_5), \psi(e_6)])) = \zeta\begin{pmatrix}
1 & 0 & 0 & 0 & 0 & 0\\
-1 & 0 & 0 & 0 & 0 & 0\\
0 & 0 & 0 & 0 & 0 & 1\\
0 & 0 & 0 & 0 & 0 & -1
\end{pmatrix} \equiv 1 - 1 + 0 - 0 \equiv 0 \mod 4.
\]
Likewise, we can deduce that
\begin{align*}
\zeta(\chi([\psi(e_5), \psi(e_7)])) &= \zeta\begin{pmatrix}
0 & 1 & 0 & 0 & 0 & 0\\
0 & 0 & 0 & 0 & 1 & 0\\
0 & -1 & 0 & 0 & 0 & 0\\
0 & 0 & 0 & 0 & -1 & 0
\end{pmatrix} \equiv 1 + 0 - 3 - 2 \equiv 0 \mod 4,\\
\zeta(\chi([\psi(e_5), \psi(e_8)])) &= \zeta\begin{pmatrix}
0 & 0 & 1 & 0 & 0 & 0\\
0 & 0 & 0 & 1 & 0 & 0\\
0 & 0 & 0 & -1 & 0 & 0\\
0 & 0 & -1 & 0 & 0 & 0
\end{pmatrix} \equiv 0 + 3 - 3 + 0 \equiv 0 \mod 4,\\
\zeta(\chi([\psi(e_6), \psi(e_7)])) &= \zeta\begin{pmatrix}
0 & 0 & 0 & 1 & 0 & 0\\
0 & 0 & 1 & 0 & 0 & 0\\
0 & 0 & -1 & 0 & 0 & 0\\
0 & 0 & 0 & -1 & 0 & 0
\end{pmatrix} \equiv 3 + 0 - 0 - 3 \equiv 0 \mod 4,\\
\zeta(\chi([\psi(e_6), \psi(e_8)])) &= \zeta\begin{pmatrix}
0 & 0 & 0 & 0 & 1 & 0\\
0 & 1 & 0 & 0 & 0 & 0\\
0 & 0 & 0 & 0 & -1 & 0\\
0 & -1 & 0 & 0 & 0 & 0
\end{pmatrix} \equiv 0 + 1 - 2 - 3 \equiv 0 \mod 4,\\
\zeta(\chi([\psi(e_7), \psi(e_8)])) &= \zeta\begin{pmatrix}
0 & 0 & 0 & 0 & 0 & 1\\
0 & 0 & 0 & 0 & 0 & -1\\
1 & 0 & 0 & 0 & 0 & 0\\
-1 & 0 & 0 & 0 & 0 & 0
\end{pmatrix} \equiv 2 - 2 + 3 - 3 \equiv 0 \mod 4.
\end{align*}
This exhausts all cases, so $\zeta(M) \equiv 0 \mod 4$ for all $M \in \Lambda$.

\end{document}